%% file: EC26paper.tex
\documentclass[format=acmsmall, review=false]{acmart}

\usepackage{acm-ec-26}
\usepackage{booktabs} 
\usepackage[ruled]{algorithm2e} 

\SetAlFnt{\small}
\SetAlCapFnt{\small}
\SetAlCapNameFnt{\small}
\SetAlCapHSkip{0pt}
\IncMargin{-\parindent}

\usepackage{enumitem} %
\usepackage{graphicx}
\usepackage{color}

\newtoggle{version}
\toggletrue{version}

\newtoggle{comments}
\toggletrue{comments}

\newtheorem*{theorem*}{Theorem}

\newtheorem{theorem}{\sc Theorem}
\newtheorem{definition}{\sc Definition}  [section]
\newtheorem{lemma}{\sc Lemma}[section]

\newtheorem{proposition}[lemma]{\sc Proposition}

\newtheorem{assumption}{\sc Assumption}

\SetKwInput{KwData}{Input}
\SetKwInput{KwResult}{Output}
\input{header}

\setcitestyle{authoryear}

\title{Ordinal Lindahl Equilibrium for Voting}

\author{Haoyu Song, Thanh Nguyen}

\begin{abstract}

The core is a central concept in multi-winner social choice, ensuring that no coalition of voters can support an alternative outcome whose size or cost exceeds the group's share of the electorate. This idea originates from the Lindahl equilibrium in classical public goods theory. Yet Lindahl equilibria may fail to exist when voters have ordinal preferences over a finite set of outcomes and monetary transfers are not allowed. We introduce \emph{Lindahl Equilibrium with Ordinal Preferences (LEO)}, extending the equilibrium framework to discrete collective choice. Using LEO, we construct randomized outcomes that satisfy (approximate) core constraints for a probabilistic set of voters, while ensuring that each voter is represented with high probability. We also provide a deterministic approximate core guarantee with a factor of 6.24, improving on the previous bound of 32. In structured environments, these outcomes can be computed efficiently. Overall, our results extend classical equilibrium concepts, providing a normative foundation for  proportional representation and practical algorithms for applications in voting and fair machine learning.

\end{abstract}

\begin{document}

\begin{titlepage}

\maketitle

\vspace{1cm}
\setcounter{tocdepth}{2} 
\tableofcontents

\end{titlepage}

\section{Introduction}
\input{newintro}

\section{Model}\label{sec:model}
\input{newmodel}

\section{Lindahl Equilibrium with Ordinal Preferences}\label{sec:lindahl}
\input{newLEO}

\section{From LEO to a Randomized Core}\label{sec:random}

\input{LEOtorandom}

\section{From Randomized to Deterministic  Core}\label{sec:main}
\label{sec:deterministic}
\input{new_deterministic}



\section{Efficient Computation of Core Allocations}\label{sec:computation}
\input{newcomputation}

\section{Applications and Conclusions}\label{sec:conclude}

\input{conclusion}

\newpage
\bibliographystyle{ACM-Reference-Format}
\bibliography{ref}

\appendix
\input{newappendix1}
\input{proofs}

\end{document}

%% file: header.tex
\usepackage{amsmath}
\usepackage{graphicx}
\usepackage{tabularx}
\usepackage{comment}
\usepackage{xcolor}
\usepackage{url}
\usepackage{tikz,pgfplots}
\usepackage{amssymb}
\usetikzlibrary{shapes.geometric, arrows}
\usepackage{thm-restate}
\usepackage{natbib}
\usepackage{float}
\usepackage{subcaption}
\usepackage{graphicx}
\usepackage[dvipsnames]{xcolor}

\newcolumntype{C}{>{\centering\arraybackslash}p{1em}}

\newcommand{\Xomit}[1]{}

\newcommand{\randbud}[1]{\mathcal{I}_{#1}}
\newcommand{\z}{{\bf z}}
\newcommand{\y}{{\bf y}}
\newcommand{\x}{{\bf x}}
\newcommand{\prices}{{\bf p}}

\usepackage{algorithmic}

\hypersetup{
     colorlinks=true,
     linkcolor=blue,
     filecolor=blue,
     citecolor = blue,      
     urlcolor=cyan,
     }

\def\final{0}  
\def\iflong{\iffalse}
\ifnum\final=0  
\newcommand{\thanh}[1]{{\color{blue}[{\small Thanh: \bf #1}]\marginpar{\color{red}*}}}
\newcommand{\ysl}[1]{{\color{brown}[{\small Young-San: \bf #1}]\marginpar{\color{red}*}}}
\newcommand{\haoyu}[1]{{\color{olive}[{\small Haoyu: \bf #1}]\marginpar{\color{red}*}}}
\else 
\newcommand{\thanh}[1]{}
\newcommand{\ysl}[1]{}
\newcommand{\haoyu}[1]{}
\fi  

%% file: newintro.tex
Multiwinner selection is a voting problem in which a group of agents report their preferences over alternatives, and the goal is to select a collective outcome that represents these agents fairly. Two prominent examples are committee selection and participatory budgeting (PB). In PB, each project has a cost, and voters express ordinal preferences over combinations of projects; the selected subset must fit within the budget while respecting voter preferences. Committee selection is a special case where all projects have equal cost (typically normalized to~1) and the budget corresponds to the size of the committee. Informally, proportional representation requires that no group of voters can support an alternative outcome whose size (or cost) exceeds their share of the electorate. This principle is captured by the \emph{proportional core} \citep{EJR2017,fain2016core}, which is motivated by the \emph{Lindahl equilibrium} and the core of economies with divisible public goods.


The traditional Lindahl equilibrium provides a canonical benchmark for public good provision by replacing a uniform price with personalized prices tailored to individual participants. At a Lindahl equilibrium, each agent chooses their preferred level of the public good given their personalized price and income, and equilibrium requires that all agents demand the same outcome. This coincidence of demand is coordinated by a fictitious central agent that selects the outcome to maximize total revenue---the sum of payments collected from all agents---relative to the cost of provision. By aligning payments with individual valuations, the Lindahl framework offers an appealing model of efficient and equitable public good provision. In particular, \citet{Lindahl70} shows that Lindahl equilibria lie in the core.

Despite its conceptual appeal, neither Lindahl equilibria nor the proportional core generally exist in voting environments. The difficulty stems from the inherently non-convex nature of these settings: feasible outcomes are discrete, and voters typically have ordinal preferences over alternatives or subsets of alternatives. As a result, the Lindahl framework does not directly extend to many collective decision-making problems of practical interest.

Prior work has therefore primarily focused on computationally tractable rules and on establishing fairness guarantees within this class. Such approaches often relax proportional-core requirements, restrict the domain of voter preferences, or combine both strategies. This \emph{bottom-up} perspective prioritizes implementable rules and tailors fairness notions to the practical constraints of these mechanisms.

In contrast, we adopt a \emph{top-down} approach. Rather than being constrained by computational considerations, we take the equilibrium solution as the normative benchmark. We then extend it to ordinal preference settings to construct outcomes that approximately satisfy core constraints. This approach provides an economic framework for achieving fair and proportional outcomes, even in discrete combinatorial settings. 

Specifically, we consider a general model of social choice, which we call \emph{budgeted social choice}, in which each outcome has an associated cost. The goal is to select an outcome whose cost does not exceed a given budget parameter and that (approximately) satisfies the core constraint. We assume a subadditive merge operation that combines two outcomes into a new, improved outcome with subadditive cost.
This framework generalizes both committee selection and participatory budgeting, where outcomes are subsets of candidates or projects and the merge operation corresponds to set union. It also captures emerging settings such as generative social choice \cite{FishGolzProcacciaRusakShapiraWuthrich2024}, motivated by AI and large language models (LLMs), in which outcomes are textual statements, the cost reflects their length (or complexity), and merging corresponds to combining or refining statements.

We make four main contributions.

First, we introduce the \emph{Lindahl Equilibrium with Ordinal Preferences (LEO)}, a modification of the classical Lindahl framework designed for environments with ordinal preferences. The modification has two components. Voters' incomes are drawn from continuous random distributions, which smooths aggregate demand. In addition, the standard requirement that individual demands exactly coincide with the central allocation is relaxed: agreement is enforced only for components with strictly positive prices. These changes allow us to establish existence using a fixed-point argument while retaining the efficiency and fairness properties associated with Lindahl equilibria. Our first result shows that a LEO exists whenever the income distribution has a continuous cumulative distribution function.

Second, we introduce the notion of a \emph{randomized core outcome}, parameterized by two quantities. A \((\lambda, \gamma)\)-randomized core is a lottery over outcomes in which, for each realization, only a subset of voters is represented. The \emph{$\gamma$-core constraint} requires that no group within this subset can deviate to an alternative outcome whose cost is less than $\gamma$ times their share of the electorate. Across the lottery, each voter belongs to a represented group with probability at least $\lambda$. In this sense, $\gamma$ measures the strength of the core constraint for included voters, while $\lambda$ measures the overall coverage across the electorate.

Using LEO, we show that for every $\alpha \ge 0$, there exists a $(1-e^{-\alpha}, \alpha+1)$-randomized core solution, making explicit the tradeoff between coverage (the probability that a voter is represented) and the strength of the core constraint.

It is known that a deterministic 2‑$\epsilon$-approximate core cannot be guaranteed for all voters \citep{ApproStable}. As an implication, setting $\alpha = 1$ yields a single outcome that satisfies the 2-approximate core for at least 63\% of voters. While this is already notable, such a deterministic outcome may leave the remaining 37\% of voters unrepresented. For $\alpha = 3$, coverage improves further: each voter is represented with probability at least $1 - 1/e^3 \approx 95\%$, while the outcome satisfies a 4-approximate core constraint. But importantly, randomization strengthens the guarantee for all voters: by constructing a lottery over outcomes, \emph{every voter} is represented with probability at least 63\%, avoiding systematic exclusion. In this way, the randomized core provides proportional representation across the entire electorate that no single deterministic outcome can achieve.

Third, we show how to convert the randomized core into a \emph{deterministic core solution}, which includes all voters. The resulting outcome achieves a $6.24$-approximation, a substantial improvement over the previous $32$-approximation in \cite{ApproStable}.

Lastly, regarding computation, our approach can be adapted for algorithmic purposes in certain scenarios. The key observation is that the equilibrium conditions underlying the construction of a randomized core can be expressed as a set of inequalities associated with the LEO. Under a specific income distribution—namely, the uniform distribution—these inequalities are linear. Consequently, the existence of a LEO guarantees that the corresponding linear program has a solution, which can be computed efficiently when the set of outcomes \(\mathcal{C}\) for which core constraints are enforced is of polynomial size. This allows us to construct both randomized and deterministic outcomes satisfying these constraints in time polynomial in \(|\mathcal{C}|\). In particular, our algorithm achieves a 11.6 approximation, improving upon the 32-approximation obtained through \cite{ChengJiangMunagalaWang2019,ApproStable}. The results also have practical implications, such as ranking ballot and applications in fair machine learning.

The paper is organized as follows. After reviewing related work, Section~\ref{sec:model} introduces the model and  solution concepts. Section~\ref{sec:lindahl} presents the Lindahl equilibrium and extends it to ordinal preferences through the introduction of LEO. Section~\ref{sec:random} discusses the construction of the randomized core, while Section~\ref{sec:main} presents methods for constructing an approximate deterministic core solution using LEO. Section~\ref{sec:computation} addresses the computational implementation  and applications of both the randomized and deterministic core. Section~\ref{sec:conclude} presents additional applications and concludes. All omitted proofs are provided in the Appendix.

\subsection*{Related Work}
This paper contributes to the literature on multiwinner selection in social choice, which has motivated several notions of fair representation. The strongest of these is the \emph{core}, requiring that no large enough subgroup of voters can block an outcome in favor of an alternative they collectively prefer. While the core provides proportional guarantees for all subgroups, it may fail to exist in many voting environments due to the discrete and combinatorial nature of feasible outcomes \citep{fain2018fair}. In contrast, \emph{justified representation} (JR) and its variants—such as EJR \citep{EJR2017}, EJR+ \citep{EJR+2023}, PJR \citep{additive21,PJR2022}, and FJR \citep{additive21}—ensure representation only for certain ``cohesive'' subgroups, and a comprehensive overview of the related notions can be found in the excellent surveys of \cite{pb_survey2020,pb_survey2023}. JR-based notions are often computationally tractable and natural for approval-based or additive-preference settings, but they offer weaker guarantees than the core and do not capture complementarities among candidates or sets, which can be expressed in ordinal rankings over committees. 
The conceptual appeal of the core, along with its frequent nonexistence under discrete, ordinal preferences, motivates the study of randomized and approximate core solutions.

\paragraph{\textit{Randomized Core}} Randomized core solutions have been well studied in multi-winner settings. For instance, \cite{aziz2023bestwine, aziz2024fair} consider committee voting and participatory budgeting, focusing on \emph{Justified Representation} rather than the core.  More closely related to our notion of a randomized core is the concept of a \emph{stable lottery} introduced by \cite{ChengJiangMunagalaWang2019}, with \cite{ApproStable} later establishing the existence of 2-stable lottery. A key difference is that in \cite{ChengJiangMunagalaWang2019}, stability is defined \emph{in expectation}: for any group, the expected fraction of members who prefer an alternative outcome is bounded, but there is no guarantee for individual realizations. As a result, some voters may be underrepresented in all realizations. By contrast, our $(\lambda, \gamma)$-randomized core enforces the core constraint \emph{within each realization}, with varying subgroups across the lottery, ensuring that representation is spread across the entire electorate.

\paragraph{\textit{Deterministic Approximate Core.}}
For deterministic approximate core solutions, the best known result for general monotone preferences is the $32$-approximation of \cite{ApproStable}, which relies on stable lotteries as described above. By contrast, we use the $(\lambda, \gamma)$-randomized core, which provides greater control over the parameters and can yield better approximations for ex-post deterministic solutions.

In specialized settings, additional structure on preferences allows for stronger guarantees or polynomial-time algorithms. The structure most closely related to our work is ranking ballots, where voters provide a ranking over candidates rather than subsets of candidates. For committee selection (where each candidate has unit cost) under ranking ballots, the first approximation guarantee for the core was $16$ \citep{ChengJiangMunagalaWang2019, ApproStable}, later improved to $9.82$ \citep{six}, while a lower bound of $2$ is known.  When candidates have heterogeneous costs, the best known existence bound is $32$ \citep{ApproStable}, identical to the general case with monotone preferences. Regarding computation, \cite{ApproStable} also implies  a polynomial-time algorithm for computing a 32-approximate core solution, which our method improves to 11.6.


Another important structure is approval ballots in committee selection, where voters provide a list of approved candidates. The existence of an exact core solution in this setting remains an open and important question.  \cite{ChengJiangMunagalaWang2019} and \cite{ApproStable} together established the existence of a 16-approximate core solution, and \cite{gao2025computation} improved the approximation ratio to 3.65. There are also stronger results in restricted or related settings. For example, when the committee size is small (fewer than 8), \cite{few_seats25} shows that an exact core solution exists.


\paragraph{\textit{Alternative Approximate Core Concepts.}}
Other works have proposed relaxed notions of the approximate core, such as the $(\alpha, \beta)$-core \citep{fain2018fair, peters2020proportionality, munagala2022approximate, multi23}, which permits blocking coalitions to achieve $\alpha$-fold utility gains and to exceed standard core constraints by a factor of $\beta$. These methods rely on cardinal valuations and weaken individual rationality (IR). In contrast, our approach preserves ordinal preferences and strictly enforces IR.

\paragraph{\textit{Lindahl equilibrium approach}} 
The Lindahl equilibrium is a classical concept in public goods and market design with externalities, originating from early economic theory \citep{foley1966resource,Lindahl70,rader1973economic} and more recently developed in both economics \citep{gul2020lindahl} and computer science \citep{fain2016core,kroer2025computing,gao2025computation}. Existing approaches, however, rely on cardinal preferences, since establishing the existence of a Lindahl equilibrium typically requires continuity, which in turn depends on cardinal utility. In contrast, we are the first to formalize this framework using only ordinal preferences. Follow-up work such as \cite{SongNguyenLin2026} apply this concept to study the Condorcet-winning set and its generalizations.

%% file: newmodel.tex
\subsection{Budgeted Social Choice}
For full generality and ease of notation, we use a more abstract model. There is a finite set of social outcomes $\mathcal{O}$, with each outcome $o\in \mathcal{O}$ having a cost $c(o)\geq 0$. We assume that $\mathcal{O}$ contains the outcome $\perp$ with $c({\perp}) = 0$, corresponding to the outside option. We also assume $\perp$ is the only outcome with 0 cost. And we assume that the smallest cost of any outcome other than $\perp$ is 1.

Given a budget $B$, a \emph{feasible outcome} is an outcome in $\mathcal{O}$ with a cost of at most $B$.  A lottery over $\mathcal{O}$ is a discrete random variable that takes values in $\mathcal{O}$. It can be represented as a vector in $[0,1]^{|\mathcal{O}|}$ whose coordinates sum to 1. The set of all such lotteries is denoted by $\Delta(\mathcal{O})$. We use a tilde over a capital letter, such as $\widetilde{O}$, to denote an element of $\Delta(\mathcal{O})$. We also refer to a lottery over \( \mathcal{O} \) as a probabilistic or randomized outcome.

Let \(N\) denote the set of agents (voters), with \(|N| = n\). Each agent \(i\) has a strict preference order \(\succeq_i\) over \(\mathcal{O}\), with \(\perp\) as the least-preferred outcome.


The tuple \((\mathcal{O}, c(\cdot), \{\succeq_i\}_{i=1}^n),B)\) is called an instance of \emph{budgeted social choice}.

We make the following assumption about the outcome space and the preferences of voters.

\begin{assumption}[Sub-additive Merging Operation]\label{assu:merging}
There exists a merging operation $\oplus$ s.t. for any  two outcomes $o,o' \in \mathcal{O}$, 
it satisfies
$c(o\oplus o')\leq c(o)+c(o')$ and $o\oplus o'\succeq_i o,  o'$ for every  voter $i$.  
\end{assumption}

Participatory budgeting with monotone  preferences (and, as a special case, committee selection) fits naturally within our framework. In this setting, outcomes are subsets of projects (or candidates), and monotonicity implies that agents weakly prefer supersets to subsets. The cost of an outcome is the total cost of the projects included in the set (or, in the case of committee selection, the cardinality of the set), and merging outcomes corresponds to taking unions.


Another application arises in machine learning, particularly in clustering and classification. In this context, an outcome is a set of central points or labels used to organize data, and each data point has preferences over outcomes based on how well the outcome represents it. Merge operations can correspond to the union or combination of labels; for example, ``long fruit'' $\oplus$ ``yellow fruit'' could be combined into ``banana''. 

A related application has emerged in AI under the framework of \emph{generative social choice} \citep{FishGolzProcacciaRusakShapiraWuthrich2024}, where outcomes are textual statements generated by a system. Each statement incurs a cost reflecting generation resources (e.g., number of tokens), and merging corresponds to combining two statements into one. Subadditive costs are a natural assumption in this setting, since shared structure or content can reduce the total generation cost.

\subsection{Solution Concepts}
\begin{definition}[Core]
    Given an instance of  budgeted social choice  \((\mathcal{O}, c(\cdot), \{\succeq_i\}_{i=1}^n),B)\), a feasible  outcome $o^*$, i.e, $c(o^*)\le B$, lies in the core if, for every  outcome $o\in \mathcal{O}$, the number of agents who prefer $o$ to $o^*$ is at most $\frac{c(o)}{B}n$. That is, 
    $
    |\{i\in N : o^* \prec_i o\}| \leq \frac{c(o)}{B}n.
    $
\end{definition}

Given that the core can be empty, we will adopt the following  notion of approximate core.

\begin{definition}[$\gamma$-Core]\label{def:approximate}
    Given an instance of  budgeted social choice  \((\mathcal{O}, c(\cdot), \{\succeq_i\}_{i=1}^n),B)\),   a feasible outcome $o^*$ is in  \underline{$\gamma$-approximate core} if for every  outcome $o\in \mathcal{O}$ with $c(o)\le B$, 
    $
    |\{i\in N : o^* \prec_i o\}| \leq  \gamma\cdot \frac{c(o)}{  B}\cdot n.
    $
\end{definition}

Another useful way to define an approximate core is the following.

\begin{definition}[$\gamma$-Partial Core]\label{def:approximate1}
Given an instance of budgeted social choice \((\mathcal{O}, c(\cdot), \{\succeq_i\}_{i=1}^n, B)\), a pair \((o^*, V)\) consisting of a feasible outcome $o^*$ and a subset of voters $V \subseteq N$ is in the \underline{$\gamma$-partial core} if, for every outcome $o \in \mathcal{O}$ with $c(o)\le B$,
\[
|\{i \in V : o^* \prec_i o\}| \leq \gamma \cdot \frac{c(o)}{B} \cdot n.
\]
Given a voter $i$, we say that $i$ \underline{is represented} by the partial $\gamma$-core solution $(o^*, V)$ if $i \in V$.
\end{definition}

Compared with the $\gamma$-core, the partial $\gamma$-core guarantees the $\gamma$-core condition only for a subset of voters. Intuitively, a voter is represented if they belong to a group that cannot jointly deviate to an alternative outcome whose cost is proportional to the group’s size. Thus, voters in $V$ are protected against such proportional deviations, and therefore the outcome $o^*$ ``represents'' their preferences, while voters outside $V$ receive no such guarantee.

Consequently, voters outside $V$ may unanimously prefer an alternative outcome whose cost lies well within their proportional share of the electorate. This limitation motivates the following randomized solution, which mitigates this potential misrepresentation by spreading representation guarantees across voters.

\begin{definition}[$(\lambda,\gamma)$- Randomized Core]\label{def:approximate2}
Given an instance of budgeted social choice \((\mathcal{O}, c(\cdot), \{\succeq_i\}_{i=1}^n, B)\), a distribution over partial $\gamma$-core solutions is a \underline{$(\lambda,\gamma)$-randomized core} solution if, for every voter $i \in N$, the probability that $i$ is represented by the realized $\gamma$-partial  core solution is at least $\lambda$.
\end{definition}

The  $(\lambda,\gamma)$-randomized core provides representation guarantees on average across outcomes. Any single partial $\gamma$-core solution may represent only a subset of voters, but the lottery ensures that each voter is represented with probability at least $\lambda$. In other words, while a voter may not be included in every outcome, they have a fair chance of being represented over the randomization. 

The parameter $\gamma$ controls how strong the representation is for voters who are included—it limits how large and influential a group can be while still preferring an alternative outcome. The parameter $\lambda$ measures how widely this representation is spread across the population.
Unlike the deterministic $\gamma$-core, the randomized framework allows a natural tradeoff between these two dimensions of fairness. Stronger representation for included voters (smaller $\gamma$) may require focusing on fewer voters, leading to a smaller $\lambda$. Conversely, representing more voters (larger $\lambda$) may require relaxing the strength of representation (larger $\gamma$). This tradeoff captures the balance between the intensity and the breadth of fairness, and we study which combinations of $(\lambda,\gamma)$ can be achieved. 

\Xomit{
In particular, in Section~\ref{sec:random}, we prove the following.

\begin{theorem*}
Let an instance of budgeted social choice \((\mathcal{O}, c(\cdot), \{\succeq_i\}_{i=1}^n, B)\) be given. Then, for any $\alpha\geq 0$, there exists a $(1-e^{-\alpha},\alpha+1)$-randomized  core solution. 
\end{theorem*}

For example, with $\alpha=1$, we obtain a randomized core that satisfies the 2-approximate core constraints, in which each voter is represented with probability at least $1 - \frac{1}{e} \approx 0.63$.

We provide further interpretation of this result in Section~\ref{sec:random} and use it to establish the following result in Section~\ref{sec:main}.

\begin{theorem*}
Given an instance of budgeted social choice \((\mathcal{O}, c(\cdot), \{\succeq_i\}_{i=1}^n, B)\), there exists a $6.24$-approximate core solution.
\end{theorem*}
}

%% file: newLEO.tex
This section introduces our main technical tool.  We adapt the traditional Lindahl equilibrium  framework  to the setting with ordinal preferences, yielding a fractional solution.

\subsection{Classical Lindahl equilibrium}
To better understand our new notion of Lindahl equilibrium with Ordinal preferences, and how it helps us to construct randomized and approximate  core solutions,  we first revisit the classical version in a convex economy with a continuous set of alternatives. While there are several equivalent formulations, we adopt the following for simplicity: every $\x \in \mathbb{R}_+^m$ is assumed to be a feasible alternative. Each agent $i$ has a \emph{strictly increasing and concave} utility function $u_i(\x)$ and is endowed with a fixed amount of token income, normalized to $1$. Given a personalized price vector $\mathbf{p}_i \in \mathbb{R}_+^m$, the agent chooses a bundle from their demand set:
$
D_i(\mathbf{p}_i) = \arg\max_{\x \in \mathbb{R}_+^m, \, (\mathbf{p}_i)^\top \x \leq 1} u_i(\x).
$

A centralized producer, indexed by $0$, has a strictly increasing \emph{linear} cost function $c(\x)$  and chooses an allocation $\x \in \mathbb{R}_+^m$ to maximize total revenue subject to a budget constraint $B>0$. The producer’s choice comes from the set:
$
D_0(\mathbf{p}_1, \ldots, \mathbf{p}_n) := \arg\max_{\x \geq 0} \left( \sum_{i=1}^n \mathbf{p}_i \right)^\top \x \quad \text{s.t.} \quad c(\x) = B.
$

In a Lindahl equilibrium, all agents and the producer agree on a common outcome $\x^*$, with personalized prices ensuring individual optimality and feasibility.

\begin{definition}[Lindahl Equilibrium]
A \emph{Lindahl equilibrium} consists of an allocation $\x^* \in \mathbb{R}_+^m$ and personalized price vectors $(\mathbf{p}_1, \ldots, \mathbf{p}_n) \in (\mathbb{R}_+^m)^n$ such that:
\begin{enumerate}
    \item \textbf{Individual optimality:} For each agent $i \in \{1, \ldots, n\}$, \;\;
    $
    \x^* \in D_i(\mathbf{p}_i).
    $
    \item \textbf{Producer optimality:} The producer chooses 
    $
    \x^* \in D_0(\mathbf{p}^1, \ldots, \mathbf{p}^n).
    $
\end{enumerate}
\end{definition}

\begin{theorem*}[\cite{Lindahl70}] \label{theo:Lindahl}
A Lindahl equilibrium exists, and the equilibrium allocation $\x^*$ lies in the core. That is, for any alternative $\x$,
\[
|\{ i : u_i(\x) > u_i(\x^*) \}| < c(\x) \cdot \frac{n}{B}.
\]
\end{theorem*}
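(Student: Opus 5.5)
Existence and the core property can be handled separately. Existence is the classical fixed-point argument, and I only sketch it. The core property is the substantive part, and I derive it directly from the two equilibrium conditions.

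\emph{Existence.} Normalize prices to the simplex-like set $P=\{(\mathbf{p}_1,\dots,\mathbf{p}_n)\in(\mathbb{R}_+^m)^n : c(\mathbf{x}^*)\text{-consistent}\}$. Since $c$ is linear and strictly increasing, write $c(\x)=\mathbf{c}^\top\x$ with $\mathbf{c}>0$; revenue maximization then forces $\sum_i \mathbf{p}_i\le \lambda\mathbf{c}$ componentwise.

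Define a correspondence on a compact convex set of price profiles and candidate allocations $\x\in\{\mathbf{c}^\top\x\le B'\}$, where $B'$ is large. It has three parts:
\begin{itemize}
\item each agent's demand given $\mathbf{p}_i$;
\item the producer's supply given $\sum_i\mathbf{p}_i$;
\item a price-adjustment map that raises $\mathbf{p}_i$ on goods where agent $i$'s demand exceeds the producer's supply, and lowers it otherwise.
\end{itemize}
Concavity and continuity of $u_i$ give convex-valued, upper hemicontinuous demand (Berge's maximum theorem). Kakutani then yields a fixed point. A standard check using strict monotonicity (local nonsatiation, so budgets bind) shows that a fixed point has every agent demanding exactly the producer's bundle. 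I would cite \cite{foley1966resource} for this routine part rather than redo it.

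\emph{Core property.} Let $(\x^*,\mathbf{p}_1,\dots,\mathbf{p}_n)$ be an equilibrium and fix any alternative $\x$. Let $S=\{i:u_i(\x)>u_i(\x^*)\}$.

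\textbf{Step 1 (individual optimality).} Since $\x^*\in D_i(\mathbf{p}_i)$, every $i\in S$ cannot afford $\x$, so $\mathbf{p}_i^\top\x>1$.

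\textbf{Step 2 (budgets bind).} Strict monotonicity of $u_i$ means $\mathbf{p}_i^\top\x^*=1$ for every $i$. Summing over all agents gives total revenue $\big(\sum_i\mathbf{p}_i\big)^\top\x^*=n$ at cost $B$.

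\textbf{Step 3 (producer optimality).} Since $c$ is linear, the producer's problem over $\{\x\ge0: \mathbf{c}^\top\x=B\}$ is a linear program. Optimality of $\x^*$ implies $\sum_i\mathbf{p}_i\le \frac{n}{B}\mathbf{c}$ componentwise. To see this, suppose some good $j$ had $\big(\sum_i p_{ij}\big)/c_j>n/B$. Then the bundle spending all of $B$ on good $j$ would earn revenue above $n$, contradicting optimality. Hence, for every $\x\ge0$,
\[
\Big(\sum_i\mathbf{p}_i\Big)^\top\x\le \frac{n}{B}c(\x).
\]

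\textbf{Step 4 (combine).} Using Step 1, then nonnegativity of prices, then Step 3:
\[
|S| < \sum_{i\in S}\mathbf{p}_i^\top\x \le \Big(\sum_{i}\mathbf{p}_i\Big)^\top\x\le c(\x)\frac{n}{B}.
\]
The first inequality is strict whenever $S\neq\emptyset$. If $S=\emptyset$, the claimed bound holds as long as $c(\x)>0$. For $\x=0$, monotonicity of $u_i$ makes $S$ empty, so that case is degenerate and I would treat it separately.

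The main obstacle is the existence step: building the correspondence so that the fixed point really yields exact agreement of all demands with the producer's choice, including the boundary behaviour when some prices are zero. The core inequality itself is the short chain in Step 4.
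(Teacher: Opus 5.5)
Your proposal follows the paper's own sketch essentially step for step: existence is deferred to the classical fixed-point argument, and the core bound is the same chain (each $i\in S$ finds $\x$ unaffordable; budgets bind, so revenue at $\x^*$ is $n$; producer optimality under linear cost gives $\bigl(\sum_i \mathbf{p}_i\bigr)^\top \x \le c(\x)\, n/B$; then sum over $S$). Your remark about $S=\emptyset$ catches a point the paper's sketch glosses over, but at $\x=0$ the strict inequality actually fails ($0<0$) and cannot be handled separately, so the bound there holds only in its non-strict form.
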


\begin{proof}[Sketch]
Existence follows from a standard fixed-point argument: the demand correspondence is continuous in prices, utilities are concave, and costs are linear.

Let $\x^*$ be a Lindahl equilibrium. Since utilities are strictly increasing, each agent exhausts their income, so $\mathbf{p}_i \cdot \x^* = 1$ for all $i$, and thus $\sum_{i=1}^n \mathbf{p}_i \cdot \x^* = n$.

Linearity of costs implies that the producer maximizes the profit-to-cost ratio, yielding, for any alternative $\x$,
\[
\sum_{i=1}^n \mathbf{p}_i \cdot \x \le c(\x)\cdot \frac{n}{B}.
\]

Let $I=\{i : u_i(\x)>u_i(\x^*)\}$. By individual optimality, any $i\in I$ must satisfy $\mathbf{p}_i\cdot \x>1$. Hence,
\[
|I|<\sum_{i\in I}\mathbf{p}_i\cdot \x \le \sum_{i=1}^n \mathbf{p}_i\cdot \x \le c(\x)\cdot \frac{n}{B},
\]
which proves the claim.
\end{proof}

The argument relies on three structural properties of the Lindahl framework. First, continuity of demand ensures equilibrium existence via a fixed-point argument. Second, strictly increasing utilities imply that agents exhaust their income at equilibrium, aligning aggregate spending with the producer’s revenue. Third, linear costs imply that the centralized planner maximizes the profit-to-cost ratio, which yields the core guarantee. 
In our setting, with a discrete set of outcomes and ordinal preferences, these properties generally fail, creating the main challenges in extending the Lindahl equilibrium. The next section introduces an ordinal analogue that overcomes these difficulties.

\subsection{Lindahl Equilibrium with Ordinal Preference (LEO)}

Adapting the classical Lindahl equilibrium concept to our setting—with ordinal preferences and a discrete set of outcomes—requires expanding the outcome space from discrete to continuous. A natural starting point is to consider lotteries over outcomes, represented by $\Delta(\mathcal{O})$. However, as we will show below, this is insufficient for adapting the proof above. Instead, we work with outcomes represented by vectors $\y \in \mathbb{R}_+^{|\mathcal{O}\setminus\{\perp\}|}$, which can be interpreted as fractional allocations over the discrete alternatives.

Even with this relaxation of the outcome space, significant challenges remain in formulating a Lindahl-type equilibrium relying only on ordinal preferences. We address each challenge as follows:
\textbf{Lack of continuity:} In classical settings, continuous demand enables fixed-point arguments. In our  discrete setting with ordinal preferences, this continuity fails. To address this, we adopt the \emph{random income} method by assuming incomes are drawn from a continuous distribution \(\mathcal{I}\) on \([0,1]\).\footnote{\cite{randomincome24} also use this method in assignment problems. Our approach applies to any bounded, continuous distribution; we use \([0,1]\) for convenience.} For each income realization, agents choose their most preferred affordable outcome, with \(\perp\) (priced at zero) ensuring nonempty demand. This induces a random demand that varies continuously with prices, restoring the needed continuity.

\textbf{Satiated utilities:} In classical models, strictly increasing utilities ensure that all agents fully exhaust their income and consume the same allocation, which also coincides with the producer's output. This fails in our setting because utilities saturate over a finite outcome set. Instead, we adopt a condition akin to competitive equilibrium: each agent’s randomized demand \( \x_i \in \mathbb{R}_+^{|\mathcal{O}|} \) must not exceed the producer’s allocation \( \mathbf{y} \) coordinate-wise. If \( x_{i,o} < y_o \) for some outcome \( o \), then \( p_{i,o} = 0 \). This weaker condition allows for heterogeneous demands across agents, which is necessary to support a fixed-point argument, and still suffices to ensure core-like guarantees.

\textbf{Revenue maximization and the core:} Classical core arguments rely on the producer maximizing the revenue-to-cost ratio. But restricting outcomes to lotteries over \( \mathcal{O} \) (i.e., \( \sum_{o} y_o = 1 \)) blocks this approach. To recover it, we relax the outcome space to \( \mathbf{y} \in \mathbb{R}_+^{|\mathcal{O}|} \) and let the producer maximize revenue subject to \( \sum_{o\in \mathcal{O}\setminus \{\perp\}} y_o c(o) = B \). This retains the revenue-to-cost ratio logic but yields a fractional outcome, which must be rounded to a valid lottery. We use dependent rounding and this is where we incur a constant-factor approximation loss.

These modifications naturally lead to a new solution concept, which we call \emph{Lindahl equilibrium with Ordinal Preferences (LEO)} and formally define as follows.

To begin with, each voter $i$ is endowed with a \emph{random income} $\randbud{}$, supported on $[0,1]$ (the same for all voters, though this assumption can be easily relaxed) and has a  \textit{personalized} price vector $\prices_i \in R_{+}^{|\mathcal{O}|}$. We use $p_{i,o}$ to denote the personalized price of $o\in \mathcal{O}$ for an agent $i$  with $p_{i,\perp} = 0$.

Given a fixed deterministic income, the demand of a voter is her most preferred affordable outcome, that is $D_i(\prices_i,b):=\max_{\succ_i} \{ o \in \mathcal{O} : p_{i,o} \leq b \}$. Under a random income distribution \( \randbud{} \), voter \( i \)'s \textit{random demand} is the distribution over such choices across income realizations:
\[
\mathcal{D}_i(\prices_i, \randbud{}) := \left\{ \max_{\succ_i} \{ o \in \mathcal{O} : p_{i,o} \leq b \} \mid b \sim \randbud{} \right\}.
\]

Under the assumption that $\mathcal{I}$ is supported on $[0, 1]$ and $p_{i,\perp} = 0$, $\perp$ is affordable for every possible realization of the random income and therefore $\mathcal{D}_i$ is guaranteed to be a valid lottery over $\mathcal{O}$.

\paragraph{\bf {Example.}}
Suppose $a \succ_i b \succ_i c \succ_i \perp$, with prices $p_{i,a}=1/2$, $p_{i,b}=p_{i,c}=1/3$, and $p_{i,\perp} = 0$ and let the random income be $\randbud{}\sim \mathbf{U}[0,1]$. 
The agent consumes $a$ whenever the realized income is at least $1/2$, consumes $b$ whenever the realized income lies in $[1/3,1/2)$, and never consumes $c$. 
Thus, the resulting random demand is
\[
\mathcal{D}_i(\prices_i, \randbud{})= \frac{1}{2}a \;+\; \frac{1}{6}b \;+\; \frac{1}{3}\perp.
\]

The following continuity property of the random demand is crucial for establishing the existence of a LEO (see Appendix~\ref{app:conti}).

\begin{lemma}\label{lemma:conti} 
For every agent $i$ and outcome $o$, let $x_{io} = \Pr(\mathcal{D}_i(\prices_i, \randbud{}) = o)$. The vector $\x_i = \{x_{io}\}_{o \in \mathcal{O}}$ is continuous in $\prices_i$, provided that the cumulative distribution of $\randbud{}$ is continuous.
\end{lemma}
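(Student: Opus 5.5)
We prove the lemma by writing each coordinate $x_{io}$ as a difference of CDF values evaluated at functions of $\prices_i$ that depend continuously on $\prices_i$. Continuity then follows from continuity of the CDF.

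Let $F$ denote the cumulative distribution function of $\randbud{}$, assumed continuous. Order the outcomes by voter $i$'s strict preference as $o_1 \succ_i o_2 \succ_i \cdots \succ_i o_K = \perp$. For a realized income $b$, the demand $D_i(\prices_i,b)$ equals $o_k$ exactly when $p_{i,o_k}\le b$ and $p_{i,o_j}>b$ for every $j<k$. Define
\[
m_k(\prices_i) := \min_{j\le k} p_{i,o_j}, \qquad m_0 := +\infty .
\]
Then $\{b : D_i(\prices_i,b)\in\{o_1,\dots,o_k\}\} = \{b : b\ge m_k(\prices_i)\}$. Hence
\[
\Pr\bigl(\mathcal{D}_i(\prices_i,\randbud{})\in\{o_1,\ldots,o_k\}\bigr) = 1 - F\bigl(m_k(\prices_i)^-\bigr) = 1-F\bigl(m_k(\prices_i)\bigr),
\]
where the last equality uses that $F$ is continuous and therefore has no atoms. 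Taking differences gives
\[
x_{io_k} = F\bigl(m_{k-1}(\prices_i)\bigr) - F\bigl(m_k(\prices_i)\bigr),
\]
with the convention $F(+\infty)=1$. For $\perp$ the formula reduces to $x_{i\perp} = F(m_{K-1}) - F(0)$, which equals $F(m_{K-1})$ because the support is $[0,1]$ and there is no atom at $0$.

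It remains to observe two facts. First, each $m_k$ is a minimum of finitely many coordinate projections, so it is continuous in $\prices_i$. Second, $F$ is continuous on $\mathbb{R}$. The case $m_0=+\infty$ causes no trouble: $m_0$ is constant, and one may equivalently cap all prices at $1$, since $F$ is constant equal to $1$ on $[1,\infty)$. Compositions and differences of continuous functions are continuous, so every $x_{io}$ is continuous in $\prices_i$, and therefore so is the vector $\x_i$.

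The main obstacle is the treatment of ties and boundary events, namely incomes $b$ exactly equal to some price. At such points the demand switches discontinuously. A CDF with atoms would therefore make $x_{io}$ jump as prices cross an atom. This is exactly where continuity of $F$ is used: it makes the events $\{b=m_k\}$ null, so the choice between $F(m_k)$ and $F(m_k^-)$ is immaterial. With this settled, the rest of the argument is routine.
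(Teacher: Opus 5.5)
Your proof is correct and follows essentially the same route as the paper: both write each $x_{io}$ in closed form as the CDF evaluated at (minima of) prices, and conclude from continuity of $F$ and of finite minima. Your telescoped expression $F(m_{k-1})-F(m_k)$ equals the paper's $\min_{o'\succ_i o}\bigl(F(p_{i,o'})-F(p_{i,o})\bigr)^+$ by monotonicity of $F$. Your explicit treatment of the boundary events $\{b=m_k\}$, including the absence of an atom at $0$, is more careful than the paper's, which relies on the same atomlessness without saying so.
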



Similar to the standard Lindahl case, the producer aims to produce a fractional central allocation of outcomes maximizing the total profits given individual price vectors.  In particular, given a total budget constraint $B$, the demand of the producer is 
$$
\mathcal{D}_0(\prices_1,\prices_2,...,\prices_n) = \arg\max_{\z \in \mathbb{R}_+^{|\mathcal{O}\setminus \{\perp\}|}}  \sum_{o\in \mathcal{O}\setminus \{\perp\}}(\sum_{i=1}^n p_{i,o})\cdot  z_o \quad \text{s.t.} \quad \sum_{o\in \mathcal{O}\setminus \{\perp\}}c(o)z_o = B.
$$

Note that in this formulation, the producer only produces outcomes other than \( \perp \). This is purely without loss of generality: allowing production of \( \perp \) does not affect the producer's decision, since \( p_{i,\perp} = 0 \) and \( c(\perp) = 0 \).

Now we introduce the following definition of Lindahl equilibrium with ordinal preference (LEO).

\begin{definition} \label{def:LEO}
Given an instance of  budgeted social choice  \((\mathcal{O}, c(\cdot), \{\succeq_i\}_{i=1}^n),B)\), and a random income $\randbud{}$ supported on $[0,1]$, the Lindahl equilibrium with ordinal preference (LEO) consists of individual consumptions $\{\x_{i}\}_{i=1}^n \in [0,1]^{|\mathcal{O}|}$,  a common allocation $\y \in [0,B]^{|\mathcal{O}\setminus \{{\perp}\}|}$ and personalized prices $\prices_{i=1}^n \in [0,1]^{|\mathcal{O}|}$ with  $p_{i,\perp} = 0 $ such that  

\begin{enumerate} 

\item \label{leo-1} for any $i\in N$ and $o\in \mathcal{O}$, $x_{i,o}=  \Pr(\mathcal{D}_{i}(\prices_i, \randbud{}) =o)$, 

\item \label{leo-2} for any $i\in N$ and $o\in \mathcal{O} \setminus\{\perp\}$, 
$y_o \geq x_{i,o}$  with strict inequality only when $p_{i,o} = 0$, 
 
\item \label{leo-3} $y\in \mathcal{D}_0(\prices_1,\prices_2,...,\prices_n)$.

\end{enumerate}
\end{definition}



\begin{figure}[h!]
    \centering
    \includegraphics[width=0.75\linewidth]{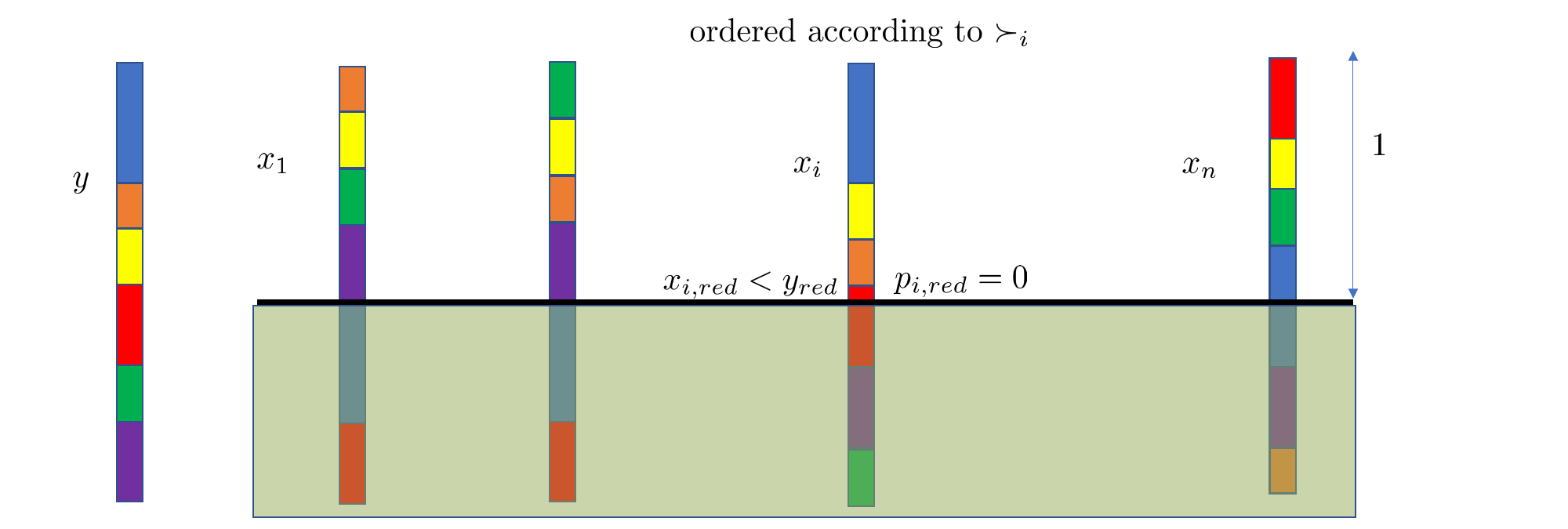}
    \caption{An Illustration of LEO}
    \label{fig:LEO}
\end{figure}
Figure~\ref{fig:LEO} illustrates a LEO. Each outcome is represented by a color, and the length of each colored interval corresponds to the allocated value $y$. For each agent $i$, the outcomes are arranged vertically in decreasing order of preference $\succ_i$. By construction of the random demand, more preferred outcomes receive higher prices, so the individual prices of these outcomes  are sorted in decreasing order along the vertical axis. Since each agent consumes a lottery, we have $\sum_{o\in \mathcal{O}} x_{io}=1$. Thus, $x_i$ specifies the portion of each multicolored bar above the horizontal line. Condition~\ref{leo-2} of Definition~\ref{def:LEO} implies that whenever the price is positive, the allocations $x$ and $y$ coincide. Consequently, any outcome that lies at or below the intersection point of $x$ and the horizontal line  has zero price.

{As discussed at the beginning of this section, the existence of LEO relies on two key observations. First, random demand is continuous whenever the income distribution has a continuous cumulative function. Second, because satiation complicates the usual fixed-point argument, we impose only a weaker requirement on individual consumption (condition~\ref{leo-2}) instead of full equality with the common allocation.} These observations allow us to establish the existence of a LEO from the standard Kakutani fixed-point theorem; the proof is given in Appendix~\ref{app:LEO}.


\begin{theorem}\label{thm:existence} 
Given an instance of  budgeted social choice  \((\mathcal{O}, c(\cdot), \{\succeq_i\}_{i=1}^n),B)\), if the cumulative distribution function \(F_{\randbud{}} : [0,1] \rightarrow [0,1]\) of \(\randbud{}\) is continuous, then a LEO exists.
\end{theorem}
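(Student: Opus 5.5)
We will obtain a LEO as a fixed point of a Kakutani correspondence on a compact convex set built from the price vectors and the producer's allocation. Let $P=\{\prices_i\in[0,1]^{|\mathcal{O}|}: p_{i,\perp}=0\}$ for each $i$, and let $Y=\{\y\in\mathbb{R}_+^{|\mathcal{O}\setminus\{\perp\}|}:\sum_o c(o)y_o=B\}$. Because every non-$\perp$ outcome costs at least $1$, $Y\subseteq[0,B]^{|\mathcal{O}\setminus\{\perp\}|}$ is a nonempty compact convex polytope. The domain is $X=P^n\times Y$. The map $(\prices_1,\dots,\prices_n)\mapsto \mathcal{D}_0(\prices_1,\dots,\prices_n)$ is the argmax of a linear objective over the fixed polytope $Y$. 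By Berge's maximum theorem it is therefore upper hemicontinuous with nonempty compact convex values. By Lemma~\ref{lemma:conti}, $\x_i(\prices_i)$ is a continuous function whenever $F_{\randbud{}}$ is continuous.

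The price-adjustment map should be a Walrasian-style "excess supply lowers prices" rule. Given $(\prices,\y)$, compute $\x_i=\x_i(\prices_i)$ and, for each $o\neq\perp$, set
\[
p'_{i,o}=\min\Bigl\{1,\max\bigl\{0,\,p_{i,o}+x_{i,o}-y_o\bigr\}\Bigr\}.
\]
Thus the price rises when $i$ over-demands $o$ relative to supply and falls when supply exceeds demand. The correspondence is $\Phi(\prices,\y)=(\prices',\mathcal{D}_0(\prices))$. It is convex-valued and has closed graph, since it is continuous in the price part and upper hemicontinuous in the $\y$ part. Kakutani therefore yields a fixed point $(\prices^*,\y^*)$, and condition~\ref{leo-1} and condition~\ref{leo-3} hold by construction.

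The main obstacle is extracting condition~\ref{leo-2} from the fixed point. Take a coordinate $(i,o)$.
\begin{itemize}
\item If $0<p^*_{i,o}<1$, the fixed-point equation forces $x_{i,o}=y_o$.
\item If $p^*_{i,o}=0$, it forces $x_{i,o}\le y_o$, which is allowed with strict inequality.
\item If $p^*_{i,o}=1$, it gives only $x_{i,o}\ge y_o$, and this is the problematic case.
\end{itemize}
To rule out the last case we use the support of $\randbud{}$. At price $1$, outcome $o$ is affordable only on the event $\{b\ge 1\}$, which has probability $1-F(1^-)=0$ because $F$ is continuous. Hence $x_{i,o}=0$, so $y_o\le 0=x_{i,o}$, and $y_o=0=x_{i,o}$ is consistent.

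Some care is needed for the coordinates with $x_{i,o}>y_o$ and positive price. At the fixed point this can only occur at the cap, and the argument above shows it yields equality. We still need to check that $\y^*$ is compatible, i.e., that $\y^*$ lies in $[0,B]$; this holds because $c(o)\ge1$. If the clipping at $1$ causes trouble, the fallback is to enlarge the price cube to $[0,2]$ and then note that any price above $1$ gives zero demand. In that case the price can be projected back to $1$ without changing demand, after which the equalities follow.

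Finally, we should confirm that $\mathcal{D}_0$ is well defined even when all aggregate prices are zero. In that case every $\y\in Y$ is optimal, so the argmax is still nonempty and convex. This completes the verification of all three conditions of Definition~\ref{def:LEO}.
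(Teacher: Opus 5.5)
Your proposal is correct and follows essentially the same route as the paper: a Kakutani fixed point of the map combining the random demand (continuous by Lemma~\ref{lemma:conti}), the producer's argmax correspondence (upper hemicontinuous by Berge's theorem), and the clipped price update $p_{i,o}+x_{i,o}-y_o$. The boundary case $p_{i,o}=1$ is then disposed of via continuity of $F_{\randbud{}}$ at $1$. The only differences are cosmetic: you drop $\x$ from the fixed-point domain by treating it as a continuous function of $\prices$, and your fallback of enlarging the price cube to $[0,2]$ is unnecessary.
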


Next, we introduce an additional parameter \( \alpha > 0 \) into the LEO framework. The only modification is in the second condition of LEO. The existence of an \(\alpha\)-LEO follows directly from Theorem~\ref{thm:existence}. This parameter will later be selected to optimize the approximation ratio in our final construction.

\begin{definition} \label{def:alpha-gamma-LEO}
Given an instance of  budgeted social choice  \((\mathcal{O}, c(\cdot), \{\succeq_i\}_{i=1}^n),B)\), a random income $\randbud{}$ supported on $[0,1]$ and $\alpha>0$, an $\alpha$-LEO consists of individual consumptions $\x_{i=1}^n \in [0,1]^{|\mathcal{O}|}$, a common allocation $\y \in [0,B]_+^{|\mathcal{O}\setminus\{\perp\}|}$ \footnote{Recall that we assume $c(o)\geq 1$ for any $o\neq \perp$.}, and prices $\prices_{i=1}^n \in [0,1]^{|\mathcal{O}|}$ with  $p_{i,\perp} = 0$, such that

\begin{enumerate} 
\item \label{alphaleo-1} for any $i\in N$ and $o\in \mathcal{O}$, $x_{i,o}=  \Pr(\mathcal{D}_{i}(\prices_i, \randbud{}) =o)$,

\item \label{alphaleo-2} for any $i\in N$ and $o\in \mathcal{O}\setminus \{\perp\}$, 
$y_o \geq  \alpha \cdot  x_{i,o}$ with strict inequality only when $p_{i,o} = 0$;

\item \label{alphaleo-3} $\y\in \mathcal{D}_0(\prices_1,\prices_2,...,\prices_n)$.
\end{enumerate}
\end{definition}

\begin{theorem}\label{theo:alphaLEO}
Given an instance of  budgeted social choice  \((\mathcal{O}, c(\cdot), \{\succeq_i\}_{i=1}^n),B)\) and  $\alpha >0$, if the cumulative distribution function \(F_{\randbud{}} : [0,1] \rightarrow [0,1]\) of \(\randbud{}\) is continuous, then an $\alpha$-LEO exists.
\end{theorem}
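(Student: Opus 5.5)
We plan to derive Theorem~\ref{theo:alphaLEO} from Theorem~\ref{thm:existence} by rescaling the budget. Consider the auxiliary instance $(\mathcal{O}, c(\cdot), \{\succeq_i\}_{i=1}^n, B/\alpha)$: same outcomes, costs and preferences, budget $B' = B/\alpha$, and the same income distribution $\randbud{}$ with continuous CDF. Theorem~\ref{thm:existence} gives a LEO $(\{\x_i\}, \y', \{\prices_i\})$ for this instance. We then set $\y := \alpha \y'$ and keep the same $\x_i$ and $\prices_i$, and check that $(\{\x_i\}, \y, \{\prices_i\})$ is an $\alpha$-LEO for the original instance with budget $B$.

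Condition~\ref{alphaleo-1} is unchanged, since the demands $\x_i$ depend only on $\prices_i$ and $\randbud{}$, not on the budget. For condition~\ref{alphaleo-2}, condition~\ref{leo-2} of the auxiliary LEO gives $y'_o \ge x_{i,o}$, with strict inequality only when $p_{i,o}=0$. Multiplying by $\alpha>0$ gives $y_o \ge \alpha x_{i,o}$, and strictness is preserved in both directions, so the complementary-slackness clause carries over exactly.

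For condition~\ref{alphaleo-3}, the producer's problem is linear and positively homogeneous in the budget. The map $\z \mapsto \alpha\z$ is a bijection between the feasible sets $\{\z\ge 0 : \sum_o c(o) z_o = B/\alpha\}$ and $\{\z \ge 0 : \sum_o c(o) z_o = B\}$. It scales the objective $\sum_o (\sum_i p_{i,o}) z_o$ by $\alpha$, so it maps maximizers to maximizers. Hence $\y' \in \mathcal{D}_0$ for budget $B/\alpha$ implies $\y \in \mathcal{D}_0$ for budget $B$.

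The only step needing care is the range requirement $\y \in [0,B]^{|\mathcal{O}\setminus\{\perp\}|}$. Feasibility gives $\sum_o c(o) y_o = B$, and $c(o)\ge 1$ for $o\neq\perp$ (the footnote in Definition~\ref{def:alpha-gamma-LEO}), so each $y_o \le B$ automatically. The prices stay in $[0,1]$ with $p_{i,\perp}=0$ as before.

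If instead the proof of Theorem~\ref{thm:existence} had built in the range $\y'\in[0,B']$ in a way that does not transfer, the fallback is to rerun the Kakutani argument of Appendix~\ref{app:LEO} directly. The price-adjustment map would be modified to compare $y_o$ with $\alpha x_{i,o}$ rather than $x_{i,o}$. I expect the rescaling argument to suffice, so this is not a real obstacle.
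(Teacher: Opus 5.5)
Your proposal is correct and is essentially the paper's proof. The paper also takes a LEO for budget $B/\alpha$ with the same income distribution, keeps the consumptions and prices, scales the allocation to $\alpha\mathbf{y}$, and verifies the three conditions as you do. Your extra range check $\alpha\mathbf{y}\in[0,B]$ is fine, and it also follows directly from $\mathbf{y}\in[0,B/\alpha]$.
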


\begin{proof}  

The theorem easily follows from Theorem \ref{thm:existence}. Let a budget $B$, a random income $\mathcal{I}$ supported on $[0,1]$, and $\alpha>0$ be given. Then let $(\x_{i=1}^n,\y,\prices_{i=1}^n)$ be the LEO under budget $\frac{B}{\alpha}$ and random income $\mathcal{I}$. We claim $(\x_{i=1}^n,\alpha\cdot \y, \prices_{i=1}^n)$ is the desired $\alpha$-LEO. It is immediately clear that $(\x_{i=1}^n,\alpha\cdot \y,\prices_{i=1}^n)$ satisfies parts \ref{alphaleo-1} and \ref{alphaleo-2} of Definition \ref{def:alpha-gamma-LEO} because $(\x_{i=1}^n,\y, \prices_{i=1}^n)$ satisfies parts \ref{leo-1} and \ref{leo-2} of Definition \ref{def:LEO}. Furthermore, since $\y$ maximizes revenue given prices $\prices_{i=1}^n$ under budget $\frac{B}{\alpha}$, it follows that $\alpha\cdot \y$ maximizes revenue given prices $\prices_{i=1}^n$ under budget $B$. Therefore, part \ref{leo-3} of Definition \ref{def:alpha-gamma-LEO} is satisfied and this completes the proof. 
\end{proof}

%% file: LEOtorandom.tex

Our main theorem in this section is the following.
\Xomit{
\begin{theorem}\label{theo:random}
Given an instance of budgeted social choice \((\mathcal{O}, c(\cdot), \{\succeq_i\}_{i=1}^n, B)\), for any $\gamma \ge 0$, there exists a $(1-e^{-\gamma}, \gamma)$-randomized core solution such that the expected cost of the outcome is at most $B$, and each realized outcome has cost at most
\[
B+\max_{o: c(o)\le B} c(o) \le 2B .
\]
\end{theorem}
\thanh{maybe just keep this theorem as the main theorem in this section, get rid of Theorem~\ref{theo:random} }
If no budget violation is allowed at realizations, we scale down the budget \(B\) by a factor of \(\frac{\gamma}{\gamma+1}\) 
and, in the construction of LEO, consider only outcomes with cost at most \(B/(\gamma+1)\). This gives us:

}

\begin{theorem}\label{theo:random}
Consider an instance of budgeted social choice 
\((\mathcal{O}, c(\cdot), \{\succeq_i\}_{i=1}^n, B)\) satisfying Assumption~\ref{assu:merging}. 
Then, for any \(\alpha \ge 0\), there exists a 
\((1 - e^{-\alpha}, \alpha + 1)\)-randomized core.
\end{theorem}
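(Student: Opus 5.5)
The plan is to build the randomized core from an $\alpha$-LEO (Theorem~\ref{theo:alphaLEO}) in three steps: a revenue bound on prices, an independent rounding of $\y$, and a price threshold that decides who is represented. The case $\alpha=0$ is trivial (take $V=\emptyset$), so assume $\alpha>0$. I would use an income distribution $\randbud{}$ with continuous CDF but nearly all of its mass in $[1-\epsilon,1]$, and let $\epsilon\to 0$ at the end. Fix an $\alpha$-LEO $(\x_i,\y,\prices_i)$.

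\textbf{Step 1 (revenue bound).} For each $i$, condition~\ref{alphaleo-2} says $y_o=\alpha x_{i,o}$ whenever $p_{i,o}>0$. Hence
\[
\sum_i\sum_o p_{i,o}y_o=\alpha\sum_i\sum_o p_{i,o}x_{i,o}\le \alpha\sum_i \mathbb{E}[b]\le \alpha n,
\]
because a demanded outcome always costs at most the realized income. Producer optimality (condition~\ref{alphaleo-3}) means $\y$ maximizes the revenue-to-cost ratio. So for every outcome $o$ we get $\sum_i p_{i,o}\le \alpha n\, c(o)/B$.

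\textbf{Step 2 (ordinal monotonicity).} Call $o'$ \emph{active} for $i$ if $x_{i,o'}>0$. If $o\succ_i o'$ with $o'$ active, then $p_{i,o}>p_{i,o'}$; otherwise $o$ would be affordable whenever $o'$ is and would be demanded instead.

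\textbf{Step 3 (rounding and represented set).} Draw a random multiset of outcomes by including each $o\ne\perp$ according to a Poisson-type process with intensity $y_o$. Let $o^*$ be the $\oplus$-merge of all drawn outcomes. By Assumption~\ref{assu:merging}, $o^*\succeq_i$ every drawn outcome, and $\mathbb{E}[c(o^*)]\le\sum_o c(o)y_o=B$. Fix a threshold $\tau=1-\epsilon$ and put $V=\{i:\text{some drawn } o' \text{ is active for } i \text{ with } p_{i,o'}\ge\tau\}$.

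For the core check, let $S\subseteq V$ be the voters who prefer $o$ to $o^*$. Each $i\in S$ prefers $o$ to such an $o'$, so $p_{i,o}>\tau$ by Step 2. Then
\[
\tau|S|<\sum_i p_{i,o}\le \alpha n\,c(o)/B,
\]
which gives the $\alpha/\tau\approx\alpha$ factor.

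For coverage, the active outcomes of $i$ with price at least $\tau$ and positive price carry demand mass $\Pr(b\ge\tau)\approx 1$. Their $y$-mass is therefore about $\alpha$, so $i$ misses all of them with probability about $e^{-\alpha}$. This gives coverage $1-e^{-\alpha}$ up to terms that vanish as $\epsilon\to 0$. Since the set of relevant lotteries is compact, I would extract a limit to remove the $\epsilon$ and obtain the stated bound exactly.

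\textbf{Main obstacle: feasibility of each realization.} The Poisson merge only satisfies the budget in expectation, while every realization must be feasible. The extra ``$+1$'' in $\alpha+1$ should pay for this. My first attempt is dependent (systematic) rounding. Lay the outcomes consecutively on a line, giving outcome $o$ an interval of length $c(o)y_o$, for total length $B$. Draw one uniform offset and select the outcomes whose intervals are hit, arranging the scheme so each outcome is selected essentially independently with probability governed by $y_o$. If each realization overshoots by at most one extra outcome of cost at most $B$, I would instead run the LEO at a scaled-down budget and use the slack from Step 1 to absorb the loss, converting the bound $\alpha$ into $\alpha+1$. The delicate part is preserving the $1-e^{-\alpha}$ coverage, which relied on independence, after the dependent rounding and rescaling. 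If this fails, the fallback is to condition the Poisson process on a cost bound and re-verify the coverage estimate directly.
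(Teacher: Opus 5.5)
Your Steps~1--2 and the threshold-based blocking argument use the same ingredients as the paper (Proposition~\ref{prop:upper_bound_price} and the covering argument of Lemma~\ref{lemma:main}). However, the proposal has two genuine gaps.

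\textbf{The represented set uses the wrong threshold.} Your coverage claim is false, because outcomes demanded at incomes $b\ge\tau$ need not have price at least $\tau$. Take a single non-null outcome $a$ with $c(a)=1$ and $B\ge 2\alpha$. The producer must set $y_a=B>\alpha\ge\alpha x_{i,a}$, so condition~\ref{alphaleo-2} forces $p_{i,a}=0$ for every $i$. Each voter then demands $a$ with probability one at price $0<\tau$, so your $V$ is always empty.

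The fix is the paper's $\tau$-covering. Let $\overline{o}_{i,\tau}$ be $i$'s demand at income exactly $\tau$, and set $V=\{i: o^*\succeq_i\overline{o}_{i,\tau}\}$.
\begin{itemize}
\item Blocking still works: $o\succ_i o^*\succeq_i\overline{o}_{i,\tau}$ forces $p_{i,o}>\tau$ by the definition of $\overline{o}_{i,\tau}$, so your Step~2 is not needed.
\item Coverage now holds. If $\overline{o}_{i,\tau}=\perp$, voter $i$ is always in $V$. Otherwise, at every income $b\ge\tau$ the voter can afford $\overline{o}_{i,\tau}$ and so demands something weakly better. Hence the outcomes $\succeq_i\overline{o}_{i,\tau}$ carry demand mass at least $\Pr(\mathcal{I}\ge\tau)$.
\item Since $y_{o'}\ge\alpha x_{i,o'}$ holds for every $o'\ne\perp$ regardless of its price, these outcomes have $y$-mass at least $\alpha\Pr(\mathcal{I}\ge\tau)$.
\end{itemize}

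\textbf{Per-realization feasibility is left open.} Two ideas are missing.

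(i) You need a rounding of $z_o=\min\{1,y_o\}$ that preserves marginals, exceeds the budget by at most one selected item, and satisfies $\Pr(\bigwedge_{o\in S}Z_o=0)\le\prod_{o\in S}(1-z_o)$. This is Proposition~\ref{proposition:dep}, and it gives the miss bound $\prod(1-z_o)\le e^{-\alpha\Pr(\mathcal{I}\ge\tau)}$ without independence. Single-offset systematic rounding does not provide this:
\begin{itemize}
\item With heterogeneous costs, one grid spacing cannot produce marginals $y_o$ for all $o$.
\item Even with unit costs, outcomes occupying $[0.1,0.4)$ and $[1.1,1.4)$ are hit together. Both are then missed with probability $0.7>(0.7)^2$, so a voter relying on them is uncovered with probability $0.7>e^{-0.6}$.
\end{itemize}

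(ii) The ``$+1$'' comes from restricting the outcome space, not only the budget. Run the $\alpha$-LEO on $\mathcal{O}^*=\{o:c(o)\le B/(\alpha+1)\}$ with budget $B^*=\alpha B/(\alpha+1)$. Then:
\begin{itemize}
\item every realization costs at most $B^*+B/(\alpha+1)=B$;
\item for $o\in\mathcal{O}^*$, the blocking bound $\frac{\alpha}{\tau}\frac{c(o)}{B^*}n$ equals $\frac{\alpha+1}{\tau}\frac{c(o)}{B}n$;
\item outcomes outside $\mathcal{O}^*$ cannot block, because $(\alpha+1)c(o)n/B>n$.
\end{itemize}
Scaling the budget alone fails when some outcome costs nearly $B$, since then no positive $B^*$ satisfies $B^*+\max_o c(o)\le B$.

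With these repairs, your $\epsilon\to0$ compactness limit completes the proof. Alternatively, use the paper's choice $\mathcal{I}=\mathbf{U}[1-\varepsilon,1]$ and $\tau=1-\varepsilon$, which gives coverage exactly $1-e^{-\alpha}$, together with a floor argument that removes the factor $1/(1-\varepsilon)$.
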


Thus, setting $\alpha = 1$ corresponds to satisfying the 2-core constraint, which guarantees each voter a representation level of at least $1-\frac{1}{e} \approx 63\%$.  Increasing $\alpha$ relaxes the blocking power of coalitions while strengthening individual representation guarantees. For example, setting $\alpha = 3$ yields a $(1-e^{-3}, 4)$-randomized core solution, ensuring that each voter attains representation with probability at least $1-\frac{1}{e^{3}} \approx 95\%$.

This randomized core solution is constructed starting from a LEO allocation. Notice that the common allocation $y$ in an $\alpha$-LEO is a quantity vector over all outcomes, satisfying the feasibility constraint 
\(
\sum_{o\in \mathcal{O}} c(o) y_o = B.
\) 
It is possible that 
\(
\sum_{o\in \mathcal{O}} y_o > 1,
\) 
so $y$ does not necessarily correspond to a randomized outcome. To construct a randomized core,  by  \emph{merge operation} that combines coordinates of $y$ to obtain a lottery. We do so by well known dependent rounding. 

Specifically, from the common allocation $\y$ satisfying the feasibility constraint 
\(
\sum_{o\in \mathcal{O}} c(o) y_o = B.
\) 
we can construct a lottery over outcomes using dependent rounding. We use the following lemma.

\begin{proposition}\label{proposition:dep} \cite{dependent_general}
There exists a polynomial-time algorithm that, given a vector $\z \in [0,1]^m$ with $\sum_{k=1}^m a_k z_k \le B$, computes a distribution $\widetilde{Z}$ over $\{0,1\}^m$ satisfying the following properties.

\begin{itemize}
    \item preservation of marginals: $\underset{\bold{Z}\sim \widetilde{Z}}{E}[Z_k]= z_k$ for all $k\in [m]$;
    \item  preservation of weights up to one: with $\bold{Z}\sim \widetilde{Z}$, it always satisfies $\sum_{k=1}^m a_k\cdot Z_k \leq B + \max\{a_k:z_k>0\}$; 
\item negative dependence between entries: for any $S \subseteq [m]$ , 
$\underset{\bold{Z}\sim\widetilde{Z}}{\Pr}\big( \bigwedge_{k\in S}Z_k = 0 \big) \leq \underset{k\in S}{\prod}\big(1-z_k\big)$.
\end{itemize}
\end{proposition}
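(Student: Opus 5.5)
The plan is to prove Proposition~\ref{proposition:dep} with the standard pairwise (pipage-style) dependent rounding scheme of Srinivasan and Gandhi et al. We evolve a random vector $\z^{(t)}$ starting from $\z^{(0)}=\z$. In each round we keep every coordinate in $[0,1]$ and fix at least one new coordinate to $\{0,1\}$. Three invariants will give the three bullets. First, $\z^{(t)}$ is a martingale, which gives the marginals. Second, $\sum_k a_k z^{(t)}_k$ is unchanged in every round except possibly the last one. Third, for every fixed $S$ the quantity $\prod_{k\in S}(1-z^{(t)}_k)$ is a supermartingale. Since at most $m$ rounds are needed, the algorithm is polynomial.

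\emph{One round.} Suppose two coordinates $j\neq k$ are fractional. We move along the direction $(z_j,z_k)\mapsto(z_j+a_k s,\; z_k-a_j s)$, which leaves $a_jz_j+a_kz_k$ unchanged. If $a_j=a_k=0$, we use the direction $(+s,-s)$ instead. Let $s_+>0$ be the largest step and $-s_-<0$ the most negative step that keep both coordinates in $[0,1]$. At either endpoint, at least one of $z_j,z_k$ becomes integral. We take step $s_+$ with probability $s_-/(s_++s_-)$ and step $-s_-$ otherwise, so the expected step is $0$. Hence $\mathbb{E}[\z^{(t+1)}\mid \z^{(t)}]=\z^{(t)}$, and the weighted sum is preserved exactly. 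When only one coordinate $k$ remains fractional, we round it to $1$ with probability $z_k$ and to $0$ otherwise. This keeps its mean, and it raises the weighted sum by at most $a_k(1-z_k)\le \max\{a_k : z_k>0\}$. The final coordinate always has $z_k>0$ in the original vector, because coordinates that start at $0$ are never touched. Therefore $\sum_k a_kZ_k\le B+\max\{a_k:z_k>0\}$ always holds. Taking expectations over the martingale gives $\mathbb{E}[Z_k]=z_k$.

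\emph{Negative dependence.} Fix $S$ and consider $\Phi_t=\prod_{\ell\in S}(1-z^{(t)}_\ell)$. At termination, $\Phi$ equals the indicator of $\bigwedge_{k\in S}Z_k=0$. So it suffices to show $\mathbb{E}[\Phi_{t+1}\mid \z^{(t)}]\le\Phi_t$, since then $\Pr(\bigwedge_{k\in S} Z_k=0)\le \Phi_0=\prod_{k\in S}(1-z_k)$. In a round on $(j,k)$, the factors outside $\{j,k\}$ are unchanged. If at most one of $j,k$ lies in $S$, then $\Phi_{t+1}$ is affine in the step $s$, and its expectation equals $\Phi_t$. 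If both lie in $S$, the relevant factor is $(1-z_j-a_ks)(1-z_k+a_js)$. This is a quadratic in $s$ with leading coefficient $-a_ja_k\le 0$ (or $-1$ in the degenerate direction), so it is concave. Jensen's inequality with $\mathbb{E}[s]=0$ then gives expectation at most $\Phi_t$. The last single-coordinate rounding is again affine in the outcome, so it preserves $\Phi$ in expectation.

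The only step I expect to need care is this concavity check when both rounded coordinates lie in $S$. The sign of the quadratic term there must be nonpositive, and this holds precisely because the direction preserves the weighted sum with nonnegative weights. The remaining steps are routine bookkeeping: handling $a_j=0$, noting that coordinates already at $0$ or $1$ are never touched, and observing that there are at most $m$ rounds.
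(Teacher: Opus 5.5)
Your proof is correct. The paper does not prove this proposition itself but imports it from the cited dependent-rounding work, and your argument is essentially that standard proof: pairwise rounding along the weight-preserving direction $(a_k,-a_j)$, which gives martingale marginals, a supermartingale $\prod_{k\in S}(1-z_k)$ via concavity of the two-factor product, and a single leftover fractional coordinate (originally positive) that accounts for the additive $\max\{a_k : z_k>0\}$ term.
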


With Proposition \ref{proposition:dep} as a building block, we have the following algorithm turning a fractional solution into a valid lottery over outcomes. 

\smallskip
\begin{algorithm}[H]\label{alg:dependent-rounding}  
\caption{Dependent Rounding on Social Outcomes}
\KwData{A fractional $\y\in \mathbb{R}_{+}^{|\mathcal{O}|} $ s.t. $\sum_{o\in O}c(o)y_o\leq B$}
\KwResult{A lottery $\widetilde{L}(\y)$ over outcomes with the cost at most  $B + \max\{c(o):y_o>0\}$ }
For each $o \in \mathcal{O}$  set  $z_{o} := \min\{1, y_{o}\}$  \label{line1}\\
Generate a distribution $\widetilde{Z}$ from $z$ satisfying Proposition \ref{proposition:dep}\\
Sample $\mathbf{Z} \in \{0,1\}^{|\mathcal{O}|} \sim \widetilde{Z}$ \\
Output $O = \underset{o:Z_{o} = 1}{\bigoplus} o'$, where $\bigoplus$ denotes the merging operation for $\mathcal{O}$.
\end{algorithm}

\smallskip
To associate an outcome with a set of voters in order to form a partial core solution, we introduce the following  notion of $\tau$-\textit{covering}.


\begin{definition}
Let $\{\prices_i\}_{i=1}^n$ be a set of personalized vectors for voters. 
\begin{itemize}
    \item For voter $i$ and $\tau\in[0,1]$, the \emph{$\tau$-boundary outcome} $\overline{o}_{i,\tau}$ is her demand at income level of $\tau$ 
    \[
        \overline{o}_{i,\tau}
        \;:=\;
        \max_{\succ_i}\{\,o\in\mathcal{O} : p_{i,o}\le \tau\,\}.
    \]
    \item An outcome $o\in\mathcal{O}$ \emph{$\tau$-covers} voter $i$ if $o \succeq_i \overline{o}_{i,\tau}$.  Denote the set of voters that are $\tau$-covered by $o$ by $V^{\tau}(o,\prices)$.
    That is, $V^{\tau}(o,\prices):=\{i\in N: o \succeq_i \overline{o}_{i,\tau}\}$.
\end{itemize}
\end{definition}

Thus, for every outcome $o$ and cutoff $\tau$, the set of voters associated with $o$, denoted $V^{\tau}(o,\prices)$, consists of those who value $o$ sufficiently. In particular, when income is fixed at $\tau$, these voters either already demand $o$, or would demand it if its price were at most $\tau$.


\begin{definition}\label{def:dependent-lottery}  
Let $\y$ be an allocation with $\sum_{o \in C} c(o) y_o \le B$, and let $\widetilde{L}(\y)$ be the lottery over outcomes produced by Algorithm~\ref{alg:dependent-rounding}. We call $\widetilde{L}(\y)$ the \emph{Dependent-Lottery} induced by $\y$.

Given personalized prices $\prices_{i=1}^n$, associate each realization $o \sim \widetilde{L}(\y)$ with its $\tau$-covered set $V^\tau(o)$. The resulting joint distribution over outcome–voters pairs is denoted $\widetilde{L}^\tau(\y, \prices)$.
\end{definition}

The following result is the main technical result of this section, which we use to construct the randomized core.

\begin{lemma}\label{theo:maintech}
Given an instance of budgeted social choice satisfying Assumption~\ref{assu:merging}, let $(\x_{i=1}^n, \y, \prices_{i=1}^n)$ be an $\alpha$-LEO with random income $\randbud{}$. For any cutoff parameter $\tau \in (0,1]$, the lottery $\widetilde{L}^{\tau}(\y,\prices)$ defines a distribution over pairs $(o^*, V)$ consisting of an outcome and a set of voters, such that:

\begin{enumerate}[label=(\roman*)]
    \item For every voter $i$, the probability that $i \in V$ is at least
    \[
    \lambda = 1 - e^{-\alpha \cdot \Pr(\randbud{} \ge \tau)}.
    \]\label{cnd1}
    
    \item For every realization $(o^*,V)$ and any outcome $o \in \mathcal{O}$,
    \[
    \left| \{ i \in V : o^* \prec_i o \} \right|
    \le
    \gamma \cdot \frac{c(o)}{B} \cdot n,
    \qquad
    \text{where }
    \gamma = \frac{\alpha \cdot \mathbb{E}[\randbud{}]}{\tau}.
    \] \label{cnd2}
    
    \item For every realization $(o^*,V)$,
    \[
    c(o^*) \le B + \max_{o \in \mathcal{O}} c(o).
    \] \label{cnd3}
\end{enumerate}

\end{lemma}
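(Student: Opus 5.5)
We treat the three conditions separately, since (iii) is immediate and (i), (ii) use different LEO conditions. For (iii), Algorithm~\ref{alg:dependent-rounding} sets $z_o=\min\{1,y_o\}\le y_o$, so $\sum_o c(o)z_o\le \sum_o c(o)y_o = B$. Proposition~\ref{proposition:dep} then gives $\sum_o c(o)Z_o\le B+\max_o c(o)$. Subadditivity of $\oplus$ (Assumption~\ref{assu:merging}) bounds the merged outcome's cost by this sum.

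\emph{Coverage (i).} Fix voter $i$. Since $o^*\succeq_i o$ for every $o$ with $Z_o=1$ by the merging property, $i\in V^\tau(o^*)$ as soon as some $o$ with $Z_o=1$ satisfies $o\succeq_i \overline{o}_{i,\tau}$. Let $S_i=\{o\neq\perp: o\succeq_i\overline{o}_{i,\tau}\}$; if $\overline{o}_{i,\tau}=\perp$ the voter is always covered. The outcomes $i$ demands under income $b\ge\tau$ all lie in $S_i$, since her demand is monotone in income. Hence $\sum_{o\in S_i}x_{i,o}\ge \Pr(\randbud{}\ge\tau)$. By condition~\ref{alphaleo-2} of the $\alpha$-LEO, $y_o\ge \alpha x_{i,o}$, so $z_o\ge \min\{1,\alpha x_{i,o}\}$. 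Negative dependence in Proposition~\ref{proposition:dep} gives
\[
\Pr(\text{no } o\in S_i \text{ selected})\le\prod_{o\in S_i}(1-z_o)\le \exp\Big(-\sum_{o\in S_i} z_o\Big).
\]
If some $z_o=1$ the probability is $0$; otherwise the bound is at most $e^{-\alpha\Pr(\randbud{}\ge\tau)}$. This yields $\lambda$.

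\emph{Core constraint (ii).} This is the classical Lindahl revenue argument. Fix a realization $(o^*,V)$ and an outcome $o$. Take $i\in V$ with $o\succ_i o^*$. Then $o\succ_i o^*\succeq_i \overline{o}_{i,\tau}$, so $o$ was unaffordable at income $\tau$, i.e.\ $p_{i,o}>\tau$. Thus
\[
\tau\,|\{i\in V: o^*\prec_i o\}|<\sum_i p_{i,o}.
\]
For the upper bound, producer optimality (condition~\ref{alphaleo-3}) gives that every $o$ with $y_o>0$ maximizes the ratio $\sum_i p_{i,o}/c(o)$. Hence for all $o$,
\[
\frac{\sum_i p_{i,o}}{c(o)}\le \frac{\sum_i\sum_{o'}p_{i,o'}y_{o'}}{B}.
\]
Next we bound total revenue. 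By condition~\ref{alphaleo-2}, $p_{i,o'}y_{o'}=\alpha p_{i,o'}x_{i,o'}$, since a positive price forces equality. Also $\sum_{o'}p_{i,o'}x_{i,o'}$ is the expected price paid by $i$, which is at most $\mathbb{E}[\randbud{}]$ because the demanded outcome always costs at most the realized income. Total revenue is therefore at most $\alpha n\,\mathbb{E}[\randbud{}]$. Combining, $|\{i\in V:o^*\prec_i o\}|\le \frac{\alpha\mathbb{E}[\randbud{}]}{\tau}\cdot\frac{c(o)}{B}n$.

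The main subtlety is the coverage step. It requires verifying that monotonicity of demand places all demand at incomes $\ge\tau$ inside $S_i$, and handling the truncation $z_o=\min\{1,y_o\}$. The revenue identity, by contrast, hinges only on complementary slackness in condition~\ref{alphaleo-2}.
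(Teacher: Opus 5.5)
Your proposal is correct and follows the paper's argument essentially step for step. Coverage comes from $y_o\ge\alpha x_{i,o}$, summed over outcomes weakly preferred to $\overline{o}_{i,\tau}$, plus negative dependence; the core bound comes from $p_{i,o}>\tau$ together with the revenue-to-cost bound (your direct version of Proposition~\ref{prop:upper_bound_price}); and (iii) comes from the weight guarantee of dependent rounding combined with subadditive merging. The differences are cosmetic: the paper routes the argument through the inequalities \eqref{eq:000}, \eqref{eq:001}, \eqref{eq:003} (Lemmas~\ref{lemma:feasible} and~\ref{lemma:main}) so it can reuse it for LP solutions in Section~\ref{sec:computation}, and your strict inequality $\tau\,|\{i\in V: o^*\prec_i o\}|<\sum_i p_{i,o}$ should be non-strict, since it fails when that set is empty and all prices of $o$ are zero.
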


 Lemma~\ref{theo:maintech} shows that $\widetilde{L}^{\tau}(\y,\prices)$ is ``almost'' a $(\lambda,\gamma)$-randomized core. The only remaining issue is feasibility, as a realized outcome from this lottery may exceed the budget $B$. We next show that this can be addressed by a simple scaling argument, from which Lemma~\ref{theo:maintech} readily implies Theorem~\ref{theo:random}.

\begin{proof}[Proof of Theorem \ref{theo:random}]
Fix a sufficiently small $\varepsilon \in (0,1)$ such that, for every $o \in \mathcal{O}$,
\[
\left\lfloor (\alpha +1 )\cdot \frac{c(o)}{B} \cdot n \right\rfloor
=
\left\lfloor \frac{\alpha+1}{1-\varepsilon} \cdot \frac{c(o)}{B} \cdot n \right\rfloor,   \qquad \text{ and define } \tau := 1-\varepsilon.
\]

Consider the subset of outcomes and a corresponding budget:
\[
\mathcal{O}^* := \{ o \in \mathcal{O} : c(o) \le \tfrac{1}{\alpha+1} B \}, 
\quad 
B^* := \frac{\alpha}{\alpha+1} B.
\]

Let $(\x,\y,\prices)$ be an $\alpha$-LEO on $\mathcal{O}^*$ with budget $B^*$ and random income $\randbud{} = \mathbf{U}[1-\varepsilon,1]$. 
Thus, we scale down the budget and consider an $\alpha$-LEO on a subset of outcomes.
We show that $\widetilde{L}^{\tau}(\y,\prices)$ is a $(1-e^{-\alpha}, \alpha+1)$-randomized core.

First, for any outcome realization $o^*$, by condition \ref{cnd3} of Lemma~\ref{theo:maintech},
\[
c(o^*) \le B^* + \max_{o \in \mathcal{O}^*} c(o) \le \frac{\alpha}{\alpha+1} B + \frac{1}{\alpha+1} B = B.
\]

Second, since $\randbud{} = \mathbf{U}[1-\varepsilon,1]$, we have $\Pr(\randbud{} \ge 1-\varepsilon) = 1$. By condition \ref{cnd1}, each voter $i$ belongs to $V$ with probability at least
\[
\lambda = 1 - e^{-\alpha \cdot \Pr(\randbud{} \ge \tau)} = 1 - e^{-\alpha}.
\]

Third, by condition \ref{cnd2} and noting $\mathbb{E}[\randbud{}]\le 1$, for any realization $(o^*,V)$  of the lottery $\widetilde{L}^{\tau}(\y,\prices)$ and  every $o \in \mathcal{O}^*$,
\[
\left| \{ i \in V : o^* \prec_i o \} \right| 
\le \frac{\alpha \cdot \mathbb{E}[\randbud{}]}{1-\varepsilon} \cdot \frac{c(o)}{B^*} \cdot n
\le \frac{\alpha+1}{1-\varepsilon} \cdot \frac{c(o)}{B} \cdot n.
\]
On the other hand, for $o \notin \mathcal{O}^*$, $c(o) > \frac{1}{\alpha+1} B$, so $(\alpha+1)\frac{c(o)}{B} \, n \ge n$, and the core constraint is trivially satisfied.

Thus, these three conditions show that $\widetilde{L}^{\tau}(\y,\prices)$ is a 
\(
\Bigl(1-e^{-\alpha}, \frac{\alpha+1}{1-\varepsilon}\Bigr)\text{-randomized core solution.}
\)
To eliminate the extra $\varepsilon$ factor, observe that for any realized $(o^*,V)$ and alternative outcome $o$, our choice of $\varepsilon$ implies
\[
\bigl|\{i \in V : o^* \prec_i o\}\bigr|
\leq 
\left\lfloor \frac{\alpha+1}{1-\varepsilon} \cdot \frac{c(o')}{B} \cdot  n \right\rfloor
=
\left\lfloor (\alpha+1) \cdot \frac{c(o)}{B} \cdot n \right\rfloor
\leq 
(\alpha+1)  \cdot \frac{c(o)}{B} \cdot n.
\]
Therefore, $\widetilde{L}^{\tau}(\y,\prices)$ is an $(1-e^{-\gamma},\gamma)$ randomized core solution, as claimed.

\Xomit{
\smallskip

To prove Theorem \ref{theo:random_no_violation}, let
\[
\mathcal{O'} := \{ o \in \mathcal{O} : c(o) \le \tfrac{1}{\gamma+1} B \},
\quad 
B' := \frac{\gamma}{\gamma+1} B,
\]
and consider a \(\gamma\)-LEO defined on \(\mathcal{O'}\) with the budget $B'$. Then the cost of any realization satisfies
\[
c(\text{realization}) \le B' + \max_{o \in \mathcal{O'}} c(o) \le B.
\]

Moreover, for any \(o' \in \mathcal{O'}\), the core constraint with respect to \(B'\) holds:\footnote{Again we can choose $\epsilon$ small enough for the equality to hold.}
\[
\bigl|\{i \in V^{\tau}(o, \prices) : o \prec_i o'\}\bigr|
\le \left\lfloor \gamma n \cdot \frac{c(o')}{B'} \cdot \frac{1}{1-\varepsilon} \right\rfloor
= \left\lfloor (\gamma +1) \frac{c(o')}{B} \, n \right\rfloor
\le (\gamma+1) \frac{c(o')}{B} \, n.
\]
}

\end{proof}

It remains to prove  Lemma~\ref{theo:maintech}.
\subsection{Proof of Lemma~\ref{theo:maintech}}

We proceed via two structural lemmas. The first lemma establishes that for any $\alpha$-LEO, the common allocation $\y$ and prices $\prices$ satisfy a set of key parametrized inequalities (Lemma~\ref{lemma:feasible}). The second lemma shows that whenever $(\y,\prices)$ satisfies these inequalities, the associated $\widetilde{L}^\tau(\y,\prices)$ induces a randomized solution with appropriate parameters (Lemma~\ref{lemma:main}).

%

First, suppose $(\x_{i=1}^n,\y,\prices_{i=1}^n)$ is an $\alpha$-LEO with random income distribution $\randbud{}$. We derive the set of (in)equalities that $\y$ and $\prices$ must satisfy.

The first inequality represents the capacity constraint.
\begin{equation}\label{eq:000}
    \sum_{o\in \mathcal{O}} c(o)y_o=B.
\end{equation}

Second, for any outcome $o$, the total price is bounded by

\begin{equation}\label{eq:001}
\sum_{i\in N} p_{i,o}
\le 
\alpha\cdot \frac{c(o)}{B}\cdot n \cdot \mathbb{E}(\mathcal{I})
\qquad \forall o \in \mathcal{O} .
\end{equation}
This holds because, in an $\alpha$-LEO, the centralized agent selects a revenue-maximizing solution subject to  \eqref{eq:000}.
Consequently, at equilibrium, for every outcome $o$, the ratio between its revenue and cost is bounded above by the total revenue attainable by the centralized agent divided by $B$. Moreover, the total revenue attainable by the centralized agent is at most $n \cdot \alpha \cdot \mathbb{E}(\randbud{})$. A formal proof is provided in Proposition~\ref{prop:upper_bound_price} in the Appendix.

Third, for any cutoff $\tau \in [0,1]$ and any voter $i$, recall that the \emph{boundary outcome} $\overline{o}_{i,\tau}$ is the most-preferred outcome that voter $i$ can afford at a price of at most $\tau$. We provide a lower bound on the total probability mass assigned by $\y$ to outcomes that voter $i$ weakly prefers to this boundary outcome. In particular,

\begin{equation}\label{eq:002}
\forall \tau \in [0,1], i\in N: \qquad
\sum_{o' \in \mathcal{O} : o' \succeq_i \overline{o}_{i,\tau}} y_{o'}
\ge
\alpha \Pr(\randbud{} \ge \tau)
\qquad \text{for } \overline{o}_{i,\tau} \neq \perp. 
\end{equation}



Observe that when $\overline{o}_{i,\tau} \neq \perp$, any realization of income of at least $\tau$ allows voter $i$ to afford $\overline{o}_{i,\tau}$ and therefore to demand an outcome weakly preferred to it. Hence,
\[
\sum_{o' \in \mathcal{O}:\, o' \succeq_i \overline{o}_{i,\tau}} x_{i,o'}
\ge
\Pr(\randbud{} \ge \tau).
\]
Since $(\x,\y,\prices)$ is an $\alpha$-LEO, we have $y_{o'} \ge \alpha x_{i,o'}$ for every outcome $o\neq \perp$. Summing over all outcomes $o' \succeq_i \overline{o}_{i,\tau}$ gives
\[
\sum_{o' \in \mathcal{O}:\, o' \succeq_i \overline{o}_{i,\tau}} y_{o'}
\ge
\alpha
\sum_{o' \in \mathcal{O}:\, o' \succeq_i \overline{o}_{i,\tau}} x_{i,o'}
\ge
\alpha \cdot  \Pr(\randbud{} \ge \tau),
\]
which proves \eqref{eq:002}.

Notice that, to guarantee \eqref{eq:002} for all $\tau$, it suffices to verify the condition for $\tau$ equal to the price of each outcome, $\tau=p_{i,o}$ . Therefore, \eqref{eq:002} is equivalent to
\begin{equation}\label{eq:003}
\sum_{o' \in C : o' \succeq_i o} y_{o'}  
\ge
\alpha  \Pr(\randbud{} \ge \prices_{i,o}) \qquad \forall o \in \mathcal{O}\setminus \{\perp\}, \forall  i\in N .
\end{equation}


With this, we obtain the following lemma.
\begin{lemma}\label{lemma:feasible}
Any $\alpha$-LEO $(\x_{i=1}^n,\y,\prices_{i=1}^n)$ with income distribution $\randbud{}$  satisfies
\eqref{eq:000}, \eqref{eq:001}, and \eqref{eq:003}. 
\end{lemma}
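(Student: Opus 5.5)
The lemma bundles three separate conditions, and I will verify each directly from the definition of an $\alpha$-LEO (Definition~\ref{def:alpha-gamma-LEO}). The arguments for \eqref{eq:002} and its reformulation \eqref{eq:003} are already sketched in the text just before the statement. So the plan is to make those rigorous and then supply the one genuinely new piece, the price bound \eqref{eq:001}, which is the step I expect to be the main obstacle.

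\textbf{Capacity constraint \eqref{eq:000}.} This is immediate from condition~\ref{alphaleo-3}, since $\y\in\mathcal{D}_0(\prices_1,\dots,\prices_n)$ and the producer's program imposes $\sum_{o\neq\perp}c(o)y_o=B$. Because $c(\perp)=0$, adding the $\perp$ coordinate does not change the sum.

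\textbf{Covering inequality \eqref{eq:003}.} Fix $i$ and $o\neq\perp$, and set $\tau=p_{i,o}$.
\begin{itemize}
\item For every income realization $b\ge p_{i,o}$, the outcome $o$ is affordable, so the demand $D_i(\prices_i,b)$ satisfies $D_i(\prices_i,b)\succeq_i o$.
\item Hence $\sum_{o'\succeq_i o}x_{i,o'}\ge \Pr(\randbud{}\ge p_{i,o})$.
\item Every $o'\succeq_i o$ differs from $\perp$, because $\perp$ is the least preferred outcome. So condition~\ref{alphaleo-2} gives $y_{o'}\ge\alpha x_{i,o'}$ termwise.
\item Summing these termwise bounds yields \eqref{eq:003}.
\end{itemize}

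\textbf{Price bound \eqref{eq:001}.} I will proceed in three steps.

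\emph{Step 1: the revenue of $\y$.} Let $P_o=\sum_i p_{i,o}$ and $R=\sum_{o\neq\perp}P_o y_o$. The producer's program is a linear program over $\{\z\ge0:\sum c(o)z_o=B\}$. Its optimum is attained at a vertex, which puts mass $B/c(o)$ on a single outcome, so the optimal value is $B\max_o P_o/c(o)$. Since $\y$ is optimal, $P_o\le R\,c(o)/B$ for every $o$.

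\emph{Step 2: bounding $R$ voter by voter.} It remains to show $R\le \alpha n\,\mathbb{E}[\randbud{}]$, that is, $\sum_o p_{i,o}y_o\le\alpha\,\mathbb{E}[\randbud{}]$ for each voter $i$.
\begin{itemize}
\item By condition~\ref{alphaleo-2}, every $o$ with $p_{i,o}>0$ has $y_o=\alpha x_{i,o}$. Therefore $\sum_o p_{i,o}y_o=\alpha\sum_o p_{i,o}x_{i,o}$.
\item The quantity $\sum_o p_{i,o}x_{i,o}$ is the expected price of the outcome voter $i$ demands.
\item The demanded outcome is always affordable, so its price is at most the realized income $b$.
\item Taking expectations gives $\sum_o p_{i,o}x_{i,o}\le\mathbb{E}[\randbud{}]$.
\end{itemize}

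\emph{Step 3: combining.} Summing the bound from Step~2 over the $n$ voters gives $R\le\alpha n\,\mathbb{E}[\randbud{}]$. Substituting into $P_o\le R\,c(o)/B$ from Step~1 gives \eqref{eq:001}.

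The one point needing care is Step~1 when $P_o$ might be large but $y_o=0$. The vertex argument already handles this case, because optimality of $\y$ means no single outcome can achieve a larger revenue-to-cost ratio than $R/B$.
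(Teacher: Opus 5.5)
Your proof is correct and follows essentially the same route as the paper. You get \eqref{eq:000} from the producer's feasibility constraint. You get \eqref{eq:003} from affordability at incomes $b \ge p_{i,o}$ combined with condition~\ref{alphaleo-2}. You get \eqref{eq:001} from $\sum_o p_{i,o}y_o=\alpha\sum_o p_{i,o}x_{i,o}\le\alpha\,\mathbb{E}[\mathcal{I}]$ together with the producer maximizing the revenue-to-cost ratio, as in Proposition~\ref{prop:upper_bound_price}.

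The differences are cosmetic. You read off $P_o\le R\,c(o)/B$ directly from the vertex structure of the producer's LP rather than arguing by contradiction. You also prove \eqref{eq:003} directly at $\tau=p_{i,o}$ instead of passing through \eqref{eq:002}.
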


Next, we establish the following result, which, together with Lemma~\ref{lemma:feasible}, implies Lemma~\ref{theo:maintech}.
\begin{lemma}\label{lemma:main}
Fix a random distribution $\randbud{}$, and let $(\y, \prices)$ be any pair satisfying \eqref{eq:000}, \eqref{eq:001}, and \eqref{eq:003}. Then, for any $\tau>0$, the joint distribution over outcome–set pairs $\widetilde{L}^\tau(\y, \prices)$ produced by dependent rounding satisfies conditions \ref{cnd1}, \ref{cnd2}, and \ref{cnd3} of Lemma~\ref{theo:maintech}.
\end{lemma}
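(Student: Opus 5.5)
We verify the three conditions of Lemma~\ref{theo:maintech} separately, using only \eqref{eq:000}, \eqref{eq:001}, \eqref{eq:003} and the properties of dependent rounding in Proposition~\ref{proposition:dep}. Condition~\ref{cnd3} is immediate. The vector $z_o=\min\{1,y_o\}$ satisfies $\sum_o c(o)z_o\le \sum_o c(o)y_o = B$ by \eqref{eq:000}. Hence every realization $\mathbf{Z}$ has $\sum_o c(o)Z_o\le B+\max_o c(o)$. Sub-additivity of $\oplus$ (Assumption~\ref{assu:merging}) then gives $c(o^*)\le \sum_{o:Z_o=1}c(o)$. If no coordinate is selected, the output is $\perp$, which costs $0$.

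For condition~\ref{cnd1}, fix a voter $i$ and let $\bar o=\overline{o}_{i,\tau}$. If $\bar o=\perp$, every outcome $\tau$-covers $i$, so $i\in V$ with probability one. Otherwise, let $S_i=\{o: o\succeq_i \bar o\}$. Because merging weakly improves every voter's utility, $o^*\succeq_i o$ for every selected $o$. Thus $i\in V$ whenever some $o\in S_i$ has $Z_o=1$. Negative dependence gives
\[
\Pr(i\notin V)\le \prod_{o\in S_i}(1-z_o)\le \exp\Bigl(-\sum_{o\in S_i} z_o\Bigr).
\]
If some $y_o\ge 1$ with $o\in S_i$, then $z_o=1$ and the probability is $0$. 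Otherwise $z_o=y_o$ on $S_i$. Applying \eqref{eq:003} with $o=\bar o$, and using $p_{i,\bar o}\le\tau$ so that $\Pr(\randbud{}\ge p_{i,\bar o})\ge \Pr(\randbud{}\ge\tau)$, we get $\sum_{S_i}y_o\ge \alpha\Pr(\randbud{}\ge\tau)$. This yields $\lambda=1-e^{-\alpha\Pr(\randbud{}\ge\tau)}$.

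Condition~\ref{cnd2} is the main step; it is where the Lindahl price argument reappears. Fix a realization $(o^*,V)$ and an outcome $o$, and take $i\in V$ with $o\succ_i o^*$. Since $i\in V$, we have $o^*\succeq_i \overline{o}_{i,\tau}$, so $o\succ_i \overline{o}_{i,\tau}$. By definition $\overline{o}_{i,\tau}$ is $i$'s most preferred outcome with price at most $\tau$, so any strictly better outcome must satisfy $p_{i,o}>\tau$. Summing over such voters and applying \eqref{eq:001},
\[
\tau\,\bigl|\{i\in V: o^*\prec_i o\}\bigr| < \sum_{i\in N}p_{i,o}\le \alpha\,\frac{c(o)}{B}\,n\,\mathbb{E}[\randbud{}],
\]
which is exactly the bound with $\gamma=\alpha\mathbb{E}[\randbud{}]/\tau$. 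When the set is empty the bound holds trivially, and when $o=\perp$ no voter prefers it. The delicate point is the strictness $p_{i,o}>\tau$; it relies on the demand being defined as the preference-maximal affordable outcome with ties $p\le\tau$ counted as affordable, and this is precisely how $\overline{o}_{i,\tau}$ is defined.
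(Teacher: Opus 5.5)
Your proposal is correct and follows essentially the same route as the paper. Condition (ii) comes from $p_{i,o}>\tau$ for every represented deviator combined with \eqref{eq:001}, condition (i) from negative dependence over the outcomes that $i$ weakly prefers to $\overline{o}_{i,\tau}$, and condition (iii) from the weight guarantee of dependent rounding plus subadditivity of $\oplus$. Two small points are cleaner than the paper's treatment: you derive the coverage bound directly from \eqref{eq:003} at $o=\overline{o}_{i,\tau}$ using $p_{i,\overline{o}_{i,\tau}}\le\tau$, where the paper appeals to the equivalence of \eqref{eq:002} and \eqref{eq:003}, and you split explicitly on the case where some $y_o\ge 1$.
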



\begin{proof}

Given $(\y,\prices)$, the argument proceeds in two steps.

First, to prove \ref{cnd2},  we need to show that for every realization \(o^* \sim \widetilde{L}^\tau(\y,\prices)\) and every outcome \(o\),
\begin{equation}\label{eq:0007}
\bigl|\{ i \in V^\tau(o^*) : o \succ_i o^* \}\bigr|
\le 
\left\lfloor 
\alpha n \cdot \frac{c(o)}{B} \cdot \frac{\mathbb{E}(\randbud{})}{\tau} 
\right\rfloor .
\end{equation}

Let \(S := \{ i \in V^\tau(o^*) : o \succ_i o^* \}\).  
For each \(i \in S\), since \(i \in V^\tau(o^*)\), we have \(o^* \succeq_i \overline{o}_{i,\tau}\), and hence
\(o \succ_i \overline{o}_{i,\tau}\). By definition of \(\overline{o}_{i,\tau}\), this implies \(p_{i,o} > \tau\), and therefore
\[
\tau |S|
\le 
\sum_{i \in S} p_{i,o}
\le 
\sum_{i \in N} p_{i,o}
\le 
\alpha n \cdot \frac{c(o)}{B} \cdot \mathbb{E}(\randbud{}),
\]
where the last inequality follows from \eqref{eq:001}. Since \(|S|\) is integral, \eqref{eq:0007} follows.

Second, to prove \ref{cnd1},  we show that for any voter \(i\), the probability that \(i\) is not \(\tau\)-covered by a random outcome drawn from \(\widetilde{L}^\tau(\y,\prices)\) is small. In particular,
\begin{equation}\label{eq:005}
\Pr_{o^* \sim \widetilde{L}(\y,\prices)} \bigl( i \notin V^{\tau}(o^*,\prices) \bigr) 
\le 
e^{- \alpha \cdot \Pr(\randbud{} \ge \tau)}. 
\end{equation}

To see this, recall that 
\[
V^{\tau}(o^*,\prices) := \{ i \in N : o^* \succeq_i \overline{o}_{i,\tau} \}.
\]
If \(\overline{o}_{i,\tau} = \perp\), then \(i \in V^\tau(o^*)\) for every \(o^* \in \mathcal{O}\) as $\perp$ is the least-preferred outcome, and therefore \eqref{eq:005} holds trivially.  Thus, assume \(\overline{o}_{i,\tau} \neq \perp\).  
Since \((\y,\prices)\) satisfies \eqref{eq:003}, which is equivalent to \eqref{eq:002}, define the set
\[
\mathcal{O}_i^\tau := \{ o \in \mathcal{O} : o \succeq_i \overline{o}_{i,\tau} \}.
\]
Then \eqref{eq:002} can be written as
\[
\sum_{o \in \mathcal{O}_i^\tau} y_o \ge \alpha \cdot \Pr(\randbud{} \ge \tau).
\]

Next, observe that the event \(i \notin V^{\tau}(o^*,\prices)\) implies that none of the outcomes in \(\mathcal{O}_i^\tau\) is merged into \(o^*\) during the dependent rounding procedure. This further implies that \(y_o \le 1\) for all \(o \in \mathcal{O}_i^\tau\); otherwise, the rounding procedure would select \(o\) and merge it with probability one.

Applying negative dependence (Proposition~\ref{proposition:dep}), the probability that none of the outcomes in \(\mathcal{O}_i^\tau\) is merged into the final outcome $o^*$ is at most
\[
\prod_{o \in \mathcal{O}_i^\tau} (1-y_o)
\le 
\prod_{o \in \mathcal{O}_i^\tau} e^{-y_o}
\le 
e^{-\sum_{o \in \mathcal{O}_i^\tau} y_o}
\le 
e^{-\alpha \cdot \Pr(\randbud{} \ge \tau)},
\]

Finally, to prove \ref{cnd3}, observe that dependent rounding guarantees that the budget violation does not exceed the cost of the most expensive outcome, which completes the proof.

\end{proof}

%% file: new_deterministic.tex
The following theorem is the main result of this section.  

\begin{theorem}\label{thm:deterministic}
Given an instance of  budgeted social choice  \((\mathcal{O}, c(\cdot), \mathcal{C}, \succ, B)\) under Assumption~\ref{assu:merging}, there exists a \(6.24\)-approximate core solution.
\end{theorem}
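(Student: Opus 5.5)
The plan is to build the deterministic outcome by applying Theorem~\ref{theo:random} repeatedly, each time only to the voters who are still unrepresented and with a geometrically shrinking budget. The final outcome will be the merge of the per-stage outcomes.

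\textbf{Construction.} Fix parameters $\alpha>0$ and $r\in(e^{-\alpha},1)$, to be optimized at the end. Set $N_0:=N$ and $B_k:=(1-r)r^kB$, so that $\sum_k B_k\le B$. At stage $k$, consider the sub-instance with voter set $N_k$, $n_k:=|N_k|$, and budget $B_k$, and apply Theorem~\ref{theo:random} to it.
\begin{itemize}
\item This gives a $(1-e^{-\alpha},\alpha+1)$-randomized core for the sub-instance.
\item Each voter of $N_k$ is represented with probability at least $1-e^{-\alpha}$. By averaging, some realization $(o_k,V_k)$ represents at least $(1-e^{-\alpha})n_k$ voters of $N_k$.
\item Set $N_{k+1}:=N_k\setminus V_k$. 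Then $n_{k+1}\le e^{-\alpha}n_k$, hence $n_k\le e^{-\alpha k}n$.
\end{itemize}
We stop once $N_k=\emptyset$, which happens after finitely many stages. If a budget $B_k$ becomes too small to afford any non-$\perp$ outcome, the constraint at that stage is trivial, since only $\perp$ is feasible and every voter weakly prefers anything to it; the argument below covers this case.

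\textbf{The outcome.} Output $o^*:=\bigoplus_k o_k$. By Assumption~\ref{assu:merging}, $c(o^*)\le\sum_k c(o_k)\le\sum_k B_k\le B$, so $o^*$ is feasible. Also $o^*\succeq_i o_k$ for every voter $i$ and every stage $k$. Every voter lies in exactly one $V_k$.

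\textbf{The core bound.} Fix a feasible alternative $o$. If $o^*\prec_i o$ and $i\in V_k$, then $o_k\prec_i o$. So by the $\gamma$-partial core guarantee of stage $k$ (with $\gamma=\alpha+1$ and total population $n_k$),
\[
|\{i\in V_k: o^*\prec_i o\}|\le (\alpha+1)\frac{c(o)}{B_k}n_k .
\]
For outcomes with $c(o)>B_k$ the stage-$k$ count is not directly covered by the partial-core definition. However, the proof of Theorem~\ref{theo:random} shows the count is at most $n_k\le(\alpha+1)\frac{c(o)}{B_k}n_k$ in that case, so the same bound holds. Summing over stages,
\[
|\{i\in N:o^*\prec_i o\}|\le \frac{(\alpha+1)}{1-r}\sum_{k\ge0}\Bigl(\frac{e^{-\alpha}}{r}\Bigr)^k\frac{c(o)}{B}n=\frac{\alpha+1}{(1-r)(1-e^{-\alpha}/r)}\cdot\frac{c(o)}{B}n .
\]
Choosing $r=e^{-\alpha/2}$ gives $\gamma=(\alpha+1)/(1-e^{-\alpha/2})^2$. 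Optimizing over $\alpha$, around $\alpha\approx3.2$, this is about $6.6$.

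\textbf{Main obstacle: reaching $6.24$.} The plan above loses a little at each stage, and closing the gap from about $6.6$ to $6.24$ is the step I expect to be hardest. My first attempt would be to replace Theorem~\ref{theo:random} by the more flexible Lemma~\ref{theo:maintech}. That lemma gives coverage $1-e^{-\alpha\Pr(\randbud{}\ge\tau)}$ and factor $\alpha\mathbb{E}[\randbud{}]/\tau$ for a general income distribution and cutoff. The per-stage tradeoff between coverage and $\gamma$ then becomes tunable through the distribution of $\randbud{}$ and the cutoff $\tau$. Two further options are to let $\alpha_k$ and $r$ vary across stages, or to account for the stage-$k$ budget overshoot $B_k+\max c(o)$ more tightly than the scaling in Theorem~\ref{theo:random} does. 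I would then numerically optimize the resulting geometric-series bound to reach the constant $6.24$.
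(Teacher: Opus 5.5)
\textbf{Gap: the argument proves about $6.59$, not $6.24$.} Your iterative construction is sound, but it only proves $\min_{\alpha}(\alpha+1)/(1-e^{-\alpha/2})^2\approx 6.59$, attained at $\alpha\approx 3.36$ where $e^{\alpha/2}=\alpha+2$. This is essentially the paper's warm-up argument, Lemma~\ref{lemma:iterativepartial} with $\omega=1/r$. The step from there to $6.24$, which is the actual content of the theorem, is left open.

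Your first two repair routes cannot close it.
\begin{itemize}
\item \emph{General income distribution.} In Lemma~\ref{theo:maintech}, Markov gives $\mathbb{E}[\mathcal{I}]\ge\tau\Pr(\mathcal{I}\ge\tau)$. So coverage $1-e^{-\beta}$, with $\beta=\alpha\Pr(\mathcal{I}\ge\tau)$, always costs a factor of at least $\beta$ before budget scaling. That is exactly the tradeoff Theorem~\ref{theo:random} already achieves with income concentrated near $1$.
\item \emph{Stage-dependent parameters.} With per-stage bounds $(\alpha_k+1)c(o)n_k/B_k$, optimizing the $B_k$ by Cauchy--Schwarz gives $\bigl(\sum_k\sqrt{\alpha_k+1}\,P_k\bigr)^2\frac{c(o)}{B}n$, where $P_k=\prod_{j<k}e^{-\alpha_j/2}$. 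Let $h^*=\min_\alpha\sqrt{\alpha+1}/(1-e^{-\alpha/2})$. Since $\sqrt{\alpha_k+1}\ge h^*(1-e^{-\alpha_k/2})$, the sum telescopes to at least $h^*$ whenever $P_k\to 0$. So varying $\alpha_k$ or $r$ cannot beat $6.59$ in your accounting either.
\end{itemize}

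\textbf{Missing idea: decouple the cost cap from $\alpha$, and charge expensive alternatives collectively rather than stage by stage.} The ``$+1$'' in $\alpha+1$ arises because Theorem~\ref{theo:random} runs the LEO on outcomes of cost at most $B/(\alpha+1)$ with budget $\alpha B/(\alpha+1)$. That choice exists only to make costlier alternatives trivially satisfied. In your sum, this $+1$ is then amplified by the factor $1/(1-e^{-\alpha}/r)$.

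The paper instead uses cap $B_k/\rho$ and LEO budget $(1-1/\rho)B_k$ (Lemma~\ref{lem:partial_oracle_improved}). This gives the smaller factor $\alpha\rho/(\rho-1)$ when $\rho>\alpha+1$, but only for alternatives with $c(o)\le B_k/\rho$. The analysis then goes as follows.
\begin{itemize}
\item Fix $o$ and let $\bar k$ be the last stage with $c(o)\le B_{\bar k}/\rho$. If no such stage exists, then $c(o)>B_0/\rho$, and the trivial bound $n\le\frac{\rho}{1-r}\frac{c(o)}{B}n$ suffices.
\item Bound stages $k\le\bar k$ exactly as you do.
\item Charge all of $N_{\bar k+1}$ at once: $|N_{\bar k+1}|\le e^{-\alpha(\bar k+1)}n$.
\item Since $c(o)>B_{\bar k+1}/\rho$, this is dominated by the tail $\sum_{k>\bar k}$ of your geometric series, provided $\alpha/(\rho-1)\ge 1-e^{-\alpha}/r$ (Lemma~\ref{lemma:correctness_improved}).
\end{itemize}
Taking equality in that condition gives the factor $\frac{1}{1-r}\bigl(1+\frac{\alpha}{1-e^{-\alpha}/r}\bigr)\approx 6.24$ at $\alpha=2.88$, $\rho\approx 4.88$, $r=1/4.6$. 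The overshoot loss is now paid once instead of per stage.

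Your third suggestion, handling the overshoot more carefully, points in this direction, but decoupling alone does not help. If expensive alternatives are still bounded per stage via $|V_k|\le n_k<\rho\,c(o)n_k/B_k$, the per-stage factor is $\max\{\alpha\rho/(\rho-1),\rho\}$. That is minimized exactly at $\rho=\alpha+1$, which puts you back at Theorem~\ref{theo:random}. The gain comes only from the collective tail charge.
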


For ease of presentation, we prove a slightly weaker bound of $6.67$ below. The proof of the improved $6.24$ bound is deferred to Appendix~\ref{sec:improved_approximation}.

We will construct the approximate core solution by incorporating the partial core solution into an iterative procedure. While the procedure is similar in its structure to \cite{ApproStable}, the use of LEO and the analysis framework based on the notion of partial core are novel. 

First, the non-emptiness of the random core implies the existence of a \textit{deterministic} outcome–voter pair constituting a partial-core solution via linearity of expectation.

\begin{lemma}\label{lemma:random-to-deterministic}
Given a $(\lambda,\gamma)$-randomized core, there exists a $\gamma$-partial core solution $(o^*, V^*)$ with $|V^*| \ge \lambda n$.
\end{lemma}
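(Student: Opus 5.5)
The argument is a direct averaging (first-moment) argument over the lottery. Fix a $(\lambda,\gamma)$-randomized core, i.e.\ a distribution $\mathcal{P}$ over pairs $(o,V)$ in which every pair in the support is a $\gamma$-partial core solution in the sense of Definition~\ref{def:approximate1}. For each voter $i$ let $\mathbf{1}[i\in V]$ be the indicator that $i$ is represented by the realized pair. We will compute the expected number of represented voters and then pick a realization that does at least as well as the mean.

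First, by linearity of expectation and the defining property in Definition~\ref{def:approximate2},
\[
\mathbb{E}_{(o,V)\sim\mathcal{P}}\bigl[|V|\bigr]
=\sum_{i\in N}\Pr_{(o,V)\sim\mathcal{P}}(i\in V)\;\ge\;\lambda n .
\]
Second, a random variable cannot be strictly below its mean with probability one. Hence some pair $(o^*,V^*)$ in the support of $\mathcal{P}$ satisfies $|V^*|\ge \lambda n$. If the support is infinite, we still get a positive-probability event with $|V|\ge \lambda n$, and any pair in it works. In our constructions the support is finite anyway, since $\mathcal{O}$ is finite and $V\subseteq N$.

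Third, $(o^*,V^*)$ lies in the support, so by the definition of a randomized core it is a $\gamma$-partial core solution. In particular $o^*$ is feasible, and the partial core inequality holds for $V^*$ against every outcome $o$ with $c(o)\le B$. This completes the proof.

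There is no real obstacle. The only point requiring care is that the partial-core property must hold for every realization, not merely in expectation. That is exactly how Definition~\ref{def:approximate2} is phrased: it is a distribution over partial $\gamma$-core solutions. So the selected pair inherits the property with no loss in $\gamma$.
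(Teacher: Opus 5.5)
Your proposal is correct and follows essentially the same argument as the paper: linearity of expectation gives $\mathbb{E}[|V|]\ge\lambda n$, so some realization $(o^*,V^*)$ has $|V^*|\ge\lambda n$, and it is a $\gamma$-partial core solution because the lottery is over such solutions. Your extra care about the support is harmless but unnecessary. The set of pairs $(o,V)$ with $o\in\mathcal{O}$ and $V\subseteq N$ is finite, so every such distribution has finite support, not just those arising from the paper's constructions.
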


\begin{proof}
Given a $(\lambda,\gamma)$-randomized core, let $\widetilde{V}$ denote the induced lottery over voter sets. Since each voter is represented with probability at least $\lambda$, by linearity of expectation,
\[
\mathbb{E}_{V \sim \widetilde{V}}[|V|]
= \sum_{i \in N} \Pr_{V \sim \widetilde{V}}(i \in V)
\ge \sum_{i \in N} \lambda
= \lambda n.
\]
Therefore, there exists a realization $V^*$ of $\widetilde{V}$ such that $|V^*| \ge \lambda n$. Together with the corresponding outcome $o^*$ from the randomized core, the pair $(o^*,V^*)$ forms a $\gamma$-partial core with desired properties.
\end{proof}

By the existence of such deterministic partial core solutions, we introduce the notion of a \emph{partial core oracle}, which will be used in the procedure for constructing approximate core solutions. Specifically, we say that an oracle is a $(\lambda,\gamma)$-partial core oracle if, for any instance of budgeted social choice, it returns a $\gamma$-partial core solution $(o^*, V^*)$ satisfying $|V^*| \ge \lambda N$.

With the partial core oracle, we obtain the following result. The details of the proof is given in Appendix~\ref{proof:deterministic}.

\begin{lemma}\label{lemma:iterativepartial}
Given an instance of budgeted social choice satisfying Assumption~\ref{assu:merging}, let $\lambda \in (0,1)$ and $\gamma > 0$ be given, and suppose there exists a $(\lambda,\gamma)$-partial core oracle. Then, there exists an algorithm that makes a polynomial number of oracle calls and outputs an $(\Omega(\lambda), \gamma)$-approximate core solution, where
\[
\Omega(\lambda,\gamma)
:= \min_{\substack{\omega>1, \\ \omega (1-\gamma) < 1}}
\max \Biggl\{
\frac{\omega}{\omega-1} \cdot \frac{\gamma}{1-\omega(1-\lambda)}, \;\;
\frac{\omega}{\omega-1}
\Biggr\}.
\]
\end{lemma}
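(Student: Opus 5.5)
The approach is an iterative peeling procedure on geometrically shrinking budgets, in the spirit of \cite{ApproStable}. I read the constraint in the definition of $\Omega$ as $\omega(1-\lambda)<1$, since that is what the geometric sums below require. Fix $\omega>1$ with $\omega(1-\lambda)<1$ and set $B_0:=\frac{\omega-1}{\omega}B$ and $B_t:=B_0\,\omega^{-t}$. Start with $N_0:=N$.

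In round $t$, call the $(\lambda,\gamma)$-partial core oracle on the instance restricted to voters $N_t$ with budget $B_t$, and write $n_t:=|N_t|$. It returns $(o_t,V_t)$ with $c(o_t)\le B_t$ and $|V_t|\ge\lambda n_t$. For every $o$ with $c(o)\le B_t$ it guarantees
\[
|\{i\in V_t: o_t\prec_i o\}|\le \gamma\frac{c(o)}{B_t}n_t .
\]
Set $N_{t+1}:=N_t\setminus V_t$, so that $n_{t+1}\le(1-\lambda)n_t\le (1-\lambda)^{t+1}n$. Stop when $N_t=\emptyset$, which takes $O(\log n/\lambda)$ rounds, i.e.\ polynomially many oracle calls. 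If $B_t<1$ the only feasible outcome is $\perp$, but the oracle guarantee is still meaningful, so this case needs no special treatment. Output $o^*:=\bigoplus_t o_t$. By Assumption~\ref{assu:merging}, $c(o^*)\le\sum_t B_t\le B_0\frac{\omega}{\omega-1}=B$, so $o^*$ is feasible. Moreover $o^*\succeq_i o_t$ for every $t$, so any voter $i\in V_t$ who prefers $o$ to $o^*$ also prefers $o$ to $o_t$. The sets $V_t$ partition $N$.

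Now fix $o$ with $c(o)\le B$, and let $t^*$ be the first round with $B_{t^*}<c(o)$. I split the blocking voters into a head and a tail.

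\emph{Head} ($t<t^*$). Here $c(o)\le B_t$, so the oracle guarantee applies. Writing $q_0:=\omega(1-\lambda)<1$, the head contributes at most
\[
\sum_{t<t^*}\gamma\frac{c(o)}{B_0}\,n\,(\omega(1-\lambda))^t\le \frac{\gamma}{1-q_0}\bigl(1-q\bigr)\frac{c(o)n}{B_0},\qquad q:=(\omega(1-\lambda))^{t^*}\in[0,1].
\]

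\emph{Tail} ($t\ge t^*$). I use the trivial bound: these voters number at most $n_{t^*}\le(1-\lambda)^{t^*}n$. Since $c(o)>B_{t^*}=B_0\omega^{-t^*}$, we have $n_{t^*}\le q\,\omega^{-t^*}n< q\,\frac{c(o)n}{B_0}$.

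Adding the two parts gives a total of at most $\bigl[A(1-q)+q\bigr]\frac{c(o)n}{B_0}$ with $A:=\frac{\gamma}{1-\omega(1-\lambda)}$. A convex combination of $A$ and $1$ is at most $\max\{A,1\}$, so the total is at most
\[
\max\{A,1\}\cdot\frac{\omega}{\omega-1}\cdot\frac{c(o)}{B}\,n .
\]
Minimizing over admissible $\omega$ yields $\Omega(\lambda,\gamma)$.

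The main obstacle is getting a max rather than a sum of the two bounds. The naive estimate adds the full geometric series to the tail bound. The fix is the coupling through the single parameter $q$ described above: the head sum is truncated exactly at $t^*$, and the tail bound is expressed using the same factor $(\omega(1-\lambda))^{t^*}$. Some care is also needed with the edge cases $t^*=0$ (only a tail, bounded by $n\le\frac{\omega}{\omega-1}\frac{c(o)}{B}n$, since $c(o)>B_0$) and with the strictness of the inequalities.
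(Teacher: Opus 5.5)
Your proof is correct. The algorithm is the paper's Algorithm~\ref{alg:iterative}: budgets $B_t=B_0\omega^{-t}$ with $B_0=\frac{\omega-1}{\omega}B$, peel off the represented voters, and merge the $o_t$. You are also right that the side condition should read $\omega(1-\lambda)<1$.

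The difference is in the rounds with $c(o)>B_t$, where the oracle guarantee says nothing. The paper handles these round by round. It bounds the round-$t$ deviators trivially by $n_t\le\gamma\frac{c(o)}{B_t}n_t$, which uses $\gamma\ge 1$ (an input requirement of its algorithm, though the lemma only assumes $\gamma>0$). It then sums the full series $\sum_{t\ge0}[\omega(1-\lambda)]^t$, and treats the case $c(o)>B_0$ separately; that case is the only source of the term $\frac{\omega}{\omega-1}$. You instead truncate the series at $t^*$ and charge all later rounds to the single residual set $N_{t^*}$, expressed through the same factor $q$. The total is then a convex combination of $A$ and $1$, and the case $t^*=0$ is absorbed rather than handled separately.

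Your route proves the lemma as stated, for every $\gamma>0$. It is also the only one of the two in which the second entry of the max can ever be active: under $\gamma\ge 1$ one has $A>\gamma\ge 1$, so that entry is redundant in the paper's argument. For the values the paper actually feeds into this lemma ($\gamma=3.88$ for the $6.67$ bound, $\gamma=7.64$ in Theorem~\ref{thm:computation}), both give the same bound, and the paper's argument is slightly shorter.

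Your residual-voter bound is the same device the paper uses in Lemma~\ref{lemma:correctness_improved} for the $6.24$ bound, where it bounds $|V_{\bar t+1}|$ by the geometric tail. There the hypothesis $\gamma/\rho\ge 1-\omega(1-\lambda)$ plays the role of your convex-combination step. Without that hypothesis, your argument would give $\max\{A,\rho\}\frac{\omega}{\omega-1}$ in that setting.
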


\noindent
\textbf{Proof of a weaker version of Theorem~\ref{thm:deterministic}.}
We obtain a $6.67$-approximate core solution as a direct consequence of Lemma~\ref{lemma:iterativepartial} and Theorem~\ref{theo:random}.  

Setting $\alpha=2.88$, Theorem~\ref{theo:random} implies that a $(\lambda,\gamma)$-randomized core always exists with
\[
\lambda = 1-e^{-2.88}, \qquad \gamma = 3.88.
\]
Applying Lemma~\ref{lemma:random-to-deterministic} yields a $(\lambda,\gamma)$-partial core oracle. With these values of $\lambda$ and $\gamma$, setting $\omega = 4.22$ satisfies $\omega(1-\lambda) < 1$. Lemma~\ref{lemma:iterativepartial} then implies the existence of a $6.67$-approximate core solution.

\qed

We next describe how the \emph{partial core oracle} can be used to construct approximate core solutions satisfying Lemma~\ref{lemma:iterativepartial}. Given any partial core solution $(o^*, V^*)$ and an alternative outcome $o$, voters can be partitioned into three categories: Category~1 consists of voters in $V^*$ who prefer $o^*$ to $o$; Category~2 consists of voters in $V^*$ who prefer $o$ to $o^*$; and Category~3 consists of voters in $N \setminus V^*$. The set of voters who prefer $o$ to $o^*$ is therefore contained in the union of Category~2 and Category~3 voters, both of which admit useful upper bounds from the $(\lambda,\gamma)$-partial core.

A key challenge is that while the number of Category~2 voters can be bounded proportionally to the cost $c(o)$, the number of Category~3 voters is only bounded by a constant fraction of the electorate. To address this, we apply an iterative procedure that treats Category~3 voters as a new instance and invokes the partial core oracle again. Repeating this process eventually eliminates Category~3 voters, so that the voters preferring $o$ to $o^*$ must be a subset of Category~2 voters across iterations.

\begin{figure}[h]
\centering 
\includegraphics[scale = 0.35]{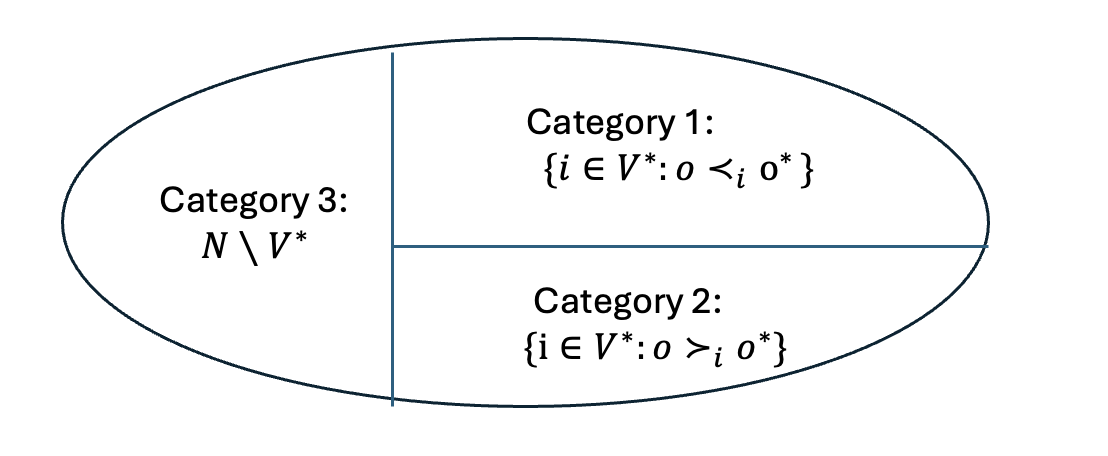}
\caption{Division of voters into three categories according to a partial core solution $(o^*,V^*)$ and alternative outcome $o \neq o^*$. }  
\end{figure}

\smallskip

The algorithm used to prove Lemma~\ref{lemma:iterativepartial} is given in Algorithm~\ref{alg:iterative}. Starting with a size $B_0$ and voters $V_0:=V$, at each round $t$, the algorithm invokes the $(\lambda,\gamma)$-oracle and obtains a $\lambda$-partial core solution $(o_t^*, V_t^*)$ with $|V_t^*|\geq \lambda\cdot |V_t|$ i.e. representing $\lambda$-proportion of the whole population. Then, we set $V_{t+1}=V_t\setminus V_t^*$ and $B_{t+1}=\frac{B_t}{\omega}$ and repeat the process on the  un-represented voters $V_{t+1}$ and outcomes with cost at most $B_{t+1}$. The procedure will terminate once every voter is represented, and the output is $\bigoplus_t o^*_t$. Then, given an alternative outcome $o$, the number of voters deviating to $o$ is upper bounded by the sum of the number of Category (2) voters accumulated across iterations, given that all Category (3) have been eliminated.

\begin{algorithm}[htb!]   
\KwData{a set $N$ of $n$ voters, an outcome space $\mathcal{O}$, a budget $B$, a $(\lambda,\gamma)$-partial core oracle with $\lambda\in (0,1)$ and $\gamma\geq 1$, parameters $\omega$ satisfying $\omega\cdot (1-\lambda)<1$} 
  \KwResult{an outcome $o^*$ with $c(o^*)\leq B$}
  $j\leftarrow 0, o^* \leftarrow \perp, V_0 \leftarrow N$, $B_0\leftarrow  \frac{\omega - 1}{\omega}\cdot B $\\
  \While{$|V_t| \geq 1$}{
        $\mathcal{O}^t\leftarrow \{o\in \mathcal{O}:c(o)\leq B_t\}$\\
        Given outcome space $\mathcal{O}^t$, voter set $V_t$ and budget $B_t$, find through the oracle a $\gamma$-partial core solution $(o^*_t, V^*_t)$ with $c(o^*_t)\leq B_t$ and $|V^*_t|\geq \lambda\cdot |V_t|$ \label{line4}
        \\
        $o^*\leftarrow o^*\oplus o^*_t$\\
        $B_{t+1}\leftarrow \frac{ B_t}{\omega} $ \label{line7}             \\
        $V_{t+1}\leftarrow V_t\setminus V^*_t$\\ 
        t++
}
  return $o^*$
  \caption{Iterated Rounding with Partial Core} \label{alg:iterative}
\end{algorithm}

To see why the algorithm outputs an approximate core solution, the key observation is that $B_t$ and $|V_t|$ both decrease \textit{geometrically}, with $B_t$ decreasing at a rate of $\frac{1}{\omega}$ and $|V_t|$ decreasing at a rate of $1-\lambda$. First, as $B_t$ and therefore $c(o^*_t)$ decreases \textit{geometrically}, the final merged outcome $o^*$ is guaranteed to be within the budget. Second, when ensuring that $|V_t|$ decreases faster than $B_t$, the number of Category (2) voters at each stage will also decrease \textit{geometrically} and therefore the total number of deviating voters can also be bounded.

%% file: newcomputation.tex
The previous sections established existence results. We now turn to computational considerations, which present two main challenges.

The first challenge concerns the size of the outcome space. When the number of outcomes is exponential, computation can become intractable.\footnote{For example, the algorithm in \cite{ApproStable}, which computes approximate stable lotteries, and thus approximate core solutions, is polynomial-time only when the number of outcomes is polynomial.} We face a similar difficulty here. To enable efficient computation under general preferences, some additional structure must be assumed.

A direct assumption that the number of outcomes is polynomial makes computing approximate deterministic core outcomes straightforward (although computing randomized core outcomes remains nontrivial). Alternatively, one can assume that, even when the outcome space is exponential, the set of relevant core constraints is polynomial. This assumption is natural in many applications: either the problem inherently limits comparisons to a polynomial number of alternative outcomes, or the underlying structure reduces the set of relevant comparisons to a polynomial size (see Section~\ref{sec:conclude} for such applications). We adopt this assumption in this section.

Even under this restriction, a second challenge remains. Our construction relies on LEO, which requires solving a fixed-point problem. We show that this difficulty can be avoided. The key observation is that the properties of LEO used to construct core solutions can be characterized by a linear program when LEO uses a particular income distribution—namely, the uniform distribution. This characterization allows the resulting algorithm to run in polynomial time.

To formalize this idea, we consider a restricted notion of the (approximate) core, in which each voter compares the selected outcome only against a subset of outcomes \(\mathcal{C} \subseteq \mathcal{O}\). We refer to \(\mathcal{C}\) as the \emph{comparison set}, and our goal is to design an algorithm whose running time is polynomial in \(|\mathcal{C}|\).

We accordingly redefine the core, approximate core, and randomized core with respect to a comparison set \(\mathcal{C} \subseteq \mathcal{O}\). For notational convenience, we assume that the null outcome \(\perp \in \mathcal{C}\).

\begin{definition}[Core notions with respect to a comparison set $\mathcal{C}$]\label{def:core_compact}
Let $(\mathcal{O}, c(\cdot), \{\succeq_i\}_{i=1}^n, B)$ be an instance of budgeted social choice and $\mathcal{C} \subseteq \mathcal{O}$ a comparison set.

\begin{itemize}
    \item 
    A feasible outcome  $o^*$ is a  \emph{$\gamma$-approximate core solution with respect to $\mathcal{C}$} if
    \[
        |\{i \in N : o^* \prec_i o \}| \le \gamma \cdot \frac{c(o)}{B} \cdot n \quad \forall o \in \mathcal{C}.
    \]

    \item A pair $(o^*, V)$ with $V \subseteq N$ is a \emph{$\gamma$-partial core solution with respect to $\mathcal{C}$} if
    \[
        |\{ i \in V : o^* \prec_i o \}| \le \gamma \cdot \frac{c(o)}{B} \cdot n \quad \forall o' \in \mathcal{C}.
    \]
    A voter $i$ is \emph{represented} by the partial-core solution if $i \in V$.

    \item A distribution over $\gamma$-partial core solutions is a \emph{$(\lambda,\gamma)$-randomized core with respect to $\mathcal{C}$} if every voter $i$ is represented with probability at least $\lambda$.
\end{itemize}
\end{definition}

For the remainder of this section, we fix the comparison set $\mathcal{C}$, and to avoid clutter, in this section, we will sometimes omit the phrase “with respect to $\mathcal{C}$.” The following results constitute the main computational contributions of this section.

\begin{theorem}\label{thm:computation0}  
Given an instance of budgeted social choice $(\mathcal{O}, c(\cdot), \succ, B)$ satisfying Assumption~\ref{assu:merging}, and a comparison set $\mathcal{C} \subseteq \mathcal{O}$, for any $\tau \in (0,1)$ there exists a polynomial-time algorithm (in $n$ and $|\mathcal{C}|$) that computes a $(\lambda,\gamma)$-randomized core solution with respect to $\mathcal{C}$, where
\[
\lambda = 1 - e^{-\alpha (1-\tau)}, \qquad 
\gamma = \frac{\alpha+1}{2\tau}.
\]
\end{theorem}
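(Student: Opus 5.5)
The plan is to instantiate the LEO machinery with the uniform income $\randbud{}=\mathbf{U}[0,1]$. Then $\mathbb{E}[\randbud{}]=1/2$ and $\Pr(\randbud{}\ge \tau)=1-\tau$. Plugging these into Lemma~\ref{theo:maintech} gives $\lambda=1-e^{-\alpha(1-\tau)}$ and $\gamma=\alpha/(2\tau)$, before the budget scaling. The point of this choice is that conditions \eqref{eq:000}, \eqref{eq:001} and \eqref{eq:003} become \emph{linear} in the unknowns $(\y,\prices)$. In particular, \eqref{eq:003} reads
\[
\sum_{o'\succeq_i o} y_{o'} \ge \alpha(1-p_{i,o}),
\]
and \eqref{eq:001} reads
\[
\sum_i p_{i,o}\le \alpha\, n\, c(o)/(2B).
\]
So instead of computing a fixed point, I will solve a linear feasibility program. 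I will then apply the rounding of Lemma~\ref{lemma:main}, which only needs these inequalities and not the full LEO structure.

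\textbf{Step 1 (restricting to $\mathcal{C}$ and scaling the budget).} Following the proof of Theorem~\ref{theo:random}, set
\[
\mathcal{C}^*:=\{o\in\mathcal{C}: c(o)\le B/(\alpha+1)\}\cup\{\perp\},\qquad B^*:=\tfrac{\alpha}{\alpha+1}B .
\]
The LP has variables $y_o\ge 0$ for $o\in\mathcal{C}^*\setminus\{\perp\}$ and $p_{i,o}\in[0,1]$ for $i\in N$, $o\in\mathcal{C}^*$, with $p_{i,\perp}=0$. Its constraints are:
\begin{itemize}
\item \eqref{eq:000} with budget $B^*$;
\item \eqref{eq:001} with $B^*$ and $\mathbb{E}[\randbud{}]=1/2$;
\item \eqref{eq:003} for each $i$ and each $o\in\mathcal{C}^*\setminus\{\perp\}$, where the preference sums range only over $\mathcal{C}^*$.
\end{itemize}
This LP has $O(n|\mathcal{C}|)$ variables and constraints, so it is solvable in polynomial time. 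Feasibility follows by viewing $(\mathcal{C}^*,c,\{\succeq_i|_{\mathcal{C}^*}\},B^*)$ as a budgeted social choice instance in its own right. Theorem~\ref{theo:alphaLEO} then gives an $\alpha$-LEO with uniform income; its existence uses only finiteness and the continuous CDF, not the merge operation. Lemma~\ref{lemma:feasible} shows that this LEO satisfies the LP.

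\textbf{Step 2 (rounding).} Take any LP solution $(\y,\prices)$ and run Algorithm~\ref{alg:dependent-rounding}, which is polynomial-time by Proposition~\ref{proposition:dep}. Each sampled outcome $o^*$ is paired with its covered set $V^\tau(o^*)$, where the boundary outcome $\overline{o}_{i,\tau}$ is taken within $\mathcal{C}^*$. I will then re-run the proof of Lemma~\ref{lemma:main}.

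\begin{itemize}
\item \emph{Coverage.} If $i$ is not covered, no outcome of $\mathcal{C}^*$ that $i$ weakly prefers to $\overline{o}_{i,\tau}$ was merged in, because the merge is weakly preferred to each of its components. Negative dependence then bounds the probability of this event by $e^{-\alpha(1-\tau)}$.
\item \emph{Partial core for $o\in\mathcal{C}^*$.} Each deviator $i$ in $V^\tau(o^*)$ has $o\succ_i \overline{o}_{i,\tau}$, hence $p_{i,o}>\tau$. Therefore
\[
\tau|S|\le \frac{\alpha n c(o)}{2B^*}=\frac{\alpha+1}{2}\cdot\frac{c(o)}{B}\,n,
\]
which gives $\gamma=(\alpha+1)/(2\tau)$.
\item \emph{Partial core for $o\in\mathcal{C}\setminus\mathcal{C}^*$.} These outcomes have $c(o)>B/(\alpha+1)$. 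Since $\tau<1$, the bound $\gamma\frac{c(o)}{B}n\ge \frac{1}{2\tau}n$ exceeds $n/2$ but need not exceed $n$, so this case is \emph{not} trivial. I would handle it by giving the LP price variables and constraint \eqref{eq:001} for all $o\in\mathcal{C}$, while letting $y$ be supported only on $\mathcal{C}^*$. In the LEO this corresponds to letting the producer ``see'' the expensive outcomes but choose only cheap ones. Feasibility then needs an existence argument in which revenue per cost on expensive outcomes is still bounded. I would try to get this from the producer's optimality applied to a relaxed producer whose choice set includes those outcomes, or else fall back to adjusting constants.
\item \emph{Feasibility of each realization.} This follows from Proposition~\ref{proposition:dep}: the realized cost is at most $B^*+B/(\alpha+1)=B$.
\end{itemize}

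\textbf{Main obstacle.} The hard step is the one just described: making the LP both exactly capture the properties that Lemma~\ref{lemma:main} uses relative to $\mathcal{C}$, including the expensive outcomes in $\mathcal{C}$, and remain provably feasible via LEO existence. The rest is routine. Sampling from the dependent-rounding distribution and computing $V^\tau(o^*)$ for the sampled $o^*$ are both polynomial, so the algorithm outputs, or samples from, the claimed $(\lambda,\gamma)$-randomized core with respect to $\mathcal{C}$.
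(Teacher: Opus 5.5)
Your Steps 1--2 are the paper's proof. You use uniform income so that \eqref{eq:000}, \eqref{eq:001}, \eqref{eq:003} become an LP, get feasibility from an $\alpha$-LEO on $(\mathcal{C}^*,B^*)$ via Lemma~\ref{lemma:feasible}, and finish with dependent rounding and Lemma~\ref{lemma:main}. Your coverage, cheap-outcome and budget arguments are correct.

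The step you flag is a genuine gap, and the paper's proof has the same gap. It says nothing about $o\in\mathcal{C}\setminus\mathcal{C}^*$. The only justification available is the ``trivially satisfied'' step from the proof of Theorem~\ref{theo:random}, which here reads $\gamma\frac{c(o)}{B}n>\frac{n}{2\tau}\ge n$ and therefore needs $\tau\le 1/2$. State this explicitly: for $\tau\in(0,1/2]$, including the value $\tau=0.495$ used for Theorem~\ref{thm:computation}, your proof is complete as it stands.

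For $\tau>1/2$ the construction itself fails, so no sharper analysis of the same lottery closes the case. Take the following instance:
\begin{itemize}
\item $\mathcal{C}=\{\perp,a,e\}$ with $c(a)=1$ and $B$ large;
\item $\frac{B}{\alpha+1}<c(e)<\min\{B,\frac{2\tau B}{\alpha+1}\}$;
\item every voter ranks $e\oplus a\succ e\succ a\succ\perp$.
\end{itemize}
Then $\mathcal{C}^*=\{\perp,a\}$, and every LP solution has $y_a=B^*\ge 1$. So $o^*=a$ in every realization, and every voter is $\tau$-covered. Yet all $n$ voters prefer $e$ to $o^*$, while $\gamma\frac{c(e)}{B}n<n$. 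This instance does admit a valid solution ($o^*=e$, $V=N$), so it refutes the method, not the statement.

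Your repair fails on the same instance. If prices on $e$ enter the boundary outcomes, a voter with $p_{i,e}\le\tau$ has $\overline{o}_{i,\tau}=e$ and is never covered by $o^*=a$. Coverage therefore forces $p_{i,e}>\tau$ for all $i$, whence $\sum_i p_{i,e}>\tau n>\frac{\alpha+1}{2}\cdot\frac{c(e)}{B}n$. This contradicts \eqref{eq:001} at $e$. The underlying tension is this: bounding the revenue on $e$ requires a producer allowed to buy $e$. But letting $\mathbf{y}$ put mass on $e$ lets the rounding overshoot $B$ by up to $c(e)>\frac{B}{\alpha+1}$, which is exactly what $\mathcal{C}^*$ was introduced to prevent.

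``Adjusting constants'' does not recover the statement either. Choosing the threshold $\theta=\frac{2\tau B}{\alpha+2\tau}$ with $B^*=B-\theta$ gives $\gamma=1+\frac{\alpha}{2\tau}$ for every $\tau$. That is strictly weaker than $\frac{\alpha+1}{2\tau}$ exactly when $\tau>1/2$. Re-tuning to $\tau'=1/2$, $\alpha'=2\alpha(1-\tau)$ keeps $\lambda$ but recovers the claimed $\gamma$ only in the sub-range $\alpha(2\tau-1)\ge 1$. So your proposal, like the paper's argument, proves the theorem for $\tau\le 1/2$; the remaining range of $\tau>1/2$ needs a new idea or a weaker $\gamma$.
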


\begin{theorem}\label{thm:computation}
Given an instance of budgeted social choice $(\mathcal{O}, c(\cdot), \succ, B)$ satisfying Assumption~\ref{assu:merging}, and a comparison set $\mathcal{C} \subseteq \mathcal{O}$, there exists a polynomial-time algorithm (in $n$ and $|\mathcal{C}|$) that computes an $11.6$-approximate core solution with respect to $\mathcal{C}$.
\end{theorem}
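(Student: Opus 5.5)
The plan is to obtain Theorem~\ref{thm:computation} by running the iterative scheme of Lemma~\ref{lemma:iterativepartial} on top of the polynomial-time randomized core of Theorem~\ref{thm:computation0}. Everything is taken with respect to the comparison set $\mathcal{C}$.

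\textbf{Step 1: a polynomial-time partial core oracle.} Fix parameters $\alpha>0$ and $\tau\in(0,1)$. Theorem~\ref{thm:computation0} computes a $(\lambda,\gamma)$-randomized core with respect to $\mathcal{C}$, where $\lambda=1-e^{-\alpha(1-\tau)}$ and $\gamma=(\alpha+1)/(2\tau)$. This lottery comes from the dependent rounding of Proposition~\ref{proposition:dep} applied to a vector $\y$ supported on at most polynomially many outcomes, namely those in $\mathcal{C}$ that the LP uses. As in Lemma~\ref{lemma:random-to-deterministic}, the expected number of represented voters is at least $\lambda n$, and we must find one realization $(o^*,V^*)$ with $|V^*|\ge\lambda n$ deterministically.
\begin{itemize}
\item First choice: derandomize the rounding by conditional expectations. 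For each voter $i$ the failure event is that no outcome of $\mathcal{O}_i^\tau$ is selected, so $\sum_i\prod_{o\in\mathcal{O}_i^\tau}(1-z_o)$ is a pessimistic estimator of the number of unrepresented voters.
\item Check that each pipage step of the rounding in \cite{dependent_general} can be chosen so this estimator does not increase. The estimator is multilinear and, along a pipage direction $(+\delta,-\delta)$, concave or linear, so one of the two endpoints does not increase it.
\item Fallback: sample the lottery $O(\log n)$ times and keep the best realization. This gives $|V^*|\ge(\lambda-\epsilon)n$ with high probability, costing only an arbitrarily small loss in the constant.
\end{itemize}
The partial core property in each realization is deterministic by \ref{cnd2} of Lemma~\ref{lemma:main}, so it needs no checking.

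\textbf{Step 2: run Algorithm~\ref{alg:iterative}.} In round $t$ the algorithm calls the oracle on voter set $V_t$, outcomes $\{o:c(o)\le B_t\}$, comparison set $\mathcal{C}\cap\{c(o)\le B_t\}$, and budget $B_t$. The proof of Lemma~\ref{lemma:iterativepartial} uses the core inequalities only for the alternative $o$ being tested. It therefore carries over verbatim when alternatives are restricted to $\mathcal{C}$, and alternatives with $c(o)>B_t$ contribute trivially. Since $|V_{t+1}|\le(1-\lambda)|V_t|$, there are $O(\log n)$ rounds. Each round solves one LP of size polynomial in $n$ and $|\mathcal{C}|$, so the total running time is polynomial. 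The merged output has cost at most $\sum_t B_t\le B$.

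\textbf{Step 3: optimize the parameters.} We choose $\alpha,\tau,\omega$, subject to $\omega(1-\lambda)<1$, to minimize
\[
\max\Bigl\{\tfrac{\omega}{\omega-1}\cdot\tfrac{\gamma}{1-\omega(1-\lambda)},\ \tfrac{\omega}{\omega-1}\Bigr\},\qquad \gamma=\tfrac{\alpha+1}{2\tau},\quad \lambda=1-e^{-\alpha(1-\tau)}.
\]
As a quick sanity check, $\alpha=4$, $\tau=1/2$, $\omega=3$ already give about $12.6$. A finer numerical search over $\alpha$ near $4$–$5$, $\tau$ near $0.5$–$0.6$ and $\omega$ near $3$ should bring this down to $11.6$; I would report the optimizing triple explicitly.

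The main obstacle is Step~1: extracting a single realization with $|V^*|\ge\lambda n$ deterministically and in polynomial time. The fallback route only gives $\lambda-\epsilon$, which is harmless only if $11.6$ is not already tight for the exact $\lambda$.
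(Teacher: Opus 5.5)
Your route is the paper's. Theorem~\ref{thm:computation0} plus the averaging of Lemma~\ref{lemma:random-to-deterministic} acts as a $(\lambda,\gamma)$-partial core oracle, which is fed into Algorithm~\ref{alg:iterative} and analyzed by Lemma~\ref{lemma:iterativepartial}. The paper's proof consists only of the parameter choice $\alpha=6.57$, $\tau=0.495$, $\omega=5.11$, and it simply asserts that the oracle runs in polynomial time. Your conditional-expectation derandomization correctly fills in that skipped step:
the estimator $\sum_i\prod_{o\in\mathcal{O}_i^\tau}(1-z_o)$ starts at most $(1-\lambda)n$;
its mixed coefficients are nonnegative, so it is concave along any weighted exchange direction $(+\delta/a_j,-\delta/a_k)$, not just $(+\delta,-\delta)$;
and the final single-variable rounding keeps the $B+\max a_k$ bound of Proposition~\ref{proposition:dep}.

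The step that does not work as written is Step~3, which carries the only quantitative content of the theorem. Optimizing $\omega$ first gives $\omega=e^{\alpha(1-\tau)/2}$ and the value $\frac{\alpha+1}{2\tau}\bigl(1-e^{-\alpha(1-\tau)/2}\bigr)^{-2}$. For $\alpha\in[4,5]$ this stays above about $11.78$ for every $\tau$, so the search you describe would not reach $11.6$.

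You should also restrict to $\tau\le 1/2$, consistent with the paper's $\tau=0.495$. In the proof of Theorem~\ref{thm:computation0}, alternatives with $c(o)>B/(\alpha+1)$ are dropped from the LP. They are covered only by the trivial bound $\frac{\alpha+1}{2\tau}\cdot\frac{c(o)}{B}\,n\ge n$, which needs $\tau\le 1/2$. For $\tau>1/2$ the construction really fails. Take one cheap outcome $a$ and one alternative $S$ of cost just above $B/(\alpha+1)$ that every voter prefers to $a$. The LP on $\{\perp,a\}$ forces $y_a=B^*\ge 1$, so $o^*=a$, which $\tau$-covers everyone. All $n$ voters then deviate to $S$, exceeding the claimed bound.

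Within $\tau\le 1/2$ the optimum is at $\tau=1/2$. The choice $\alpha=6$, $\omega=4.48$ gives $\lambda=1-e^{-3}$, $\gamma=7$ and a factor of about $11.598$. (The paper's own triple evaluates to roughly $11.64$--$11.67$.) The margin below $11.6$ is only about $0.002$, so your sampling fallback with $\lambda-\epsilon$ would need $\epsilon$ of order $10^{-5}$. That still works, but the exact derandomized oracle is the clean way to obtain the stated constant.
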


\begin{proof}[Proof of Theorem~\ref{thm:computation0}]

Our proof follows the same high-level approach as before, based on the Lindahl Equilibrium with Ordinal preferences (LEO). The main difference is that we now use a uniform income distribution, $\randbud{} \sim \mathbf{U}[0,1]$, which corresponds to the setting where the outcome set is $\mathcal{C}$ rather than $\mathcal{O}$.

The key idea is to use Lemma~\ref{lemma:main}, which shows that a randomized core can be constructed from any $(\y, \prices)$ satisfying \eqref{eq:000}, \eqref{eq:001}, and \eqref{eq:003}. Moreover, the rounding algorithm that constructs the randomized core from $(\y, \prices)$ can be implemented in polynomial time (Proposition~\ref{proposition:dep}).

An important observation that leads to computational tractability is that, when $(\y, \prices)$ are treated as unknowns, \eqref{eq:000} and \eqref{eq:001} are linear constraints. Furthermore, if $\randbud{} \sim \mathbf{U}[0,1]$, \eqref{eq:003} is also linear, since $\Pr(\randbud{} \ge p_{i,o}) = 1 - p_{i,o}$. Hence, these inequalities can be formulated as the following linear program:

\begin{align}
\text{LP}(\alpha,\mathcal{C},B): \quad & p_{i,o} \in [0,1], \ y_o \in [0,B] && \forall i \in N, o \in \mathcal{C} \setminus \{\perp\} \nonumber\\
& \sum_{o \in \mathcal{C}} c(o) y_o = B && \label{lp-eq1}\\
& \sum_{i \in N} p_{i,o} \le \frac{\alpha}{2} \cdot \frac{c(o)}{B} \cdot n && \forall o \in \mathcal{C} \setminus \{\perp\} \label{lp-eq2}\\
& \sum_{o' \in \mathcal{C} : o' \succeq_i o} y_{o'} \ge \alpha \cdot (1 - p_{i,o}) && \forall i \in N, o \in \mathcal{C}\setminus \{\perp\} \label{lp-eq3}
\end{align}

First, this LP has a feasible solution, as any $\alpha$-LEO with $\randbud{} \sim \mathbf{U}[0,1]$ satisfies it (Lemma~\ref{lemma:feasible}). Second, the LP has polynomial number of  constraints, and can therefore be solved in $\mathrm{poly}(n,|\mathcal{C}|)$ time.  
Solving this LP yields a feasible $(\y, \prices)$ satisfying the required conditions.

Next, as in the proof of Theorem~\ref{theo:random}, define 
\[
\mathcal{C}^* := \{ o \in \mathcal{C} : c(o) \le \tfrac{1}{\alpha+1} B \}, 
\qquad 
B^* := \frac{\alpha}{\alpha+1} B.
\]
We apply Lemma~\ref{lemma:main} to a solution $(\y,\prices)$ of LP$(\alpha,\mathcal{C}^*,B^*)$. The corresponding lottery $\widetilde{L}^\tau(\y,\prices)$ then yields a $(\lambda,\gamma)$-randomized core solution with respect to $\mathcal{C}$ and original budget $B$, where
\[
\lambda = 1 - e^{-\alpha\cdot\Pr(\randbud{}\ge \tau)}
= 1 - e^{-\alpha(1-\tau)},
\qquad
\gamma = \frac{(\alpha+1)\cdot \mathbb{E}[\randbud{}]}{\tau}
= \frac{\alpha+1}{2\tau}.
\]

\end{proof}

\begin{proof}[Proof of Theorem~\ref{thm:computation}]

Given Theorem~\ref{thm:computation0} with $\alpha = 6.57$ and $\tau = 0.495$,  we can obtain a poly-time $(\lambda,\gamma)$-oracle with $\lambda = 0.964,\gamma = 7.64.$ The, we plug $\lambda$, $\gamma$ and $\omega = 5.11$ into Lemma~\ref{lemma:iterativepartial} and obtain the desired approximation factor. 

\end{proof}

\Xomit{
We can now incorporate the solution to $LP(\alpha,B)$ into the iterative procedure. The algorithm is now governed by four parameters: $\alpha$, $\gamma>0$, $\gamma_0$, $\tau\in [0,1]$, where the newly introduced $\tau$ tunes the degree of covering. Starting with a size $B_0$ and voters $V_0:=V$, at each round $t$ the algorithm computes a solution to $LP(\alpha,B_0)$ for $V_0$ and obtain an outcome $o^t$ satisfying the upper bounds on the number of voters not $\tau$-covered and $\tau$-covered deviating voters as given in Proposition \ref{prop:good_outcome_uniform}. Then, setting $B_{t+1}:=\frac{B_t}{\gamma}$ and $V_{t+1}$ as the set of voters not $\tau$-covered by $o^t$, the algorithm repeats the process above. The procedure will terminate once every voter is $\tau$-covered and outputs $\bigoplus_t o^t$. 

Given the structural similarity to the iterative procedure in Section \ref{sec:deterministic}, we leave the full algorithm (Algorithm \ref{alg:iterative_lp}) to the appendix (\ref{proof:computation}). 

\begin{proposition}\label{prop:alg_coorectness_lp}
Suppose we are given a set $V$ of voters, a comparison set $C\subseteq V$, a budget $B>0$ and parameters $\gamma_0, \alpha, \gamma\geq 1$ and $\tau\in [0,1]$ with $e^{\alpha(1-\tau)}>\gamma$ and $\frac{\alpha}{2\gamma_0\tau}\geq \frac{e^{\alpha(1-\tau)}-\gamma}{e^{\alpha(1-\tau)}}$. Define $\pi(\alpha,\gamma,\gamma_0,\tau):= \frac{\alpha}{2\gamma_0\tau}\cdot \frac{\gamma(\gamma_0+1)}{\gamma -1}\cdot \frac{e^{\alpha(1-\tau)}}{e^{\alpha(1-\tau)}-\gamma}$. Then, there exists a $poly(n,|C|)$-algorithm computing an outcome with a cost of at most $B$ and lying in the $\pi(\alpha,\gamma,\gamma_0,\tau)$-approximate core.
\end{proposition}

\begin{proof}[Proof of Theorem \ref{thm:computation}]

The proof follows from Proposition \ref{prop:alg_coorectness_lp} by setting $\alpha = 6.57 , \gamma = 5.11, \gamma_0 = 5.39 , \tau = 0.495$. 

\end{proof}
}

%% file: conclusion.tex
This paper introduces a novel Lindahl equilibrium defined using only ordinal rankings. We show how this concept can be used to construct both randomized and deterministic core solutions for a broad class of monotonic preferences.

While our analysis begins from a normative perspective, treating equilibrium as a benchmark for desirable outcomes, it also provides practical tools: polynomial-time algorithms that implement these outcomes in certain settings. These tools have immediate applications, which we illustrate below, and we expect that future work will extend their relevance to participatory decision-making, resource allocation, and machine learning settings.

\paragraph{\textit{Voting with Ranking Ballots}}
An implication of Theorem~\ref{thm:computation} is that approximate core solutions can be computed efficiently for participatory budgeting problems with ranking ballots. Ranking ballots have been studied in \citep{ChengJiangMunagalaWang2019, ApproStable, proportional21} and are widely used in participatory budgeting instances.

In this setting, each voter provides a ranking over the available projects. Preferences over sets of projects are naturally extended from these rankings: a voter prefers set $S$ over $S'$ if her top-ranked project in $S$ is ranked higher than her top project in $S'$. Each project has an associated cost, and the cost of a set is simply the sum of its projects’ costs.

Theorem~\ref{thm:computation} implies that, for an instance with $m$ projects, $n$ voters, and a total budget $B$, there exists a polynomial-time algorithm that produces a $11.6$-approximate core solution. The key observation is that, because preferences depend only on a voter’s top project, it suffices to consider a comparison set consisting of individual projects. This reduces the size of the comparison set to a polynomial number, making the computation tractable.

\paragraph{\textit{Proportional Fair clustering}}
Fair clustering is a central problem in machine learning. \cite{pmlr-v97-chen19d} introduce \emph{proportional fairness} for centroid clustering: given $n$ data points and $k$ clusters, a clustering is proportionally fair if no group of at least $n/k$ points can all strictly reduce their distances by switching to another center. This setting can be captured in our framework, where data points have rankings over centers, the set of outcomes consists of $k$-element subsets of centers, and the consideration set contains singletons.

\cite{pmlr-v97-chen19d} show that exact proportionally fair clusterings may not exist and develop algorithms that compute \emph{approximate} proportional fairness. In particular, they provide an efficient algorithm guaranteeing $(1+\sqrt{2})$-approximate fairness, meaning that no sufficiently large coalition can simultaneously reduce its distances by more than a factor of $1+\sqrt{2}$.

Our results generalize this approach. We do not require a metric space or relax comparisons between points and centers; instead, we bound the number of points that can deviate to a better center. Theorem~\ref{thm:computation0} provides a randomized solution with a clear interpretation. For example, setting $\alpha = 2$ and $\tau = 1/2$ yields a polynomial-time algorithm that produces a distribution over $k$ clusters and corresponding subsets of data such that, in each realization, no subgroup of $\frac{\alpha+1}{2\tau} \, n/k = 3n/k$ points can improve by switching to a different center. Moreover, across the distribution, each data point appears in these subsets with probability at least $1 - e^{-\alpha(1-\tau)} \approx 0.63$.\footnote{In fact, for this problem, we can provide a stronger result. Since the cost of each center is 1, dependent rounding incurs no budget violation, and we can achieve a $(1 - e^{-\alpha(1-\tau)}, \frac{\alpha}{2\tau})$-randomized core in polynomial time.} Meanwhile, Theorem~\ref{thm:computation} implies that our polynomial-time algorithm achieves a $\gamma = 11.6$ approximation, meaning that at most a $11.6\frac{n}{k}$  data points can simultaneously switch to a  better center.

\paragraph{\textit{Fair Multi-Label Classification}}
Fair classification assigns data points to discrete labels under constraints that prevent systematic disadvantage across individuals or groups, and has been widely applied in lending, hiring, and medical prediction \citep{Barocas2019FairnessML}.

We consider a generalization where each of the $n$ data points can be associated with a small subset of labels rather than a single label. Given $m$ possible labels, the goal is to select $k$ representative labels and assign each data point to at most $\Delta \ll k$ labels. Allowing multiple labels provides a compact representation that captures diverse features more accurately than a single label \citep{ZhangZhou2014}. For instance, an article on “climate policy, renewable energy, and economic incentives” can be labeled with \{\emph{climate, energy, economy}\}, capturing all key aspects; similarly, a movie like \emph{The Lord of the Rings} can be labeled with \{\emph{fantasy, adventure, epic}\}, reflecting its main characteristics.

This setting fits naturally into our framework: each data point ranks all combinations of up to $\Delta$ labels, forming the comparison set, while the set of outcomes is all label subsets. The strength of this approach lies in representing preferences over meaningful label combinations. Although the outcome space is large, the comparison set has size at most $O(m^\Delta)$, allowing polynomial-time computation for any fixed $\Delta$. A multi-label classification solution consists of a selected set of $k$ labels and an assignment of each data point to its most preferred combination of at most $\Delta$ labels. The selection and assignment is \emph{proportionally $\gamma$-fair} if no combination of $\Delta$ labels—including at least one label outside the selected set—is strictly preferred by more than a $\gamma \Delta n / k$ fraction of data points.

Theorems~\ref{thm:computation0} and \ref{thm:computation} provide polynomial-time approximation algorithms for constructing randomized or deterministic solutions with proportional fairness guarantees.

%% file: newappendix1.tex
\section{Existences of LEO}\label{sec:proof-Leo}

\subsection{Proof of Lemma~\ref{lemma:conti}}\label{app:conti}
\begin{proof}
 Let $F_{\randbud{}}$ be the CDF for the budget given to an agent $i \in N$. We have that
\begin{align*}
    x'_{i,o} &= 
\Pr[\mathcal{D}_{i}(\prices_{i}, \randbud{}) =o] \\
&=  
\begin{cases}
    \underset{o' \in \mathcal{O}: o'\succ_i o}{\min}\bigg( F_{\randbud{}}(p_{i,o'}) - F_{\randbud{}}(p_{i,o})\bigg)^+ & \text{if } o \in \mathcal{O} \text{ is not the top-ranked in } \succ_i, \\
    1 - F_{\randbud{}}(p_{i,o}) & \text{if } o \in \mathcal{O} \text{ is the top-ranked in } \succ_i.
\end{cases}
\end{align*}
Intuitively, if $o$ is the top-ranked in $\succ_i$, then $i$ demands $o$ as long as $p_{i,o}$ is at most the budget given to $i$. The probability of having such a budget is $1 - F_{\randbud{}}(p_{i,o})$. If $o$ is the second-ranked and $o'$ is the top-ranked in $\succ_i$, then $i$ demands $o$ if the budget is below $p_{i,o'}$ and at least $p_{i,o}$. This probability is captured by $\left(F_{\randbud{}}(p_{i,o'}) - F_{\randbud{}}(p_{i,o})\right)^+$. Note that if $p_{i,o'} < p_{i,o}$, then $i$ never demands $o$ since it is more expensive and less preferred. Applying analogous reasoning results in the closed form above for the probability that $i$ demands $o$ under a random budget $\randbud{}$. More specifically, consider $o' \succ_i o$, the probability that $o$ is demanded by $i$ under $\prices_i$ is zero when there exists $p_{i,o'} \le p_{i,o}$. Otherwise, it is the minimum difference between the CDF at $p_{i,o'}$ and $p_{i,o}$. Since $F_{\randbud{}}$ and the closed form above are continuous, Lemma~\ref{lemma:conti} follows.
\end{proof}

\subsection{Proof of Theorem \ref{thm:existence}: Existence of LEO} \label{app:LEO}

We include the renowned Kakutani's fixed point theorem and Maximum theorem here for the sake of completeness. 
\begin{theorem}[Kakutani's fixed point theorem, \cite{Kakutani41}]\label{thm:kakutani}
Let $S$ be a non-empty, compact, and convex subset of some Euclidean space $\mathbb{R}^n.$ Let $\psi:S\rightarrow 2^S \setminus \emptyset$ be a point-to-set function on $S$ such that 1) $\psi$ is upper-hemicontinuous and 2) $\psi(s)$ is non-empty and convex for all $s \in S$. Then there exists a \emph{fixed point} $s \in S$ such that $s \in \psi(s)$. 
\end{theorem}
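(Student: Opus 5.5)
We plan to reduce Theorem~\ref{thm:kakutani} to Brouwer's fixed point theorem by approximating $\psi$ with continuous single-valued maps, and then pass to a limit using compactness of $S$ and upper hemicontinuity of $\psi$. We will also use that $\psi(s)$ is closed for every $s$. This is implicit in the standard formulation, where $\psi$ has a closed graph, and it holds in our application: the producer's demand correspondence $\mathcal{D}_0$ is the argmax of a linear function over a compact polytope, so its values are compact. We therefore treat closedness of $\psi(s)$ as part of the hypotheses.

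\textbf{Step 1: approximate continuous selections.} Fix $m \in \mathbb{N}$. By compactness, cover $S$ with finitely many open balls $B(s_k, 1/m)$, $k=1,\dots,K$, with centers $s_k \in S$. Choose a continuous partition of unity $\{\phi_k\}$ subordinate to this cover, for instance $\phi_k(s) \propto \max\{0, 1/m - \|s-s_k\|\}$. Pick any $y_k \in \psi(s_k)$, which is possible since the values are nonempty, and define
\[
f_m(s) := \sum_{k} \phi_k(s)\, y_k .
\]
Then $f_m$ is continuous. Since $S$ is convex and every $y_k \in S$, the map $f_m$ sends $S$ into $S$. Brouwer's theorem therefore gives a point $s^{(m)}$ with $f_m(s^{(m)}) = s^{(m)}$. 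By construction, $s^{(m)}$ is a convex combination of points $y_k \in \psi(s_k)$ with $\|s_k - s^{(m)}\| < 1/m$.

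\textbf{Step 2: passing to the limit.} By compactness of $S$, a subsequence $s^{(m)}$ converges to some $s^* \in S$. Fix $\eta > 0$ and let $\psi(s^*)^\eta$ denote the open $\eta$-neighborhood of $\psi(s^*)$. Upper hemicontinuity at $s^*$ gives an open neighborhood $U$ of $s^*$ with $\psi(s') \subseteq \psi(s^*)^\eta$ for all $s' \in U$. For $m$ large along the subsequence, $B(s^{(m)}, 1/m) \subseteq U$. Hence every $y_k$ entering the convex combination for $s^{(m)}$ lies in $\psi(s^*)^\eta$.

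This limit step is the main point requiring care. The key observation is that the $\eta$-neighborhood of a convex set is itself convex, because it is the Minkowski sum of a convex set and a ball. Therefore $s^{(m)} = f_m(s^{(m)}) \in \psi(s^*)^\eta$ for all large $m$, and letting $m \to \infty$ gives $s^* \in \overline{\psi(s^*)^\eta}$. Since $\eta > 0$ is arbitrary and $\psi(s^*)$ is closed, we conclude $s^* \in \psi(s^*)$, which proves the theorem.
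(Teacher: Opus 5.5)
Your proof is correct. The closedness hypothesis you add is not just a convenience: it is necessary, because the statement as printed (upper hemicontinuous, nonempty convex values, nothing about closedness) is false. On $S=[0,1]$ take $\psi(0)=(0,1]$ and $\psi(s)=(0,s/2]$ for $s>0$.
\begin{itemize}
\item All values are nonempty and convex.
\item At $0$, every relatively open $V\supseteq(0,1]$ contains every value, so $\psi$ is upper hemicontinuous there.
\item At $s>0$, any open $V\supseteq(0,s/2]$ contains some $(0,s/2+\delta)$, and hence contains $\psi(s')$ for all $0<s'<s+2\delta$. So $\psi$ is upper hemicontinuous there too.
\item There is no fixed point: $0\notin\psi(0)$, and $s\notin(0,s/2]$ for $s>0$.
\end{itemize}
So closed values (or a closed graph) must be assumed, exactly as you do. You are also right that this holds where the paper uses the theorem: in the proof of Theorem~\ref{thm:existence}, the values of $\mathcal{L}$ are a single point times the compact argmax set $Y'$.

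The paper gives no proof of this statement; it quotes \cite{Kakutani41} as a black box. Kakutani's own argument runs as follows. He first works on a simplex: he takes barycentric subdivisions, picks a point of $\psi(v)$ at each vertex $v$, extends piecewise linearly, and applies Brouwer. He then passes to the limit by extracting convergent subsequences of the $n+1$ vertices, barycentric weights, and chosen image points of a cell containing the fixed point. The closed graph places the limit image points in $\psi(s^*)$, and convexity finishes. A general compact convex $S$ is then handled by retracting a surrounding simplex onto $S$.

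Your route differs in two ways. First, your partition of unity works on $S$ directly, since convexity of $S$ alone gives $f_m(S)\subseteq S$; no triangulation or retraction is needed. Second, your limit step through the convex set $\psi(s^*)^\eta$ avoids tracking weights. This matters because the number of active balls at $s^{(m)}$ is not bounded by $n+1$, so a Kakutani-style subsequence extraction would first need Carath\'eodory. The trade-off is that you state continuity as upper hemicontinuity plus closed values rather than a closed graph. For correspondences into the compact set $S$ these are equivalent, so nothing is lost.
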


\begin{theorem}[Maximum Theorem, \cite{maximum}]\label{thm:maximum}
Given non-empty $\mathbf{X}\subset \mathbb{R}^L$ and $\mathbf{\Theta}\subset \mathbb{R}^M$, let $f:\mathbf{X}\times \mathbf{\Theta}\rightarrow \mathbb{R}$ be a continuous function on the product $\mathbf{X}\times\mathbf{\Theta}$, and $\mathbf{C}:\mathbf{\Theta}\rightrightarrows \mathbf{X}$ be a compact valued correspondence such that $\mathbf{C}(\theta)\neq \emptyset$ for all $\theta \in\mathbf{\Theta}$. Define $f^*(\theta) = \sup\{f(x,\theta):x\in \mathbf{C}(\theta)\}$ and the correspondence $\mathbf{C}^*: \mathbf{\Theta} \rightrightarrows \mathbf{X}$ by $\mathbf{C}^*(\theta) = \{x \in \mathbf{C}(\theta): f(x,\theta) = f^*(\theta)\}$. If $\mathbf{C}$ is continuous at $\theta$, then $f^*$ is continuous and $\mathbf{C}^*$ is upper hemi-continuous, non-empty, and compact valued. As a consequence, the $\sup$ can be replaced by $\max$. 
\end{theorem}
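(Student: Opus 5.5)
The plan is the classical Berge argument. I read ``$\mathbf{C}$ is continuous at $\theta$'' as ``$\mathbf{C}$ is both upper and lower hemicontinuous at $\theta$''. Throughout I will use the sequential characterizations of these notions.

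\begin{itemize}
\item For compact-valued $\mathbf{C}$, upper hemicontinuity at $\theta$ means: whenever $\theta_n\to\theta$ and $x_n\in\mathbf{C}(\theta_n)$, some subsequence of $(x_n)$ converges to a point of $\mathbf{C}(\theta)$.
\item Lower hemicontinuity at $\theta$ means: for every $x\in\mathbf{C}(\theta)$ and every $\theta_n\to\theta$, there exist $x_n\in\mathbf{C}(\theta_n)$ with $x_n\to x$.
\end{itemize}

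Step~1 is to show $\mathbf{C}^*$ is non-empty and compact valued. For each $\theta$, the map $x\mapsto f(x,\theta)$ is continuous on the non-empty compact set $\mathbf{C}(\theta)$. By Weierstrass the supremum $f^*(\theta)$ is attained, so $\mathbf{C}^*(\theta)\neq\emptyset$ and the $\sup$ may be written as $\max$. Moreover $\mathbf{C}^*(\theta)=\mathbf{C}(\theta)\cap\{x: f(x,\theta)=f^*(\theta)\}$ is a closed subset of a compact set, hence compact.

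Step~2 is continuity of $f^*$ at $\theta$, proved as two semicontinuity halves along an arbitrary sequence $\theta_n\to\theta$.

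For upper semicontinuity, I pass to a subsequence realizing $\limsup_n f^*(\theta_n)$. Along it I pick maximizers $x_n\in\mathbf{C}^*(\theta_n)$. Upper hemicontinuity gives a further subsequence $x_n\to x\in\mathbf{C}(\theta)$. Joint continuity of $f$ then yields $\limsup_n f^*(\theta_n)=f(x,\theta)\le f^*(\theta)$.

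For lower semicontinuity, I fix $x^*\in\mathbf{C}^*(\theta)$. Lower hemicontinuity gives $x_n\in\mathbf{C}(\theta_n)$ with $x_n\to x^*$. Hence
\[
\liminf_n f^*(\theta_n)\ \ge\ \lim_n f(x_n,\theta_n)\ =\ f(x^*,\theta)\ =\ f^*(\theta).
\]

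Step~3 is upper hemicontinuity of $\mathbf{C}^*$. Take $\theta_n\to\theta$ and $x_n\in\mathbf{C}^*(\theta_n)\subseteq\mathbf{C}(\theta_n)$. Upper hemicontinuity of $\mathbf{C}$ gives a subsequence $x_n\to x\in\mathbf{C}(\theta)$. Using Step~2,
\[
f(x,\theta)\ =\ \lim_n f(x_n,\theta_n)\ =\ \lim_n f^*(\theta_n)\ =\ f^*(\theta),
\]
so $x\in\mathbf{C}^*(\theta)$. This is exactly the sequential criterion for upper hemicontinuity of the compact-valued correspondence $\mathbf{C}^*$.

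The main obstacle is justifying the sequential characterization of upper hemicontinuity used in Steps~2 and~3, namely that every sequence $x_n\in\mathbf{C}(\theta_n)$ has a subsequence converging into $\mathbf{C}(\theta)$. I would prove it directly. First, for any $\varepsilon>0$ the $\varepsilon$-neighbourhood $U_\varepsilon$ of the compact set $\mathbf{C}(\theta)$ is open. By upper hemicontinuity, $\mathbf{C}(\theta_n)\subseteq U_\varepsilon$ for all large $n$. Consequently the $x_n$ eventually lie in the bounded closed set $\overline{U_1}$, so Bolzano--Weierstrass gives a convergent subsequence. Its limit lies in $\overline{U_\varepsilon}$ for every $\varepsilon$, hence in $\mathbf{C}(\theta)$ because $\mathbf{C}(\theta)$ is closed. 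All the remaining steps are routine once this lemma is in place.
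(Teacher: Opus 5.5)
The paper gives no proof of this statement. It quotes Berge's Maximum Theorem from the cited reference ``for the sake of completeness'' and applies it only in the existence proof for LEO. There the constraint correspondence is constant, $\mathbf{C}(\mathbf{p})\equiv\Gamma$ with $\Gamma$ compact.

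Your proposal supplies the standard self-contained Berge argument, and it is correct. Weierstrass gives non-emptiness and the $\max$. The upper and lower semicontinuity halves of $f^*$ use upper and lower hemicontinuity of $\mathbf{C}$ respectively. Upper hemicontinuity of $\mathbf{C}^*$ then follows from that of $\mathbf{C}$ together with continuity of $f^*$ at $\theta$. Your derivation of the sequential upper hemicontinuity criterion via the $\varepsilon$-neighbourhoods of the compact set $\mathbf{C}(\theta)$ is also right. For $\mathbf{C}^*$ only the easy converse is needed: the sequential criterion implies the open-set definition, because an open $V\supseteq\mathbf{C}^*(\theta)$ eventually contains any subsequence converging into $\mathbf{C}^*(\theta)$.

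What your route buys over the paper's citation is self-containment in full generality. In the paper's actual use, $\mathbf{C}$ is constant, so your ``main obstacle'' lemma collapses to Bolzano--Weierstrass in $\Gamma$.

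Three small points to tighten:

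(i) $\mathbf{X}$ need not be closed in $\mathbb{R}^L$, so the level set $\{x: f(x,\theta)=f^*(\theta)\}$ is only closed in $\mathbf{X}$. Phrase Step~1 as: $\mathbf{C}^*(\theta)$ is the preimage of a point under the continuous map $f(\cdot,\theta)$ restricted to $\mathbf{C}(\theta)$, hence closed in the compact set $\mathbf{C}(\theta)$ and therefore compact.

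(ii) If you start from the open-set definitions, the sequential form of lower hemicontinuity you use (with $x_n\in\mathbf{C}(\theta_n)$ for every $n$) deserves the same one-line justification you gave for upper hemicontinuity. For each $k$, choose $\delta_k$ such that $\mathbf{C}(\theta')$ meets $B(x^*,1/k)$ whenever $|\theta'-\theta|<\delta_k$. Pick $x_n$ in that ball for $n$ in successive blocks $N_k\le n<N_{k+1}$. For the finitely many $n<N_1$, use non-emptiness of $\mathbf{C}(\theta_n)$.

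(iii) What you establish is continuity of $f^*$ and upper hemicontinuity of $\mathbf{C}^*$ at the given $\theta$. That is the correct reading of the slightly loosely worded statement, and applying it at every $\theta$ yields the global version the paper needs.
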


Equipped with these theorems, we are ready to show Theorem \ref{thm:existence}.

\begin{proof}

Define $\Gamma:= \{\z \in \mathbb{R}_+^{|\mathcal{O}\setminus \{\perp\}|} \mid \underset{o\in \mathcal{O}\setminus \{\perp\}}{\sum} c_oz_o = B \} $. We construct the following correspondence:

$$\mathcal{L}: \prod_{i=1}^n[0,1]^{|\mathcal{O}|}\times \Gamma \times \prod_{i=1}^n[0,1]^{|\mathcal{O}|} \rightrightarrows \prod_{i=1}^n[0,1]^{|\mathcal{O}|} \times \Gamma  \times\prod_{i=1}^n[0,1]^{|\mathcal{O}|}$$
with
$$\mathcal{L}\left((\{\x_{i}\}_{i\in N},\y,\{\prices_{i}\}_{i\in N})\right) = (\{\x'_{i}\}_{i\in N},Y',\{\prices'_{i}\}_{i\in N})$$ where

\begin{subequations}
\begin{align}
x'_{i,o} &= \Pr[\mathcal{D}_{i}(\prices_{i}, \randbud{}) =o] & \forall i \in N, o \in \mathcal{O}, \label{eq:fp-x} \\
Y' &= \arg \max_{\z \in \Gamma} \;\; \left(\sum_{i\in N} \prices_i\right)^T \z, \label{eq:fp-y} \\ 
p'_{i,o} &= 
\begin{cases}
\max\left\{\min\left\{1,p_{i,o}+(x'_{i,o}-y_o)\right\} ,0 \right\} \text{  if } o \neq \emptyset \\
0 \text{  if } o = \perp,
\end{cases} & \forall i \in N, o \in \mathcal{O}. \label{eq:fp-p}
\end{align}
\end{subequations}
Our goal is to show that
\begin{enumerate}
    \item \label{pf-leo-1} By Theorem \ref{thm:maximum}, the correspondence $\prod_{i=1}^n[0,1]^{|\mathcal{O}|} \rightrightarrows \Gamma$ (from $\prices$ to $Y'$) is upper hemi-continuous, non-empty, and compact valued. Furthermore, $Y'$ is convex, bounded, and non-empty.
    \item \label{pf-leo-2} By Theorem \ref{thm:kakutani}, there exists a \emph{fixed point} $(\{\x_{i}\}_{i\in N},\y,\{\prices_{i}\}_{i\in N})$ such that 
    $$(\{\x_{i}\}_{i\in N},\y,\{\prices_{i}\}_{i\in N }) \in \mathcal{L}(\{\x_{i}\}_{i\in N},\y,\{\prices_{i}\}_{i\in N}).$$
    \item \label{pf-leo-3} The fixed point $(\{\x_{i}\}_{i\in N},\y,\{\prices_{i}\}_{i\in N})$ satisfies all the conditions of a LEO in Definition \ref{def:LEO}. 
\end{enumerate}

To show \ref{pf-leo-1}, we employ Theorem \ref{thm:maximum} with $\mathbf{X} = \Gamma$ (the space of the producer's decision) and $\mathbf{\Theta} = \prod_{i=1}^n[0,1]^{|\mathcal{O}|}$ (the space of prices $\prices$), both of which are non-empty. Let $f(\z,\prices) = \left(\sum_{i \in N} \prices_i\right)^T \z$ be the producer revenue given the producer decision $\z \in \Gamma$ and the prices $\prices \in \mathbf{\Theta}$. Given that for any prices $\prices \in \mathbf{\Theta}$, any producer decision in $\Gamma$ is feasible, we set $\mathbf{C}(\prices) = \Gamma$ for any $\prices \in \mathbf{\Theta}$. $\mathbf{C}$ is a compact valued correspondence since $\Gamma$ is compact. $f^*$ maps $\prices$ to the maximum producer revenue, thus, given prices $\prices$, we have $\mathbf{C}^*(\prices) = \arg \max_{\z \in \Gamma} \;\; \left(\sum_{i\in N} \prices_i\right)^T \z$. That is, $\mathbf{C}^*(\prices)$ is the set of the optimal producer decision. By Theorem \ref{thm:maximum}, because $\mathbf{C}$ is continuous at any $\prices \in \mathbf{\Theta}$, $f^*$ is continuous and $\mathbf{C}^*$ is upper hemi-continuous, non-empty, and compact valued. Furthermore, $\mathbf{C}^*(\prices)$ is convex, non-empty, and bounded for any $\prices \in \mathbf{\Theta}$ since it is the set of the optimal solutions of a linear program with a bounded and non-empty feasible region.

To show \ref{pf-leo-2}, we consider $S = \prod_{i=1}^n[0,1]^{|\mathcal{O}|}\times \Gamma \times \prod_{i=1}^n[0,1]^{|\mathcal{O}|}$ and let $\psi = \mathcal{L}$. The range of $\psi$ is restricted to having $\x'$ and $\prices'$ as a point and $Y' \subseteq \Gamma$ as a set. Clearly, $S$ is non-empty, compact, and convex. For all $s \in S$, $\psi(s)$ is non-empty and convex since $Y'$ is convex, non-empty, and bounded. 

The remaining is to show that $\mathcal{L}$ is upper hemi-continuous.  To show this, we prove three properties: (a) the mapping from $s \in S$ to $\x'$ is continuous, which follows from Lemma~\ref{lemma:conti}; (b) the mapping from $s \in S$ to $Y'$ is upper hemi-continuous, which directly follows from \ref{pf-leo-1}; and (c) the mapping from $s \in S$ to $\prices'$ is continuous, which holds once (a) is established, since $p'_{i,o}$ defined in \eqref{eq:fp-p} is a continuous function of $\x'$, $\y$, and $\prices$.

Thus, by Theorem \ref{thm:kakutani}, there exists $s \in S$ such that $s \in \mathcal{L}(s)$.

To show \ref{pf-leo-3}, we prove that any fixed point $s=(\{\x_{i}\}_{i\in N},\y,\{\prices_{i}\}_{i\in N})$ such that $s \in \mathcal{L}(s)$ satisfies conditions \ref{leo-1}, \ref{leo-2}, and \ref{leo-3} in Definition \ref{def:LEO}. Conditions \ref{leo-1} and \ref{leo-3} follow directly by \eqref{eq:fp-x} and \eqref{eq:fp-y}, respectively. For condition \ref{leo-2}, we first show that $x_{i,o} \le y_o$ for all $o \in \mathcal{O}$ and $i \in N$. Suppose for the sake of contradiction that $x_{i,o} > y_o$ for some $o \in \mathcal{O}$ and $i \in N$. Since $x_{i,o} - y_o > 0$ and $s$ is a fixed point, from \eqref{eq:fp-p}, we must have $p_{i,o} = 1$. Consequently, $i$ cannot afford $a$ with a strictly positive probability unless $\Pr[\randbud{}=1]$ is strictly positive. However, because $\randbud{}$ is a random variable with support on the unit interval $[0,1]$, a strictly positive $\Pr[\randbud{}=1]$ would imply that $F_{\randbud{}}$ is discontinuous, which is a contradiction. Therefore, $x_{i,o} \le y_o$. Now suppose $x_{i,o} < y_o$, then from \eqref{eq:fp-p}, the fixed point $s$ forces $p_{i,o} = 0$. Hence, $s=(\{\x_{i}\}_{i\in N},\y,\{\prices_{i}\}_{i\in N})$ satisfies all the conditions of Definition \ref{def:LEO}, so $s$ is a LEO. 
\end{proof}

\Xomit{

\subsection{Proof of Theorem~\ref{thm:leoi-existence}: Existence  of LEOI}\label{sec:leoi_proof}
\begin{proof}
    
The proof is analogous to that of Theorem 2. 
The difference is on the detail of the mapping.

In particular, we define $\Phi:= \{\z \in \mathbb{R}_+^{|M|} \mid \underset{k\in M}{\sum} z_k = B \} $. We construct the following correspondence:

$$\mathcal{L}: \prod_{i=1}^n[0,1]^{|M|}\times \Phi \times \prod_{i=1}^n[0,1]^{|M|} \rightrightarrows \prod_{i=1}^n[0,1]^{|M|} \times \Phi \times\prod_{i=1}^n[0,1]^{|M|}$$
with
$$\mathcal{L}\left((\{\x_{i}\}_{i\in N},\y,\{\prices_{i}\}_{i\in N})\right) = (\{\x'_{i}\}_{i\in N},Y',\{\prices'_{i}\}_{i\in N})$$ where

\begin{subequations}
\begin{align}
x'_{i,S} &= \Pr[\mathcal{X}_{i}(\prices_{i}, \randbud{}) =S] & \forall i \in N, S \in \mathcal{O}, \label{eq:fp-x1} \\
Y' &= \arg \max_{\z \in \Phi} \;\; \left(\sum_{i\in N} \prices_i\right)^T \z, \label{eq:fp-y1} \\ 
p'_{i,k} &= 
\max\left\{\min\left\{1,p_{i,k}+(\alpha\cdot\sum_{S:k\in S}x'_{i,S}-y_k)\right\} ,0 \right\}  
 & \forall i \in N, k \in M. \label{eq:fp-p1}
\end{align}
\end{subequations}

Thus, $\sum_{S: k \in S} x'_{i,S}$ represents agent $i$'s expected demand for item $k\in M$, $Y'$ denotes the producer’s response—an allocation that maximizes total revenue—and $p'_{i,k}$ captures the price adjustment.

The remainder of the proof follows the same structure as the proof of Theorem~\ref{thm:existence}. Continuity of the mapping follows directly from that argument, which shows that agents’ randomized demand is continuous in prices. Moreover, the price adjustment mechanism, together with the fact that the income distribution has support on the interval $[0,1]$, ensures that any fixed point of the mapping corresponds to an $\alpha$-LEOI.
\end{proof}

}

%% file: proofs.tex
\section{Missing Proofs}

\subsection{Bounding revenue-cost ratio}\label{app:random}
\begin{proposition} \label{prop:upper_bound_price}
Let an  $\alpha$-LEO $(\x,\y,\prices)$ with an income distribution $\mathcal{I}$ supported on $[0,1]$ be given. Then, it holds that for every $o\in\mathcal{O}$, the revenue-to-cost ratio can be upper bounded as: 
\[
\frac{\sum_{i=1}^n p_{i,o}}{c(o)}\;\le\; \alpha n\cdot \frac{\mathbb{E}(\mathcal{I})}{B}.
\]
\end{proposition}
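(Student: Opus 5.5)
The argument has two steps. First, producer optimality (condition~\ref{alphaleo-3} of Definition~\ref{def:alpha-gamma-LEO}) bounds every outcome's revenue-to-cost ratio by the total revenue divided by $B$. Second, conditions~\ref{alphaleo-1}--\ref{alphaleo-2} bound that total revenue by $\alpha n\,\mathbb{E}(\randbud{})$. Throughout write $P_o := \sum_{i} p_{i,o}$. For $o=\perp$ we have $P_\perp=0$ and $c(\perp)=0$, so we interpret the claim as $P_o \le \alpha n \frac{\mathbb{E}(\randbud{})}{B} c(o)$, which holds trivially. We therefore consider only $o\neq\perp$.

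\textbf{Step 1 (ratio bound).} Since $\y$ maximizes the linear objective $\sum_o P_o z_o$ over the simplex-like set $\Gamma=\{\z\ge 0:\sum_o c(o)z_o=B\}$, the optimal value is $R^*:=\sum_o P_o y_o = B\max_{o\neq\perp} P_o/c(o)$. This is because the extreme points of $\Gamma$ are $(B/c(o))e_o$, which is well defined since $c(o)\ge 1$. Hence for every $o\neq\perp$,
\[
\frac{P_o}{c(o)} \le \frac{1}{B}\sum_{o'} P_{o'} y_{o'} .
\]

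\textbf{Step 2 (revenue bound).} We show $\sum_{o'}P_{o'}y_{o'}\le \alpha n\,\mathbb{E}(\randbud{})$. By condition~\ref{alphaleo-2}, whenever $p_{i,o'}>0$ we have $y_{o'}=\alpha x_{i,o'}$. So for each $i$,
\[
\sum_{o'} p_{i,o'}y_{o'} = \alpha\sum_{o'} p_{i,o'}x_{i,o'} .
\]
By condition~\ref{alphaleo-1}, $\sum_{o'} p_{i,o'}x_{i,o'}=\mathbb{E}\big[p_{i,\mathcal{D}_i(\prices_i,\randbud{})}\big]$. The demanded outcome at income $b$ is affordable, so its price is at most $b$, and therefore this expectation is at most $\mathbb{E}(\randbud{})$. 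Summing over $i$ gives the bound $\alpha n\,\mathbb{E}(\randbud{})$. Combining this with Step 1 yields the proposition.

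\textbf{Main obstacle.} The delicate point is Step 2's use of complementary slackness. The equality $y_{o'}=\alpha x_{i,o'}$ is available only when $p_{i,o'}>0$, and the terms with zero price contribute nothing, so the identity holds termwise. We must also check that the expected-spending identity is valid when ties or zero-probability income events occur. This causes no trouble, since $x_{i,o'}$ is defined exactly as the demand probability, so the identity is just linearity of expectation.
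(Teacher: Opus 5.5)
Your proof is correct and follows the paper's argument. Both use the complementary-slackness identity $\sum_o p_{i,o}y_o=\alpha\sum_o p_{i,o}x_{i,o}\le\alpha\,\mathbb{E}(\mathcal{I})$ to bound total revenue, and then use producer optimality to pass that bound to each outcome's revenue-to-cost ratio. The differences are cosmetic: you compare directly against the feasible point $(B/c(o))e_o$ of $\Gamma$ where the paper argues by contradiction, and you also treat the degenerate case $o=\perp$, which the paper leaves implicit.
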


\begin{proof} 
For each voter $i$, revenue extracted from $i$ equals $\alpha$ times her total spending, because $y_o\ne \alpha x_{i,o}$ only when $p_{i,o}=0$. Hence
\[
\sum_{o\in\mathcal{O}} p_{i,o}y_o
= \sum_{o : y_o=\alpha x_{i,o}} p_{i,o}y_o
= \alpha \sum_{o\in\mathcal{O}} p_{i,o} x_{i,o}.
\]
Since $i$'s spending is at most her income,
\[
\sum_{i=1}^n \sum_{o\in\mathcal{O}} p_{i,o}y_o
\;\le\; \alpha \sum_{i=1}^n \mathbb{E}(\mathcal{I})
= \alpha\, n\, \mathbb{E}(\mathcal{I}). \tag{$\star$}
\]

Suppose, toward contradiction, that some $o'$ satisfies
$
\sum_{i=1}^n p_{i,o'} \;>\; 
\alpha\, n\, \mathbb{E}(\mathcal{I})\cdot \frac{c(o')}{B},
$
so its revenue-to-cost ratio exceeds $\alpha n\cdot \mathbb{E}(\mathcal{I})/B$.  
By the revenue-maximization property of $y$, every outcome in $\mathrm{supp}(y)$ must then have strictly larger revenue-to-cost ratio than $\frac{\alpha n\mathbb{E}(\mathcal{I})}{B}$ .  
Thus
\[
\sum_{i=1}^n\sum_{o\in\mathcal{O}} p_{i,o}y_o
=\sum_{o\in\mathcal{O}} \frac{\sum_i p_{i,o}}{c(o)}\cdot c(o)y_o
> \frac{\alpha n\mathbb{E}(\mathcal{I})}{B}\sum_{o\in\mathcal{O}} c(o)y_o
= \alpha\, n\, \mathbb{E}(\mathcal{I}),
\]
contradicting $(\star)$.  
\end{proof}

\subsection{Proof of Lemma~\ref{lemma:iterativepartial}}\label{proof:deterministic}

\begin{proof}
Suppose the input of Algorithm~\ref{alg:iterative} satisfies $\omega(1-\lambda)<1$. We show that the output $o^*$ is an $f(\lambda,\gamma,\omega)$-core solution with $c(o^*) \le B$, where
\[
f(\lambda,\gamma,\omega)
:= \max \left\{
\frac{\omega}{\omega-1} \cdot \frac{\gamma}{1-\omega(1-\lambda)},
\frac{\omega}{\omega-1}
\right\}.
\]

We first verify feasibility. The output $o^*$ satisfies
\[
c(o^*) \le \sum_t c(o^t) \le \sum_t B_t
= B_0 \sum_{t=0}^{\infty} \frac{1}{\omega^t} =\frac{\omega}{\omega - 1}\cdot B_0
= B.
\]

Fix an arbitrary outcome $o \in \mathcal{O}$. If $c(o) > B_0$, then the number of voters who prefer $o$ to $o^*$ is at most $n$, which satisfies
\[
n \le \left(\frac{B}{B_0}\right)\frac{c(o)}{B} \cdot n.
\]
Since $\frac{B}{B_0} = \frac{\omega}{\omega-1}$, the desired bound follows.

We now consider outcomes $o$ with $c(o) \le B_0$. For each round $t$ of Algorithm~\ref{alg:iterative}, let
\[
K_{o,t} := \{ i \in V_t^* : o \succ_i o_t^* \}.
\]
We claim that
\[
|K_{o,t}| \le [\omega(1-\lambda)]^t \cdot \gamma \cdot \left(\frac{c(o)}{B_0} \cdot n\right).
\]

Indeed, for any $o$, the partial core guarantee implies
\[
|K_{o,t}| \le \gamma \cdot \frac{c(o)}{B_t} \cdot |V_t|.
\]
This holds trivially when $c(o) \ge B_t$ (since $\gamma \ge 1$), and follows directly from the definition of the partial core when $c(o) \le B_t$.

From the algorithm, $|V_{t+1}| = (1-\lambda)|V_t|$, implying
\[
|V_t| = (1-\lambda)^t |V_0| = (1-\lambda)^t n,
\]
and $B_{t+1} = \frac{B_t}{\omega}$, implying
\[
B_t = \frac{B_0}{\omega^t}.
\]
Substituting these expressions yields the claimed bound on $|K_{o,t}|$.

Finally, summing across rounds,
\[
\big| \{ i \in N : o \succ_i o^* \} \big|
\le \sum_{t=0}^{\infty} |K_{o,t}|
\le \sum_{t=0}^{\infty} [\omega(1-\lambda)]^t \cdot \gamma \cdot \left(\frac{c(o)}{B_0} \cdot n \right),
\]
which evaluates to
\[
\gamma \cdot \frac{\omega}{\omega-1} \cdot \frac{1}{1-\omega(1-\lambda)}
\cdot \left(\frac{c(o)}{B} \cdot n \right).
\]

This establishes the desired core bound and completes the proof.
\end{proof}

\Xomit{

\begin{proof}[Proof of Lemma \ref{lemma:iterativepartial}:]
Suppose the input of Algorithm \ref{alg:iterative} satisfies $\omega\cdot (1-\lambda)<1$. Then, we aim to show that the output $o^*$ is a $f(\lambda,\gamma,\omega)$-core solution with $c(o^*)\leq B$, where 
 $$f(\lambda,\gamma,\omega):=\max\bigg \{\frac{\omega}{\omega-1}\cdot \frac{\gamma}{1-\omega\cdot (1-\lambda)}, \frac{\omega}{\omega - 1}\bigg \}.$$

We first show that the output $o^*$ satisfies the budget constraint:
\[
c(o^*) \le \sum_t c(o^t) \le \sum_t B_t 
= B_0 \cdot \sum_{t} \frac{1}{\omega^t} = B.
\]

Now fix an arbitrary outcome $o \in \mathcal{O}$. If $c(o) > B_0$, then the number of voters who prefer $o$ to $o^*$ is at most $n$, which is bounded by
\[
n \le \left(\frac{B}{B_0}\right) \cdot \frac{c(o)}{B} \cdot n.
\]
Moreover, $\frac{B}{B_0} = \frac{\omega}{\omega-1}$. Thus, the number of voters who prefer $o$ to $o^*$ is at most $f(\lambda,\gamma,\omega)\cdot \frac{c(o)}{B} \cdot n$.

Next we prove this bound any $o$ with $c(o) \le B_0$.

Observe that for each outcome $o$, we obtain the following  bound for the number Category 2 voters at each stage $t$. 

In round $t$ of Algorithm~\ref{alg:iterative}, let  $K_{o, t}$  denote the number of voters in $V_t^*$ who prefer an alternative $o$ to $o^*_t$ i.e. $K_{o,t}: = \big\{i\in V^*_t: o\succ_i o_t^*\}$. It then holds:
$$|K_{o,t}| \leq [\omega\cdot (1-\lambda)]^t \cdot \gamma \cdot (\frac{c(o)}{B_0}\cdot n).$$

This holds because  for any $o$, it must hold $|K_{o,t}|\leq \gamma\cdot \frac{c(o)}{B_t}\cdot V_t$. With $\gamma\geq 1$, this clearly holds true for any $o$ with $c(o)\geq B_t$. It follows from the definition of partial core for outcome $o$ with $c(o)\leq B_t$. 

Next, applying $|V_{i+1}| = (1-\lambda)\cdot |V_i|$ inductively yields $|V_t|=(1-\lambda)^t\cdot |V_0|.$ Applying $B_{i+1}=\frac{B_i}{\omega}$ inductively yields $B_{t} = \frac{B_0}{\omega^t}$. Then the Lemma follows from plugging those values into $|K_{o,t}|\leq \gamma\cdot \frac{c(o)}{B_t}\cdot V_t.$

Returning to bound number of voters who prefer $o$ to $o^*$, we have
\[
\big| \{ i \in N : o \succ_i o^* \} \big|
\le \sum_{t=0}^{\infty} |K_{o,t}|
\le \sum_{t=0}^{\infty} [\omega(1-\lambda)]^t \cdot \gamma \cdot \left(\frac{c(o)}{B_0} \cdot n \right)
= \gamma \cdot \frac{\omega}{\omega-1} \cdot \frac{1}{1-\omega(1-\lambda)} \cdot \left(\frac{c(o)}{B} \cdot n \right).
\]

This completes the proof.

\end{proof}

}

\section{6.24-Approximation for General Preferences and Costs}\label{sec:improved_approximation}

To obtain the improved approximation factor, we introduce a slightly modified version of partial core, which also takes into account the cost of the outcome to compare with. This will offer us enhanced flexibility in controlling the convergence rate of the iterative algorithm. 
\begin{definition}\label{def:approximate3}
Given an instance of budgeted social choice \((\mathcal{O}, c(\cdot), \{\succeq_i\}_{i=1}^n, B)\), a pair \((o^*, V)\) consisting of an feasible outcome $o^*$ and a subset of voters $V \subseteq N$ is in the \underline{$(\gamma,\rho)$-partial core} if, for every outcome $o \in \mathcal{O}$ with $c(o)\le \frac{B}{\rho}$,
\[
|\{i \in V : o^* \prec_i o\}| \leq \gamma \cdot \frac{c(o)}{B} \cdot n.
\]
Given a voter $i$, we say that $i$ \underline{is represented} by the partial $(\gamma,\rho)$-core solution $(o^*, V)$ if $i \in V$.
\end{definition}

The notions of partial core oracle and random core are also adjusted accordingly. 

\begin{definition}
We say that oracle is a $(\lambda,\gamma,\rho)$-partial core oracle if, for any instance of budgeted social choice, it returns a $(\gamma,\rho)$-partial core solution $(o^*, V^*)$ satisfying $|V^*| \ge \lambda N.$
\end{definition}

We then prove the existence of a partial core oracle with desirable parameters.

\begin{lemma}\label{lem:partial_oracle_improved}
For any $\rho>1$ and $\alpha>0$, there exists a $(1-e^{-\alpha},\frac{\alpha\cdot \rho}{\rho-1},\rho)$-partial core oracle. 
\end{lemma}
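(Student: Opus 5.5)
The plan is to repeat the proof of Theorem~\ref{theo:random}, changing only how the budget is scaled and which outcomes are kept. This gives a randomized version of the $(\gamma,\rho)$-partial core, and Lemma~\ref{lemma:random-to-deterministic} then turns it into a deterministic oracle. Fix an arbitrary instance $(\mathcal{O}, c(\cdot), \{\succeq_i\}_{i\in N}, B)$ with $n=|N|$. Here $N$ plays the role of the current voter set $V_t$ when the oracle is called inside the iterative algorithm. Define
\[
\mathcal{O}^\rho := \{ o\in\mathcal{O} : c(o)\le B/\rho\},\qquad B^\rho := \frac{\rho-1}{\rho}\,B .
\]
Choose $\varepsilon\in(0,1)$ small enough that, for every $o$,
\[
\left\lfloor \tfrac{\alpha\rho}{\rho-1}\cdot\tfrac{c(o)}{B}\cdot n\right\rfloor=\left\lfloor \tfrac{\alpha\rho}{(\rho-1)(1-\varepsilon)}\cdot\tfrac{c(o)}{B}\cdot n\right\rfloor .
\]
This is possible because there are finitely many outcomes. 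Set $\tau:=1-\varepsilon$ and take the income distribution $\randbud{}=\mathbf{U}[1-\varepsilon,1]$, whose CDF is continuous. By Theorem~\ref{theo:alphaLEO}, an $\alpha$-LEO $(\x,\y,\prices)$ exists on $\mathcal{O}^\rho$ with budget $B^\rho$. We then apply Lemma~\ref{theo:maintech} to the lottery $\widetilde{L}^{\tau}(\y,\prices)$.

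\textbf{Step 1: feasibility.} By condition~\ref{cnd3}, every realized outcome satisfies $c(o^*)\le B^\rho+\max_{o\in\mathcal{O}^\rho}c(o)\le \frac{\rho-1}{\rho}B+\frac{B}{\rho}=B$.

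\textbf{Step 2: coverage.} Since $\Pr(\randbud{}\ge\tau)=1$, condition~\ref{cnd1} says each voter is represented with probability at least $1-e^{-\alpha}$.

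\textbf{Step 3: the core bound.} Only outcomes with $c(o)\le B/\rho$ need to be checked, and these are exactly the outcomes in $\mathcal{O}^\rho$. For them, condition~\ref{cnd2} with $\mathbb{E}[\randbud{}]\le 1$ gives
\[
|\{i\in V: o^*\prec_i o\}|\le \frac{\alpha}{1-\varepsilon}\cdot\frac{c(o)}{B^\rho}\, n=\frac{\alpha\rho}{(\rho-1)(1-\varepsilon)}\cdot \frac{c(o)}{B}\, n .
\]
The left-hand side is an integer, so it is at most the floor of the right-hand side. By the choice of $\varepsilon$, that floor equals $\left\lfloor \frac{\alpha\rho}{\rho-1}\frac{c(o)}{B}n\right\rfloor$, which removes the $1/(1-\varepsilon)$ factor. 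Every realization is therefore a $(\frac{\alpha\rho}{\rho-1},\rho)$-partial core solution in the sense of Definition~\ref{def:approximate3}.

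\textbf{Step 4: derandomize.} Linearity of expectation, exactly as in Lemma~\ref{lemma:random-to-deterministic}, yields a realization $(o^*,V^*)$ with $|V^*|\ge(1-e^{-\alpha})n$. Returning this pair on every instance defines the required $(1-e^{-\alpha},\frac{\alpha\rho}{\rho-1},\rho)$-partial core oracle.

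The main point to verify is Step 3. We must check that Lemma~\ref{theo:maintech}, applied to the restricted outcome space $\mathcal{O}^\rho$, still controls every outcome required by Definition~\ref{def:approximate3}. The restriction $c(o)\le B/\rho$ in that definition is what makes this work, and it is also why no trivial-case argument is needed for expensive outcomes. The resulting factor is $\frac{\alpha\rho}{\rho-1}$ rather than $\alpha+1$.

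A secondary subtlety is that merging in Algorithm~\ref{alg:dependent-rounding} may produce an outcome outside $\mathcal{O}^\rho$. This is harmless: preferences are still compared over all of $\mathcal{O}$, and Step 1 already bounds its cost by $B$.
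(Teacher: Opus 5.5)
Your proposal is correct and follows the paper's proof essentially step for step: restrict to $\{o : c(o)\le B/\rho\}$, take an $\alpha$-LEO with budget $\frac{\rho-1}{\rho}B$, apply Lemma~\ref{theo:maintech}, and derandomize by linearity of expectation as in Lemma~\ref{lemma:random-to-deterministic}. Your explicit choice $\mathcal{I}=\mathbf{U}[1-\varepsilon,1]$, $\tau=1-\varepsilon$, together with the floor argument, spells out the detail the paper only covers by saying ``similar to the proof of Theorem~\ref{theo:random}''; you also get the correct factor $\frac{\alpha\rho}{\rho-1}$, whereas the paper's displayed denominator $\frac{\rho}{\rho-1}B$ is a typo for $\frac{\rho-1}{\rho}B$.
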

\begin{proof}
Let $B^* = B\cdot \frac{\rho-1}{\rho}$ and $O^* = \{o\in \mathcal{O}: c(o)\leq \frac{B}{\rho}\}$. Similar to the proof of Theorem \ref{theo:random}, we can apply Lemma \ref{theo:maintech} to a $\alpha$-LEO with respect to $O^*$ and $B^*$ to obtain a lottery s.t. (1) each voter is represented with a probability of $1-e^{-\alpha}$; (2) the number of people in $V^*$ preferring an alternative outcome $o$ to $o^*$ is at most $\alpha\cdot \frac{c(o)}{\frac{\rho}{\rho-1}\cdot B}\cdot n$; (3) each realized outcome is of cost at most $\frac{(\rho-1)\cdot B}{\rho}+\frac{B}{\rho} = B$. Condition (2) and (3) together mean that every realized outcome-voters pair correspond to a $(\frac{\alpha\cdot\rho}{\rho-1},\rho)$-partial core for outcome space $\mathcal{O}$ with a budget of $B$. Then, the proof follows in the same way from linearity of expectation as in the proof of Lemma \ref{lemma:random-to-deterministic}. 
\end{proof}

Next, we slightly modify the iterative rounding algorithm based on the new partial core oracle. All we need is to replace $\gamma$-partial core (oracle) with $(\gamma,\rho)$-partial core (oracle).

\begin{algorithm}[htb!]   
\KwData{a set $N$ of $n$ voters, an outcome space $\mathcal{O}$, a budget $B$, a $(\lambda,\gamma,\rho)$-partial core oracle with $\lambda\in (0,1)$ and $\gamma,\rho\geq 1$, parameters $\omega$ satisfying $\omega\cdot (1-\lambda)<1$} 
  \KwResult{an outcome $o^*$ with $c(o^*)\leq B$}
  $j\leftarrow 0, o^* \leftarrow \perp, V_0 \leftarrow N$, $B_0\leftarrow  \frac{\omega - 1}{\omega}\cdot B $\\
  \While{$|V_t| \geq 1$}{
        $\mathcal{O}^t\leftarrow \{o\in \mathcal{O}:c(o)\leq B_t\}$\\
        Given outcome space $\mathcal{O}^t$, voter set $V_t$ and budget $B_t$, find through the oracle a $(\gamma,\rho)$-partial core solution $(o^*_t, V^*_t)$ with $c(o^*_t)\leq B_t$ and $|V^*_t|\geq \lambda\cdot |V_t|$  
        \\
        $o^*\leftarrow o^*\oplus o^*_t$\\
        $B_{t+1}\leftarrow \frac{ B_t}{\omega} $ 
        $V_{t+1}\leftarrow V_t\setminus V^*_t$\\ 
        t++
}

  return $o^*$
  \caption{Iterated Rounding with Partial Core: Version 2} \label{alg:iterative_improved}
\end{algorithm}

While the iterative algorithm is mostly the same, the analysis is more involved because, at each stage, we no longer have a single formula bounding the number of deviating voters for \textit{all} outcomes. 

\begin{lemma}\label{lemma:bound_each_step_1}
In round $t$ of Algorithm~\ref{alg:iterative_improved}, for any outcome $o$ with $c(o)\leq \frac{B_t}{\rho}$, it holds:
$$|K_{o,t}| \leq [\omega\cdot (1-\lambda)]^t \cdot \gamma \cdot (\frac{c(o)}{B_0}\cdot n).$$
\end{lemma}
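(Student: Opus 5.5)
The plan is to follow the argument used for Lemma~\ref{lemma:iterativepartial}, restricting attention to the outcomes to which the $(\gamma,\rho)$-partial core guarantee actually applies. Recall that $K_{o,t} := \{ i \in V_t^* : o \succ_i o_t^* \}$. In round $t$ the oracle is run on voter set $V_t$, outcome space $\mathcal{O}^t$ and budget $B_t$. It returns a $(\gamma,\rho)$-partial core solution $(o_t^*,V_t^*)$ for this subinstance. The electorate of the subinstance has size $|V_t|$, so the partial core inequality is normalized by $|V_t|$ rather than by $n$.

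Fix $o$ with $c(o)\le B_t/\rho$. Then $c(o)\le B_t$, so $o\in\mathcal{O}^t$, and it meets the cost threshold in Definition~\ref{def:approximate3} for budget $B_t$. Applying that definition directly gives
\[
|K_{o,t}| \le \gamma\cdot \frac{c(o)}{B_t}\cdot |V_t|.
\]
Unlike the earlier proof, we do not need the trivial case $c(o)\ge B_t$ here, because the hypothesis already places $o$ in the range where the guarantee holds.

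Next I would unroll the recursions. From line~\ref{line7} of the algorithm, $B_t = B_0/\omega^t$. For the voters, $V_{t+1}=V_t\setminus V_t^*$ and $|V_t^*|\ge\lambda|V_t|$, which give $|V_{t+1}|\le(1-\lambda)|V_t|$. By induction, $|V_t|\le(1-\lambda)^t n$. Only this inequality is needed, not equality, since it is an upper bound. Substituting both expressions,
\[
|K_{o,t}| \le \gamma\cdot\frac{c(o)\,\omega^t}{B_0}\cdot(1-\lambda)^t n = [\omega(1-\lambda)]^t\cdot\gamma\cdot\frac{c(o)}{B_0}\cdot n.
\]

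The only delicate point is confirming that the oracle's guarantee is stated relative to the subinstance. That is, $n$ in Definition~\ref{def:approximate3} is read as $|V_t|$ and $B$ as $B_t$, with outcome space $\mathcal{O}^t$. I also need $o$ to be admissible there, which follows from $c(o)\le B_t/\rho\le B_t$ since $\rho\ge 1$. Everything else is routine substitution.
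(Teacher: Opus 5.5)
Your proof is correct and follows the paper's argument. You apply the $(\gamma,\rho)$-partial core guarantee on the subinstance $(V_t,\mathcal{O}^t,B_t)$ to get $|K_{o,t}|\le\gamma\frac{c(o)}{B_t}|V_t|$, then substitute $B_t=B_0/\omega^t$ and $|V_t|\le(1-\lambda)^t n$. Your bookkeeping is in fact slightly tighter than the paper's: you use the hypothesis $c(o)\le B_t/\rho$ directly, whereas the paper's remark that the bound follows from the definition whenever $c(o)\le B_t$ glosses over the $\rho$ threshold, and you use $|V_{t+1}|\le(1-\lambda)|V_t|$ rather than equality.
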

\begin{proof}
The argument follows the proof of Lemma~\ref{lemma:iterativepartial}. For any $o$, the partial core guarantee gives
\[
|K_{o,t}| \le \gamma \cdot \frac{c(o)}{B_t} \cdot |V_t|.
\]
This holds trivially when $c(o) \ge B_t$ (since $\gamma \ge 1$), and follows from the partial core definition when $c(o) \le B_t$.

From the algorithm, $|V_t|=(1-\lambda)^t n$ and $B_t=B_0/\omega^t$. Substituting these expressions yields the desired bound on $|K_{o,t}|$.
\end{proof}

We then establish the correctness of the algorithm. A key tweak of the proof is to show that while we cannot guarantee $|K_{o,t}|$ through Lemma \ref{lemma:bound_each_step_1} if $c(o)>\frac{B_t}{\rho}$, we can show that the number of remaining voters have become very small, to the point that it is no greater than the sum of the remaining terms in the geometric series, so that we can still obtain a desirable bound.

\begin{lemma}\label{lemma:correctness_improved}
If the input of Algorithm \ref{alg:iterative_improved} satisfy $\omega\cdot(1-\lambda)<1$ and $\frac{\gamma} {\rho}\geq  1-\omega\cdot (1-\lambda)$, then the output of the algorithm is a $f(\lambda,\gamma,\omega)$-core solution with 
$$f(\lambda,\gamma,\omega):= \gamma\cdot  \frac{\omega}{\omega -1}\cdot \frac{1}{1-\omega\cdot (1-\lambda)}.$$
\end{lemma}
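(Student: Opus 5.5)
Feasibility goes exactly as in Lemma~\ref{lemma:iterativepartial}: each $c(o^*_t)\le B_t=B_0/\omega^t$, so subadditivity of $\oplus$ gives $c(o^*)\le \sum_t B_t = \frac{\omega}{\omega-1}B_0=B$. For the core bound, fix an outcome $o$ and write $A:=\frac{\omega}{\omega-1}\cdot\frac{c(o)}{B}\cdot n=\frac{c(o)}{B_0}n$ and $q:=\omega(1-\lambda)<1$. The target is $|\{i: o\succ_i o^*\}|\le \frac{\gamma}{1-q}A$. Since $o^*\succeq_i o^*_t$ for all $t$, every deviator lies in some $K_{o,t}$ or in some $V_t$ that we stop tracking. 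So we bound $\sum_t |K_{o,t}|$ over the early rounds and handle the tail by counting the remaining voters.

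If $c(o)>B_0$, the bound is trivial: the number of deviators is at most $n<A\le \frac{\gamma}{1-q}A$, using $\gamma\ge1$ and $1-q\le1$. Otherwise, let $T$ be the first round with $c(o)>B_T/\rho$; such a $T$ exists since $B_t\to0$. This choice of $T$ means $B_T<\rho\, c(o)$. For $t<T$ we have $c(o)\le B_t/\rho$, so Lemma~\ref{lemma:bound_each_step_1} gives $|K_{o,t}|\le q^t\gamma A$. All deviators not in $\bigcup_{t<T}K_{o,t}$ lie in $V_T$, because voters of $V_t^*\setminus K_{o,t}$ weakly prefer $o^*_t$ and hence $o^*$. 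Therefore
\[
|\{i: o\succ_i o^*\}|\le \gamma A\sum_{t<T}q^t + |V_T|.
\]

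The key step is to show $|V_T|\le \gamma A\sum_{t\ge T}q^t=\frac{\gamma A q^T}{1-q}$. We have $|V_T|\le(1-\lambda)^T n = q^T n/\omega^T = q^T n\,B_T/B_0$. Substituting $B_T<\rho c(o)$ gives $|V_T|< q^T\rho\,\frac{c(o)}{B_0}n=\rho q^T A$. The hypothesis $\gamma/\rho\ge 1-q$ is exactly $\rho\le \gamma/(1-q)$, so $|V_T|\le \frac{\gamma}{1-q}q^T A$. Adding this to the head sum yields the full geometric series, $\frac{\gamma}{1-q}A=f(\lambda,\gamma,\omega)\frac{c(o)}{B}n$.

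The main obstacle is the tail comparison $|V_T|$ versus the geometric remainder. The subtle points there are that $|V_t|$ shrinks at least as fast as $(1-\lambda)^t$ (the oracle guarantees $|V_t^*|\ge\lambda|V_t|$, so the inequality goes the right way), and that the first-failure index $T$ yields $B_T<\rho c(o)$. Both are checked above, so the argument closes.
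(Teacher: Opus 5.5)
Your proof is correct and follows essentially the same route as the paper: your first-failure index $T$ is the paper's $\bar t+1$, the early rounds are handled by Lemma~\ref{lemma:bound_each_step_1}, and the leftover voters are controlled by $|V_T|\le(1-\lambda)^T n<\rho\, q^T\frac{c(o)}{B_0}n\le\frac{\gamma}{1-q}\,q^T\frac{c(o)}{B_0}n$ with $q=\omega(1-\lambda)$, which is exactly the paper's tail bound on $|V_{\bar t+1}|$ that completes the geometric series. The differences are cosmetic: you split off $c(o)>B_0$ instead of $c(o)>B_0/\rho$ and absorb the intermediate range into the case $T=0$, and you state $|V_t|\le(1-\lambda)^t n$ and the normalization by $B_0$ more cleanly than the paper's displayed chain does.
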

\begin{proof}
We first see that the output $o^*$ is within the budget: 
$$c(o^*)\leq \sum_t c(o^t)\leq \sum_t B_t = B_0\cdot \sum_{t}\frac{1}{\omega^t} = B.$$

Now, pick an arbitrary outcome $o\in \mathcal{O}$. If $c(o)> \frac{B_0}{\rho}$, then it necessarily follows that the number of voters deviating from $o^*$ to $o$ is at most $(\frac{B}{B_0/\rho})\cdot \frac{c(o)}{B}\cdot n $, where $\frac{B}{B_0/\rho} = \frac{\omega\cdot \rho}{\omega-1}\leq f(\lambda,\gamma,\omega)$ under the assumption. 

Suppose otherwise, we let $\bar{t}$ be the biggest $t$ for which $c(o)\leq \frac{B_{\bar{t}}}{\rho}$. Then, we can bound
\setcounter{equation}{0}
\begin{equation}\label{eq2}
\big|i\in V:  o\succ_i o^*\big|\leq \big(\sum_{t=0}^{\bar{t}}|K_{o,t}|\big)+ |V_{\bar{t}+1}| .
\end{equation}

To prove (\ref{eq2}), we claim that 
\begin{equation}\label{eq3}
|V_{\bar{t}+1}|\leq \gamma \cdot \frac{c(o)}{B}\cdot n\cdot \sum_{t=\bar{t}+1}^\infty \bigg(\omega\cdot (1-\lambda)\bigg)^t.
\end{equation}

Plugging (\ref{eq3}) and Lemma \ref{lemma:bound_each_step_1} into (\ref{eq2}) yields 
\begin{align*}
|i\in V: o\succ_i o^* | & \leq \big(\sum_{t=0}^{\bar{t}}|K_{o,t}|\big)+ |V_{\bar{t}+1}| \\
&\leq \gamma \cdot (\frac{c(o)}{B_0}\cdot n)\cdot \sum_{t=0}^\infty\bigg(\omega\cdot (1-\lambda)\bigg)^t\\
& \leq \gamma \cdot (\frac{c(o)}{B_0}\cdot n)\cdot \frac{1}{1-\omega\cdot (1-\lambda)}\\
& = \gamma\cdot \frac{\omega}{\omega -1}\cdot \frac{1}{1-\omega\cdot (1-\lambda)}\cdot (\frac{c(o)}{B_0}\cdot n).
\end{align*}

Then, it is left for us to verify (\ref{eq3}) using the assumption $\omega\cdot(1-\lambda)<1$ and $\frac{\gamma}{\rho}\leq  1-\omega\cdot (1-\lambda)$:
\begin{align*}
|V_{\bar{t}+1}| 
&\leq \frac{1}{(1-\lambda)^{\bar{t}+1}} \cdot n 
= \frac{B_0}{\omega^{\bar{t}+1}} \cdot \frac{n}{B_0} \cdot [\omega\cdot (1-\lambda)]^{\bar{t}+1} \\
& =B_{\bar{t}+1} \cdot \frac{n}{B_0} \cdot [\omega\cdot (1-\lambda)]^{\bar{t}+1}\\
&\leq  \frac{\gamma/\rho}{1-\omega\cdot (1-\lambda)}\cdot B_{\bar{t}+1}\cdot \frac{n}{B_0} \cdot [\omega\cdot (1-\lambda)]^{\bar{t}+1}\\ 
&= \gamma \cdot \frac{B_{\bar{t}+1}/\rho}{B_0}\cdot n\cdot \frac{[\omega\cdot (1-\lambda)]^{\bar{t}+1}}{1-\omega\cdot (1-\lambda)}\\
&\leq \gamma \cdot \frac{c(o)}{B}\cdot n\cdot \sum_{t=\bar{t}+1}^\infty \bigg(\omega\cdot (1-\lambda)\bigg)^t,
\end{align*}
and this completes the proof.
\end{proof}

\begin{proof}[Proof of Theorem \ref{thm:deterministic}:] In this case, we set $\alpha = 2.88$ and $\rho = 4.88$, from which we obtain a $(1-e^{-2.88}, \frac{2,88*4.88}{3.88},4.88)$-partial core by \ref{lem:partial_oracle_improved}. We input the oracle and $\omega = 4.6$ into the algorithm and obtain a 6.24 approximate core solution by Lemma \ref{lemma:correctness_improved}.
\end{proof}